\documentclass[aps,prx,twocolumn,superscriptaddress,groupedaddress] {revtex4-2}

\usepackage{amsfonts} 
\usepackage{dsfont}
\usepackage[dvipsnames]{xcolor}
\usepackage[colorlinks]{hyperref}
\usepackage{amsmath}
\usepackage{amsthm}
\usepackage{xfrac}
\usepackage{physics}
\usepackage[capitalise]{cleveref}
\usepackage{graphicx}
\usepackage{mathtools}
\usepackage[caption=false]{subfig}
\usepackage{tikz}
\usepackage{tikz-3dplot}

\usepackage{comment}

\usepackage{orcidlink}

\definecolor{UniBoRed}{HTML}{CE4220}
\definecolor{EthBlue}{HTML}{215caf}
\hypersetup{colorlinks,
    linkcolor=UniBoRed,
    filecolor=UniBoRed,
    urlcolor=UniBoRed,
    citecolor=UniBoRed
}
\newtheorem{theorem}{Theorem}
\newtheorem{lemma}{Lemma}
\newtheorem{proposition}{Proposition}

\newcommand{\mps}{\mathrm{MPS}}
\newcommand{\camps}{\mathrm{CAMPS}}
\newcommand{\sx}{X}
\newcommand{\sy}{Y}
\newcommand{\sz}{Z}
\newcommand{\vn}{\hat{\boldsymbol{n}}}
\newcommand{\vsigma}{\hat{\boldsymbol{\sigma}}}
\newcommand{\so}[1]{\text{SO}(#1)}

\newcommand{\phantomsubcaptionlabel}[1]{\refstepcounter{subfigure}\label{#1}}
\DeclareMathOperator{\conv}{\text{conv}}
\DeclareMathOperator{\sym}{\text{sym}}

\begin{document}

\author{Lorenzo Fioroni\,\orcidlink{0009-0006-6824-2665}}
\author{Filippo Ferrari\,\orcidlink{0009-0003-6317-0816}}
\email{filippo.ferrari@epfl.ch}
\author{Emanuele Tirrito\,\orcidlink{0000-0001-7067-1203}}
\affiliation{Institute of Physics and Center for Quantum Science and Engineering,\\ \'Ecole Polytechnique F\'ed\'erale de Lausanne (EPFL), Lausanne, Switzerland}

\title{Noise-induced classical phases in optimally-unraveled random quantum circuits}

\date{\today}

\begin{abstract}
    We study the classical simulability of open quantum dynamics using random Clifford circuits doped with non-Clifford phase rotations and subject to local noise. 
    We unravel the dynamics into stochastic quantum trajectories simulated with Clifford-augmented matrix product states, and introduce a simulation cost that quantifies the classical resources required. 
    Optimizing this cost over stochastic unravelings, we identify noise-induced classical phases: extended parameter regions in which the dynamics can be fully disentangled by Clifford operations at arbitrary circuit depth.
    Their existence depends on both the noise model \emph{and} the unraveling. 
    Using a geometric representation of quantum channels, we analytically determine optimal unravelings for a broad class of noise models, with numerical simulations confirming the predicted phase boundaries. 
    We further relate the optimal cost to the unraveling-independent nonstabilizerness of the channel and show that, together with trajectory-resolved entanglement and nonstabilizerness, it classifies distinct dynamical regimes. 
    Finally, we show that these classical phases disappear in the averaged density-matrix description, where no unraveling freedom remains. 
    Our results show that the emergence of classicality in noisy random circuits depends on the measurement scheme adopted to probe it, and paves the way to further studies on the classical simulability of driven-dissipative dynamics.
\end{abstract}

\maketitle

\section{Introduction}
Locating the boundary between quantum evolutions that can be efficiently reproduced on a classical computer, and those genuinely requiring quantum hardware is a central problem in quantum information science.
On the one hand, quantum computers promise to outperform classical machines for various tasks~\cite{arute2019quantum, morvan2024phase, liu2025certified}, ranging from cryptography~\cite{simon1997power, shor1997polynomial} to the simulation of quantum many-body dynamics~\cite{feynman1982simulating, lloyd1996universal, daley2022practical}.
On the other hand, classical simulation has played an equally important role in sharpening such boundary, since several classes of quantum evolutions do admit efficient classical descriptions.
The most prominent examples are stabilizer circuits, highly entangled yet efficiently simulable via the Gottesman-Knill theorem~\cite{gottesman1998heisenberg, aaronson2004improved},
and weakly entangled states admitting a compact representation within the tensor networks formalism~\cite{verstraete2008matrix, orus2014practical, cirac2021matrix}, such as matrix product states (MPS)~\cite{white1992density, white1993density, vidal2003efficient, eisert2010colloquium, schollwock2011density}.
These tractable wave functions occupy isolated corners of the Hilbert space, and each corner comes with its own currency to depart from it.
For stabilizer-based simulations, the canonical quantity is nonstabilizerness~\cite{bravyi2016trading, bravyi2019simulation}, with analogous constructions for mixed states~\cite{howard2017application} and quantum channels~\cite{seddon2019quantifying, seddon2021quantifying}.
Tensor-network complexity is instead controlled by the bond dimension~\cite{schollwock2011density}, linked to entanglement.
Classical simulation complexity is therefore governed not by the dimension of the Hilbert space, but by the specific resource that obstructs a chosen efficient representation.
So that, the resource controlling simulation complexity depends on the structure of the classical representation.

The role of these computational resources becomes particularly relevant in realistic quantum devices, where noise can modify the structures that determine classical simulability. Any quantum processor is subject to noise which corrupts quantum information processing~\cite{shor1995scheme, steane1996error, nielsen2012quantum}, while error correction~\cite{terhal2015quantum} and mitigation~\cite{cai2023quantum} schemes can help recover fragile quantum states.
More generally, noise results in an effective dynamics which can deviate qualitatively from the paradigm of unitary quantum mechanics~\cite{zurek2003decoherence, breuer2007theory, rivas2012open}.
A crucial question is whether noise can also make quantum dynamics \emph{easier} to reproduce classically.
This has motivated extensive work on measurement- and noise-induced transitions in simulation complexity ~\cite{li2018quantum, skinner2019measurement, bao2020theory, choi2020quantum, dowling2026noise, shao2026complexity}, as well as classical algorithms that exploit noise to suppress correlations, operator complexity, or the effective circuit depth~\cite{cheng2021simulating,noh2020efficient, fontana2025classical, schuster2025polynomial, martinez2025efficient, angrisani2026simulating, mele2026noise}.

\begin{figure}[t!]
	\centering
	\includegraphics[page=2]{figs/circuit.pdf}
	\caption{
		Noisy doped random Clifford circuits and their phase diagram.
		(a) One of the $N_T$ layers: a global random Clifford $C$ acting on the $N$ qubits, followed by the non-Clifford rotation $U_\phi$ and by the local noise channel $\mathcal N$, both applied to the first qubit.
		(b) Sketch of the phase diagram in the plane of the noise strength $p$ and of the non-Clifford density $N_T/N$.
		Between the threshold probabilities $p_\pm$ (vertical dotted lines) the optimal unraveling has $\bar c = 0$: the trajectories stay Clifford-disentanglable at arbitrary depth and the circuit is in the noise-induced classical phase (II).
		Outside this window, disentangling is possible only below the boundary $N_T \sim N/\bar c$ (solid line), which separates the disentanglable regime (I) from the quantum one (III).
		The three regions are characterized throughout the paper in terms of the entanglement and nonstabilizerness of the trajectories.
	}
	\label{fig:sketch}
	{\phantomsubcaptionlabel{fig:sketch_circuit}}
	{\phantomsubcaptionlabel{fig:sketch_phases}}
\end{figure}

Noisy quantum dynamics admits two complementary descriptions.
In the unconditional picture, the system is represented by a mixed density operator evolving under a quantum channel or, in continuous time, a Lindblad master equation~\cite{lindblad1976generators, breuer2007theory}.
Alternatively, the same dynamics can be \emph{unraveled} into an ensemble of stochastic pure-state trajectories~\cite{dalibard1992wave,plenio1998quantum,daley2014quantum} whose average reproduces the density operator.
Crucially, the unraveling is not unique: different Kraus decompositions of the same channel generate distinct trajectory ensembles while leaving the averaged dynamics unchanged.
Physically, this freedom is the choice of how the environment is monitored; computationally, it is a gauge that can be set so as to make individual trajectories as cheap as possible.
So far, this gauge has been fixed by suppressing trajectory entanglement, either by optimizing it directly or through related local criteria~\cite{vovk2022entanglement,cheng2023efficient,kolodrubetz2023optimality,chen2024optimized,vovk2024quantum, daraban2025nonb, cichy2026classical}.
This is the natural choice for tensor-network simulations, where the computational cost of a trajectory is controlled by the entanglement it carries, and it substantially enlarges the regime accessible to MPS-based simulations of noisy random circuits and dissipative dynamics~\cite{cheng2023efficient,chen2024optimized,cichy2026classical}.
Entanglement, however, is a proxy for cost only within one family of representations.
In a simulator that already exploits efficiently tractable Clifford structures, highly entangled correlations may be computationally inexpensive, and the obstruction lies elsewhere: in the non-Clifford operations that the stabilizer formalism cannot absorb.

In this work we provide a general understanding of the classical simulability of open quantum systems by redefining the entanglement-based optimal unraveling paradigm.
In practice, we answer the following three questions:
\begin{itemize}
	\item[(i)] Can the unraveling freedom of an open quantum system be fixed by targeting the resource that controls a chosen classical representation, rather than entanglement alone?
	\item[(ii)] Can such a resource-adapted unraveling stabilize an extended region in which arbitrarily large and deep universal circuits remain classically simulable?
	\item[(iii)] Once the optimal unraveling is fixed, how is it related to trajectory-resolved quantum resources such as entanglement and nonstabilizerness?
	      Does either of them, on its own, track the simulation cost?
\end{itemize}
We address (i), (ii) and (iii) using random Clifford circuits doped with single-qubit non-Clifford phase rotations and subject to local noise (see \cref{fig:sketch_circuit}).
These circuits interpolate between efficiently simulable stabilizer dynamics and universal quantum computation while retaining enough structure to isolate the origin of their simulation complexity analytically.
We follow their open-system dynamics through stochastic quantum trajectories represented using Clifford-augmented matrix product states (CAMPS), in which a many-body state is written as a Clifford circuit acting on an MPS~\cite{qian2024augmenting}.
The Clifford component stores correlations that can be handled efficiently within the stabilizer formalism, while the inner MPS carries only the residue that no Clifford transformation can remove.

The paper is structured as follows.
In \cref{sec:results} we outline the main results of our work, which amount to a detailed answer of questions (i), (ii) and (iii).
In \cref{sec:setup_methods} we introduce the noisy random quantum circuit and the CAMPS Ansatz.
In \cref{sec:classical_phases} we introduce the simulation cost, construct the system phase diagram for various noise models and derive the analytical results.
In \cref{sec:magic} we connect the simulation cost to known nonstabilizerness measures.
In \cref{sec:campo_disentangling} we inspect whether the disentangling power of quantum trajectories extends also to the averaged density matrix.
Finally, we draw our conclusions and discuss future directions in \cref{sec:discussion}.

\section{Overview of main results}
\label{sec:results}

In this work, we introduce a rigorous notion of ``simulation cost", i.e., the amount of classical resources needed to simulate dissipative dynamics efficiently with CAMPS.
We show that such cost is a property of the noise channel \emph{and} of the way it is unraveled, not of the states the trajectories visit.
To make this concrete, we split the single-qubit channel as a combination of Clifford and non-Clifford operations, and define the cost $\bar c$ as the smallest fraction of non-Clifford operations for which such a splitting exists.
Operationally, $\bar c$ is the probability that a trajectory picks up a non-Clifford operation at each noise event.
$1-\bar c$ represents instead the complementary probability that the dynamics acts as a Clifford gate, tracked for free by the stabilizer component of the CAMPS.
The optimal unraveling therefore extends the depth over which trajectories can be completely Clifford-disentangled by a factor set by the inverse non-Clifford weight.
Most importantly, when $\bar c=0$, the noisy channel admits an unraveling built entirely out of Clifford operations, so that trajectories remain Clifford-disentanglable at arbitrary circuit depths.
This defines a \emph{noise-induced classical phase}.

Using a geometric representation of single-qubit quantum channels, we solve this optimization analytically for a broad class of Pauli noise, including single-qubit dephasing and depolarizing noise.
The geometry makes the answer transparent: it identifies the convex region generated by Clifford operations, and determines both the optimal cost, and the unraveling that realizes it.
CAMPS simulations quantitatively reproduce the analytical phase boundaries.
Crucially, the emergent classicality is a property of the unraveling rather than of the channel alone: the same physical noise, with a naive Kraus decomposition, hides the classical phase entirely.
Moreover, noise does not generically improve simulability.
For tilted dephasing, the interplay between the noise axis and the non-Clifford rotation instead raises the simulation cost and hinders Clifford disentangling.
Classical accessibility is therefore controlled by the structure of the optimally unraveled channel, rather than by noise strength alone.

The resulting quantum dynamics organizes into three regimes according to the optimal cost $\bar c$ and the quantum resources we analyze, namely entanglement and nonstabilizerness, as summarized in \cref{fig:sketch_phases}.
A crucial result shining light on the optimal cost $\bar c$ we introduce is its analytic connection with nonstabilizerness: in general, the cost $\bar c$ of the optimal unraveling is a bound for the unraveling-independent measure of nonstabilizerness called robustness of magic introduced in Ref.~\cite{howard2017application}, and for specific noise models the two quantities are equivalent up to a scaling factor.
When looking at trajectory-dependent resources, namely the Von Neumann entropy and stabilizer Renyi entropy~\cite{leone2022stabilizer}, we identify the distinct features of the three aforementioned dynamical phases.
In the \emph{disentanglable} phase, the inner-MPS entanglement vanishes while the optimal cost $\bar c$ and the stabilizer R\'enyi entropy remain finite: nonstabilizerness is nonzero but does not obstruct complete Clifford disentangling.
In the \emph{classical} phase, $\bar c = 0$ and both the inner-MPS entanglement and its stabilizer R\'enyi entropy vanish.
In the \emph{quantum} phase, $\bar c > 0$ once again, yet residual entanglement and nonstabilizerness now grow with circuit depth, and the resources required by CAMPS grow with them.
Nonstabilizerness alone therefore does not measure the simulation cost any more than entanglement does.
This second series of results provides a complete characterization of the dissipative dynamics of our simple yet universal noisy toy model.

Finally, we demonstrate that the unraveling procedure is \emph{necessary} for the appearance of classical phases in the considered setup.
Indeed, by extending the CAMPS Ansatz to density operators, which model the averaged dynamics of open systems, we show that the disentangling power of the optimally-unraveled CAMPS is lost.
More broadly, our work represents, to the best of our knowledge, the first application of Clifford-augmented tensor networks to open quantum systems, and paves the way for future studies on the complexity and simulability of dissipative quantum dynamics.

\section{Setup and methods}
\label{sec:setup_methods}

\subsection{Noisy doped random Clifford circuits}

Throughout this paper we consider a quantum circuit with $N$ qubits and $N_T$ layers of gates.
Each layer is composed of a random, global Clifford circuit $C$, followed by the application of a single non-Clifford rotation $U_\phi$ about the $z$ axis by an angle $\phi$, and an arbitrary local noise channel $\mathcal N$ on one of the qubits.
Due to the presence of $C$, we can assume without loss of generality that the non-Clifford operations are both applied on the first qubit.
Indeed, a suitable selection of the Clifford circuit can map this case to any other possible choice.
We model the possible errors following the application of the non-Clifford operation via the local noise channel $\mathcal N$.
A sketch of the system is presented in \cref{fig:sketch_circuit}.
The motivation behind the choice of this setup is that it interpolates between stabilizer evolution -- efficiently simulable on a classical computer~\cite{gottesman1998heisenberg, aaronson2004improved} -- and universal quantum computation~\cite{nielsen2012quantum}.

We consider local noise channels $\mathcal N$ in the form
\begin{equation}
	\mathcal N(\rho) = \sum_j p_j K_j \rho K_j^\dagger,
\end{equation}
where $\sum_j p_j = 1$ and $ K_j$ are local unitary Kraus operators.
Equivalently, we define the quantum channel that first applies the non-Clifford rotation $ U_\phi = e^{i \phi \sz}$ and then the local noise as
\begin{equation}
	\label{eq:lambda_channel}
	\Lambda (\rho) = \sum_j  p_j K_j e^{i \phi \sz} \rho e^{-i \phi \sz}  K_j^\dagger = \sum_j p_j V_j \rho V_j^\dagger.
\end{equation}
Here and in the rest of the paper we refer to the Pauli operators as $\sx$, $\sy$ and $\sz$, respectively.
The discrete time evolution described by \cref{eq:lambda_channel} refers to the unconditioned density matrix $\rho$.
Equivalently, the same map can be recast in a stochastic evolution for the pure state $\ket{\Psi}$ by unraveling the channel in quantum trajectories~\cite{breuer2007theory, daley2014quantum}.
At each time step, one of the Kraus operators $V_j$ in the channel is randomly selected according to its probability $p_j$ and is applied to the state of the $k$-th trajectory $\ket{\Psi_k} \mapsto V_j \ket{\Psi_k}$.
Averaging over the trajectories $\ket{\Psi_k}$ recovers the averaged dynamics generated by the channel.

\subsection{Clifford-augmented matrix product states}

Matrix product states (MPS) provide a compact representation of many-body wave functions $\ket{\Psi}$ and became a standard method for the classical simulation of quantum many-body systems~\cite{schollwock2011density}.
The MPS formalism takes advantage of the system's topology, and in particular of the locality of the couplings.
An arbitrary $N$-qubit state $\ket{\Psi} = \sum_{\sigma_1, \ldots, \sigma_N} \Psi_{\sigma_1, \ldots, \sigma_N} \ket{\sigma_1 \ldots \sigma_N}$ is approximated by factorizing the coefficient tensor $\Psi$ into a product of low-rank matrices:
\begin{equation}
	\ket{\Psi_\mps} = \sum_{\sigma_1, \ldots, \sigma_N} A^{[\sigma_1]} \ldots A^{[\sigma_N]}\ket{\sigma_1\ldots\sigma_N}.
\end{equation}
The rank $\chi$ of each tensor $A^{[\sigma_j]}$ is called \emph{bond dimension}, and controls both the scalability and the accuracy of the Ansatz.
Low-entangled states can be represented efficiently with an MPS with small bond dimension, whereas states characterized by an entanglement that grows with the system size can require an exponentially large $\chi$.
Therefore, entanglement directly determines whether an MPS will or won't be a suitable Ansatz to describe a given system.

However, not all entangled states are created equal.
It is useful to differentiate between the kind of entanglement that can be treated efficiently on a classical computer -- i.e., Clifford -- from the one that is out of reach to classical algorithms as the system size increases.
A formalism that harnesses this distinction explicitly is that of Clifford-augmented MPS (CAMPS)~\cite{qian2024augmenting, masotllima2024stabilizer, nakhl2025stabilizer}.
The CAMPS Ansatz represents a state $\ket{\Psi}$ as the application of a Clifford operator $C$ onto an MPS:
\begin{equation}
	\ket{\Psi_\camps} = C \ket{\Psi_\mps}.
\end{equation}
The Clifford operator describes entanglement within the stabilizer formalism, while the inner MPS handles the remaining quantum correlations.
As such, a state that is highly entangled by Clifford circuits only, will have a bond dimension of the inner MPS $\chi = 1$.
Stated differently, if the state resulting from the application of arbitrary circuit is represented by a CAMPS with bond dimension $\chi=1$, the circuit is efficiently simulable on a classical computer.
Notably, this property does not hold for the MPS Ansatz: even if the circuit induces only stabilizer entanglement, the bond dimension can grow exponentially in the system size.
We refer to the ability of ``moving" entanglement from the MPS to the outer Clifford operator as \emph{disentangling} the MPS.
Thus, CAMPS are capturing an increasing attention for the study of quantum many-body systems~\cite{qian2025clifford, mello2025clifford, fan2025disentangling,  qian2025augmenting, huang2025clifford, huang2025nonstabilizerness, li2025disentangling, bacciconi2026pulling}.

\subsection{Closed-system limit}
\label{sec:closed_system}

The circuit in \cref{fig:sketch_circuit} has been studied in the closed-system (i.e., noiseless) limit in Refs.~\cite{fux2025disentangling, liu2026classical,MasotLima2026,bejan2024}, with $T$-gates (corresponding to $\phi=-\pi/8$) as non-Clifford operations.
In that case, it has been shown that a CAMPS Ansatz can disentangle the inner MPS up to a number of $T$ gates $N_T \sim N$.
In this work, we ask whether noise sources can lead to the appearance of \emph{classical} phases, namely wide parameter regions where CAMPS can disentangle the MPS independently of the circuit depth.

\section{Classical phases in doped random Clifford circuits}
\label{sec:classical_phases}

Given the composite channel in \cref{eq:lambda_channel}, we seek an optimal unraveling that minimizes the overhead from non-Clifford operations.
We consider a convex decomposition of the form
\begin{equation}
	\Lambda(\rho) = \bar c \, \Phi(\rho) + (1 - \bar c) \, \Sigma(\rho),
	\label{eq:cost}
\end{equation}
where $\Sigma$ is a Clifford channel and $\Phi$ denotes the non-Clifford component.
Since Clifford operations can be efficiently tracked within the CAMPS framework, minimizing the weight $\bar c$ directly minimizes the classical simulation cost.
Specifically, as mentioned in \cref{sec:closed_system}, CAMPS can disentangle up to a number of non-Clifford operations $N_T \sim N$.
If an unraveling with cost $\bar c$ is found, the average number of non-Clifford operations applied up to time $N_T$ will be effectively rescaled as $\bar c \, N_T$.
Consequently, CAMPS will be able to disentangle the state up until $N_T \sim N/\bar c$.
Notably, the limiting case $\bar c = 0$ corresponds to an unraveling where the state remains fully disentangled by Clifford operations at arbitrary circuit depths -- the regime we identified as classical phase.
In the remainder of this section, we prove that depending on the noise model such phases can appear, and establish rigorous results characterizing their boundaries and properties.

\subsection{Aligned dephasing}
\label{sec:aligned_dephasing}

First, we consider the circuit in \cref{fig:sketch_circuit} under the action of a dephasing channel
\begin{equation}
	\label{eq:aligned_dephasing}
	\mathcal N_{p}(\rho) = (1-p) \rho + p \sz \rho \sz,
\end{equation}
where $p \in [0,1]$ is the associated probability.
Notice that this noise channel preserves the rotation axis of the non-Clifford operator $U_\phi$, and thus it is referred to as \emph{aligned dephasing}.
A more generic setting that makes no assumption on the relative orientation between the rotation axis of $U_\phi$ and the direction preserved by the channel is studied in \cref{sec:tilted_dephasing}.

\subsubsection{Geometric representation of the channel}
\label{sec:geometric_aligned_dephasing}

We introduce a geometric representation of the channel $\Lambda$ that will prove useful in finding the optimal unraveling according to \cref{eq:cost}.
To do so, we consider the Pauli Transfer Matrix (PTM) of the channel $L_{jk} = \sfrac{1}{2} \Tr(\sigma_j \Lambda(\sigma_k))$, where $\sigma_j \in \{\mathds{1}, X, Y, Z\}$.
Computing $L$ for the channel $\Lambda$ associated to aligned dephasing noise, we find the block-diagonal form
\begin{equation}
	\label{eq:aligned_dephasing_ptm}
	L_\Lambda = \begin{pmatrix}
		1 &  & \\ & f R(2\phi) &\\ &&1
	\end{pmatrix}; \quad
	R(2\phi) = \begin{pmatrix}
		\cos(2\phi) & \sin(2\phi) \\ -\sin(2\phi) & \cos(2\phi)
	\end{pmatrix},
\end{equation}
where $f = 1-2p$, and $R \in \so{2}$.
Notice that the PTM preserves the components relative to $\mathds{1}$ and $\sz$, while rotates and scales by a factor $f$ those relative to $\sx$ and $\sy$.
The scaling is due entirely to the noise channel, as $\mathcal N_p(\sx) = f \sx$ and $\mathcal N_p(\sy) = f \sy$.
We encode the $XY$ block of $L$ in the complex number $z = f \, (\cos(2\phi) + i \sin (2\phi))$, which we represent on the complex plane.
Notice that only when $p=0$ or $p=1$, $|z| = 1$ and thus the associated point lies on the unit circle.
Otherwise, $|z| < 1$ and the associated point lies \emph{inside} the unit circle, on the line of angle $2\phi$ with respect to the real axis.

As we prove in \cref{sec:proofs_aligned_dephasing}, the optimization problem for the minimal $\bar c$ is itself defined on the complex plane, as each one of the Kraus operators is a rotation about $z$, and can be described by a single complex number.
Among the $z$ rotations that can be represented in the complex plane, we identify the operators $\mathcal G = \{S, \sz, S^\dagger, \mathds{1}\}$, that also belong to the Clifford group.
These form a cyclic group generated by the phase gate $S$.
By writing the $XY$ blocks of the PTMs of the operators in $\mathcal G$, we can derive their coordinates in the complex plane:
\begin{equation}
	z_S = -i,\quad z_Z = -1 ,\quad z_{S^\dagger} = i,\quad z_{\mathds{1}} = 1.
\end{equation}
That is, the four Clifford operators lie on the vertices of the diamond inscribed in the unit circle.
More generally, we allow the channel $\Sigma$ in \cref{eq:cost} to be a convex combination of Clifford channels $\Sigma_j$:
\begin{equation}
	\begin{split}
		\Sigma(\rho) = & \sum_j \alpha_j \Sigma_j(\rho), \quad \sum_j \alpha_j=1, \\& \quad \Sigma_j(\rho) = G \rho G^\dagger,
	\end{split}
\end{equation}
with $G \in \mathcal G$.
This corresponds to taking convex combinations of the associated coordinates in the complex plane, and thus spans the whole diamond including its interior.

Any quantum channel that can be represented as a point on the Clifford diamond, allows for an unraveling made of Clifford operators only, and thus its cost is $\bar c = 0$.
Conversely, a noisy circuit whose coordinates are inside the unit circle but on the exterior of the Clifford diamond will require some non-Clifford component in their unraveling, implying $\bar c > 0$.
A sketch of this geometric structure is presented in \cref{fig:clifford_diamond}.

\begin{figure}
	\centering
	\includegraphics{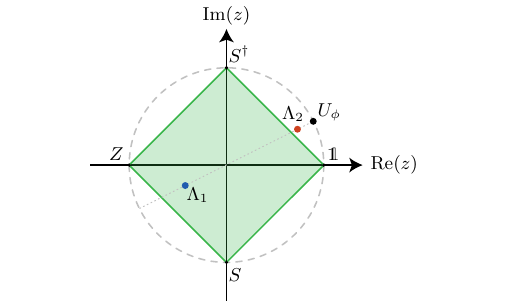}
	\caption{Geometric representation of the composite channel $\Lambda$ for aligned dephasing.
		Every rotation about the $z$ axis is a point of the unit circle (dashed): the gate $U_\phi$ sits at angle $2\phi$, and dephasing shrinks it radially by $f = 1 - 2p$.
		The four Clifford rotations $\mathcal G = \{\mathds{1}, S, \sz, S^\dagger\}$ are the vertices of the shaded diamond, whose interior collects their convex combinations, that is, all the channels that admit a purely Clifford unraveling.
		A channel falling inside the diamond, such as $\Lambda_1$, therefore has $\bar c = 0$, whereas one falling outside it, such as $\Lambda_2$, requires a non-Clifford Kraus operator and has $\bar c > 0$.
	}
	\label{fig:clifford_diamond}
\end{figure}

\subsubsection{Optimal unraveling of the channel}

\begin{figure*}[t!]
	\centering
	\includegraphics{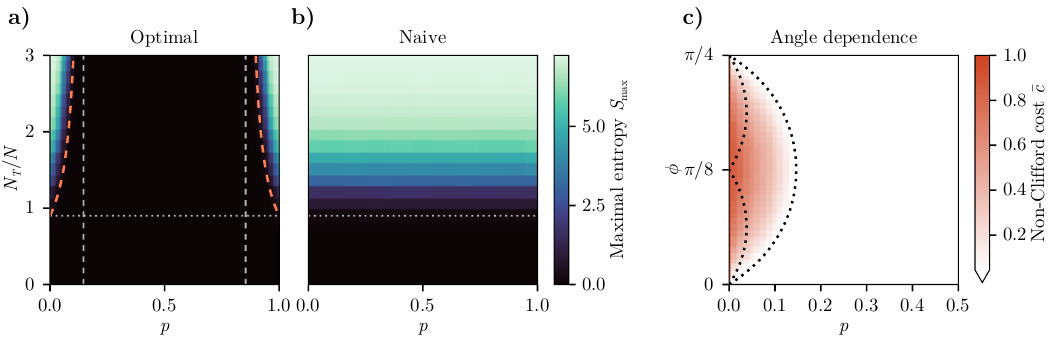}
	\caption{CAMPS simulations for aligned dephasing, with $N = 16$ qubits and $\phi = -\pi/8$.
		(a) Maximal entanglement entropy $S_{\max}$ of the inner MPS for the optimal unraveling of \cref{thm:aligned_dephasing}, as a function of the dephasing probability $p$ and of the non-Clifford density $N_T/N$.
		The orange dashed line is the predicted boundary $N_T \sim N/\bar c$, the vertical white dashed lines mark the threshold probabilities $p_\pm$ enclosing the classical phase, and the horizontal one the closed-system threshold $N_T \sim N$.
		(b) The same quantity for the naive unraveling $K_1 = T$, $K_2 = \sz T$: no classical phase appears and disentangling is lost at $N_T \sim N$ for every $p$.
		(c) Optimal cost $\bar c$ of \cref{eq:cost_aligned_dephasing} as a function of $p$ and of the rotation angle $\phi$, together with the boundaries between the three cases of the theorem (dotted lines).
		The cost is nonzero only in a lens at small $p$ around $\phi = \pi/8$, where it attains the closed-system value $\bar c = 1$ at $p = 0$.}
	\label{fig:aligned_dephasing}
	{\phantomsubcaptionlabel{fig:aligned_dephasing_optimal}}
	{\phantomsubcaptionlabel{fig:aligned_dephasing_naive}}
	{\phantomsubcaptionlabel{fig:aligned_dephasing_cost}}
\end{figure*}

We now formulate the theorem that provides the analytical expression for the cost $\bar c$ of the optimal unraveling and predicts its Kraus operators.
Given that the Clifford diamond is mapped onto itself by the application of Clifford operators, we restrict our discussion to the first quadrant $a,b \geq 0$.
Moreover, we assume $0 \leq b \leq a$ for simplicity, but an equivalent result can be derived for the other cases.
The proof of the theorem is reported in \cref{sec:proofs_aligned_dephasing}.

\begin{theorem}[Optimal unraveling for aligned dephasing]
	\label{thm:aligned_dephasing}
	Consider the aligned dephasing channel $\mathcal N_p$ in \cref{eq:aligned_dephasing} and the rotation channel $\mathcal U_\phi(\rho) = e^{i \phi \sz} \rho e^{-i \phi \sz}$.
	Defining $(a, b) = |f| (|\cos(2\phi)|, |\sin(2\phi)|)$, the cost of the optimal unraveling is
	\begin{equation}
		\label{eq:cost_aligned_dephasing}
		\bar c = \begin{dcases}
			0                            & (i) \text{ if } a+b \leq 1,                        \\
			\frac{a+b-1}{\sqrt{2}-1}     & (ii) \text{ else if } a + (\sqrt{2} - 1) b \leq 1, \\
			\frac{b^2 + (1-a)^2}{2(1-a)} & (iii) \text{ otherwise}.                           \\
		\end{dcases}
	\end{equation}

	\begin{itemize}
		\item [$(i)$] The optimal unraveling in the first case is
		      \begin{equation}
			      \Sigma = \alpha_0 I + \alpha_1 \mathcal S + \alpha_2 \mathcal Z + \alpha_3 \mathcal S^\dagger,
			      \label{eq:clifford_unraveling}
		      \end{equation}
		      where we denoted as $I, \mathcal S, \mathcal Z, \mathcal S^\dagger$ the channels associated to each operator in $\mathcal G$.
		      Introducing the slack $\tau = 1 - a - b$, the coefficients read
		      \begin{flalign}
			       & \begin{aligned}
				         \alpha_0 & = \max(a, 0)+\tau/2,  & \alpha_1 & = \max(-b, 0), \\
				         \alpha_2 & = \max(-a, 0)+\tau/2, & \alpha_3 & = \max(b, 0).
			         \end{aligned}
		      \end{flalign}

		\item [$(ii)$] In the second region we have an optimal unraveling in the form of \cref{eq:cost}, where $\Phi(\rho) = e^{i \frac{\pi}{8} Z} \rho e^{-i \frac{\pi}{8} Z}$ is a $T$-gate and $\Sigma$ is given by \cref{eq:clifford_unraveling} with the substitution
		      \begin{equation}
			      a\to\frac{a-\bar c/\sqrt{2}}{1-\bar c},\qquad b\to\frac{b-\bar c/\sqrt{2}}{1-\bar c}.
		      \end{equation}

		\item [$(iii)$] In the last region we also find an optimal unraveling as in \cref{eq:cost}.
		      The Clifford part is $\Sigma = I$, while the non-Clifford operator is a rotation about the $z$ axis by an angle satisfying $\tan(2\theta) = b /\sqrt{\bar c^2 - b^2}$.
	\end{itemize}
\end{theorem}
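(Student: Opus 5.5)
We reduce the problem to a planar convex-geometry optimization and then solve it region by region. First we restrict the unravelings to unitary Kraus operators that are rotations about the $z$ axis, together with mixtures thereof, so that every channel in the decomposition has the block-diagonal PTM of \cref{eq:aligned_dephasing_ptm}. This restriction is justified because both $\Lambda$ and the Clifford diamond are invariant under the twirl by $z$ rotations and the symmetric $\sx \leftrightarrow \sy$ swap structure. We would argue that any decomposition $\Lambda=\bar c\,\Phi+(1-\bar c)\Sigma$ can be projected onto the $z$-covariant sector (twirling over the group $\mathcal G$ together with the $\pi$-rotations that fix $\Lambda$) without increasing $\bar c$. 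Within this sector each unitary Kraus operator is a point $e^{2i\theta}$ on the unit circle, a Clifford channel is a point in the diamond $D=\{|x|+|y|\le 1\}$, and a general non-Clifford channel is a point in the unit disk. The problem therefore becomes: minimize $c\in[0,1]$ such that $z=(a,b)=c\,w+(1-c)\,d$ with $|w|\le1$ and $d\in D$. By the symmetry of $D$ under the dihedral action of $\mathcal G$ (plus reflections), it suffices to treat $0\le b\le a$.

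Case (i) is immediate. If $a+b\le1$ then $z\in D$, so $\bar c=0$, and the explicit $\alpha_j$ of \cref{eq:clifford_unraveling} are verified by checking $\alpha_0-\alpha_2=a$, $\alpha_3-\alpha_1=b$, and $\sum\alpha_j=1$. For $z\notin D$ we use the geometric reformulation: the feasible set is $c\,\mathbb D+(1-c)D$, and $\bar c$ is the smallest $c$ whose set contains $z$. Since $\Phi$ can always be taken to be a unitary (an extreme point), we set $w=e^{2i\theta}$. For fixed $c$ the boundary of $c\,\mathbb D+(1-c)D$ in the first sector consists of the translated edge $x+y=(1-c)+c\sqrt2$ (attained with $w=e^{i\pi/4}$, i.e.\ a $T$ gate), joined to circular arcs of radius $c$ centered at the vertices $(1-c,0)$ and $(0,1-c)$. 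The arc centered at $(1-c,0)$ applies for directions between the normal $(1,1)/\sqrt2$ and the $x$ axis.

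In the flat-edge regime, solving $a+b=1+c(\sqrt2-1)$ gives case (ii). The edge is the relevant boundary exactly when the contact point on the disk is $w=e^{i\pi/4}$ and the Clifford point $d=(z-c\,w)/(1-c)$ lies on the edge of $D$ with nonnegative coordinates, i.e.\ $b-c/\sqrt2\ge0$. Substituting $c$, this reduces to $a+(\sqrt2-1)b\le1$. The stated Clifford coefficients then follow from case (i) applied to $d$.

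In the arc regime we need $|z-(1-c,0)|=c$, i.e.\ $(a-1+c)^2+b^2=c^2$, which gives $c=\frac{b^2+(1-a)^2}{2(1-a)}$, with $d=\mathds 1$ and $w=(z-(1-c))/c$. We then check that $\sin 2\theta=b/c$ matches the stated $\tan(2\theta)=b/\sqrt{c^2-b^2}$, up to the sign of the cosine, which we would double-check.

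Optimality, namely that no smaller $c$ works, follows in both regimes from a supporting-line argument. For each $c$ the set $c\,\mathbb D+(1-c)D$ is convex, so it suffices to exhibit a linear functional $\ell$ with $\ell(z)>\max_{c'\mathbb D+(1-c')D}\ell$ for every $c'<\bar c$. In the edge regime we take $\ell=(x+y)$. In the arc regime we take $\ell$ along the outward normal $z-(1-\bar c,0)$, and we need the maximum of $\ell$ over $D$ to be attained at the vertex $(1,0)$, which is where $b\le a$ and the region condition enter.

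The main obstacle is the first step: rigorously justifying that optimizing over general (possibly non-covariant, non-unital) channels $\Phi$ and general Clifford mixtures $\Sigma$, including $\sx,\sy$ and $H$-type Cliffords that leave the complex plane, cannot beat the planar optimum. We would attempt this by twirling the whole decomposition with $\{\mathds 1,\sz\}$ and exploiting the fact that $\Lambda$ fixes the $z$ component to $1$. This fixing forces every component channel to preserve $\sz$, which excludes Cliffords that flip or move the $z$ axis and leaves only $\mathcal G$ and $z$ rotations.
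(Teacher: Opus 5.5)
Your reduction is the paper's. It also restricts to $z$ rotations, identifies Clifford mixtures with the diamond $\mathcal D$, works in the sector $0\le b\le a$ by the dihedral symmetry, and poses the same planar problem $z_\Lambda=c\,w+(1-c)\,d$ with $|w|\le 1$ and $d\in\mathcal D$.

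For the step you call the main obstacle, drop the twirling. $\mathcal G$-covariance does not turn a Kraus operator into a $z$ rotation. Conjugation by $\sx$ does not fix $\Lambda$ either: it sends $z_\Lambda\to\bar z_\Lambda$, so it is only a symmetry between channels, useful for the sector reduction. Your last sentence is already the whole argument, and it is exactly the paper's first lemma. Apply it to individual unitary Kraus operators: $(M_\Lambda)_{33}=\sum_j\lambda_j (M_j)_{33}=1$ with each $(M_j)_{33}\le 1$ forces $(M_j)_{33}=1$ for every $j$. This excludes $\sx$, $\sy$, $H$-type Cliffords and any non-covariant $\Phi$ at once. A non-unital $\Phi$ never arises, because unravelings use unitary Kraus operators.

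The genuine difference is how you solve the planar problem.

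The paper stays primal. Writing $v=\bar c\,z_\Phi$, it shows $\bar c=\min\{|v|:\|z_\Lambda-v\|_1+|v|\le 1\}$ and clips $v$ to $[0,a]\times[0,b]$. It then reduces everything to the scalar function $G(t)=a+b+t-m(t)$, where $m(t)$ is the largest $v_1+v_2$ with $|v|\le t$. $G$ has two strictly decreasing branches: the Cauchy--Schwarz maximizer $(1+i)t/\sqrt2$ up to $t=\sqrt2\,b$, then the active constraint $v_2=b$. Uniqueness of the root, and the boundary $a+(\sqrt2-1)b=1$ as the branch switch, come out automatically.

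You instead describe $c\,\mathcal B+(1-c)\mathcal D$ as a rounded diamond and certify optimality with separating linear functionals. This gives explicit dual witnesses. For example, $x+y\le 1+(\sqrt2-1)c$ holds on the whole body, so $\bar c\ge (a+b-1)/(\sqrt2-1)$ for every channel, in the spirit of the witness the paper later uses for the robustness of magic. The price is that you must check by hand which boundary piece $z_\Lambda$ touches. Those checks do close:

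In region $(iii)$, $1-a<(\sqrt2-1)b$ lies below the smaller root of $u^2-2\sqrt2\,b\,u+b^2$. Hence $\bar c\ge\sqrt2\,b$, the normal has angle $2\theta\le\pi/4$, and the vertex $(1,0)$ maximizes it over $\mathcal D$.

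The cosine you flagged is positive, because $\bar c-(1-a)=\frac{b^2-(1-a)^2}{2(1-a)}>0$ whenever $a+b>1$.

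The bound $\bar c\le 1$ is just $a^2+b^2\le 1$.

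Separation is strict. Since $a+b>1$ and $a\le 1$ force $b>0$, we get $\cos 2\theta<1$, so $c\mapsto c+(1-c)\cos 2\theta$ is strictly increasing.
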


That is, apart from the case where the noisy channel $\Lambda$ directly lies inside the Clifford diamond, it is always possible to find an optimal unraveling that is a mixture between some Clifford channel and a non-Clifford rotation about $z$.
Interestingly, there are two regimes for the optimal unraveling depending on the rotation angle $\phi$ and the dephasing probability $p$.
In the first, one finds the optimum by placing the non-Clifford weight on a single $T$-gate regardless of $\phi$.
In the second the Clifford component collapses on one of the four vertices of the diamond.
Observe that for a $T$-gate, $a = b = |1-2p|/\sqrt{2}$.
The condition for the second region to become relevant then reads $\sqrt 2 b = |1-2p| \leq 1$, which always holds true.
Consequently, for a $T$-gate only the two regions $(i)$ and $(ii)$ are relevant to find the optimal unraveling.

\subsubsection{CAMPS simulations for aligned dephasing}

Having derived the optimal unraveling for the circuit in \cref{fig:sketch_circuit} assuming $\mathcal N$ to represent aligned dephasing, we now report numerical results proving that CAMPS can disentangle the evolution until the predicted threshold $N_T \sim N/\bar c$.
Specifically, we simulate the evolution of $N=16$ qubits in a circuit where the non-Clifford rotation is a $T$ gate ($\phi=-\pi/8$).
We monitor the maximal Entanglement Entropy (EE) across the system, defined as
\begin{equation}
	\label{eq:entanglement_entropy}
	S_{\max} = \max_j S_j, \quad S_j = -\Tr(\rho_j\log_2\rho_j),
\end{equation}
where $\rho_j = \Tr_R(\ketbra{\psi})$ is the reduced density matrix for qubits $\{1, \ldots, j\}$ and $\Tr_R$ denotes the partial trace over the remaining sites.

\cref{fig:aligned_dephasing_optimal} reports the value of $S_{\max}$ as a function of the density of non-Clifford operators $N_T/N$ and the dephasing probability $p$.
The orange dashed line represents the predicted phase boundary stemming from \cref{eq:cost_aligned_dephasing}, while the white dashed ones mark the threshold probabilities delimiting the noise-induced classical phase.

To show that the emerging classicality depends on the unraveling,
in \cref{fig:aligned_dephasing_naive} we report the value of $S_{\max}$ in the case of a \emph{naive} unraveling defined by the two Kraus operators $K_1 = T$ and $K_2 = ZT$.
That is, we simulate the case where the $T$-gate is applied first, and then one of the two Kraus operators of the bare noise channel in \cref{eq:aligned_dephasing} (either $\mathds{1}$ or $\sz$) is stochastically selected.
The behavior of $S_{\max}$ shows that for such an unraveling, no classical phase appears as $p$ is varied and CAMPS disentangle up to $N_T \sim N$, as predicted by the closed-system result.
Notice that the fact that the naive unraveling can at all disentangle up to $N_T \sim N$ should not come as a surprise, as its Kraus operators $\mathds{1}$ and $\sz$ are both Cliffords and thus can be applied to the state at zero cost.

We remark that the emergent classical phase characterizes the CAMPS and \emph{not} the MPS simulation.
Differently said, the state $\ket{\Psi}$ is not low-entangled, yet all its entanglement is of stabilizer type.
In \cref{sec:mps} we numerically show that indeed the same simulation with a simple MPS Ansatz requires exponential resources even in the classical phase.

Finally, in \cref{fig:aligned_dephasing_cost} we report the cost $\bar c$ estimated numerically from the optimal unraveling as a function of the gate angle $\phi$: each trajectory is evolved independently until Clifford operators cannot disentangle the state anymore.
The resulting layer index is used to compute the mean cost.
The rightmost black lines indicate the boundary between the classical phase with $\bar c = 0$ and the one with $0 \leq \bar c \leq 1$.
The leftmost black line marks the boundary between cases $(ii)$ and $(iii)$ of the theorem.
Interestingly, for a $T$ gate with $\phi=-\pi/8$ only case $(ii)$ ever plays a role.

\subsection{Pauli noise isotropic on the equatorial plane}
\label{sec:pauli_noise}

Having proven the emergence of classical phases in open circuits under the action of aligned dephasing, we now turn to more general noise models.
Specifically, we consider a generic Pauli noise isotropic along the equatorial plane
\begin{equation}
	\mathcal N_p(\rho) = (1-p) \rho + p_\bot (\sx \rho \sx + \sy \rho \sy) + p_z \sz \rho \sz.
	\label{eq:depolarizing}
\end{equation}
Notice that a channel in this form not only encompasses the aligned dephasing case we solved in \cref{sec:aligned_dephasing}, but also depolarizing noise (recovered by setting $p_\bot = p_z = p/3$) and some forms of twirled noise (cf. \cref{sec:twirled_dephasing}).

As in the previous section, we start by computing the entries of the PTM associated to the composite channel $\Lambda$, which reads
\begin{equation}
	L_\Lambda = \begin{pmatrix}
		1 &           \\
		  & M_\Lambda
	\end{pmatrix}, \quad
	M_\Lambda = \begin{pmatrix}
		f_\bot R(2\phi) &     \\
		                & f_z
	\end{pmatrix}
\end{equation}
where $M_\Lambda$ is the Bloch matrix, $f_\bot = 1 - 2(p_\bot + p_z)$ and $f_z = 1 - 4 p_\bot$.

Since any rotation can be decomposed as the product of three rotations about two non-orthogonal axes, and conjugation by Clifford operators can arbitrarily change the rotation axes, the most general Kraus operator $K$ arising in the unraveling has a Bloch matrix
\begin{equation}
	M_K = \tilde C_1 R_z(2\theta_1) \tilde C_2 R_z(2\theta_2) \tilde C_3 R_z(2\theta_3) \tilde C_4,
\end{equation}
where $\tilde C, R_z(2\phi)\in\so3$ are, respectively, the Bloch matrices of Clifford operators and rotations about $z$ by angles $2\theta$.
In a CAMPS simulation, the resulting computational cost $c(M_K)$ -- quantified by the number of non-Clifford operations applied to the MPS -- is thus simply determined by the number of non-trivial rotations $R_z(2\theta)$ in this decomposition.

Let us denote with $\mathcal K_m$ the set of all $M_K$ of cost $m$.
The optimal cost then reads
\begin{equation}
	\bar c = \min \qty{ \sum_m p_m m : M_\Lambda \in \sum_m p_m \conv{\mathcal K_m} },
	\label{eq:optimization_general}
\end{equation}
where $p_m \geq 0$, $\sum_m p_m = 1$ and $\conv{\mathcal K_m}$ is the set of all convex combinations of elements in $\mathcal K_m$.
In principle, the solution to this optimization problem has to be sought in the entirety of $\so{3}$.

\begin{figure}[t]
	\centering
	\includegraphics[page=2]{figs/clifford_pyramid.pdf}
	\caption{Geometric representation for Pauli noise isotropic on the equatorial plane.
		After the symmetrization of \cref{sec:geometric_isotropic_noise}, every channel is a point $(\zeta, t)$ of the cone $\mathcal C$ of \cref{eq:cone}: its base circle at $t = 1$ collects the rotations about the $z$ axis, among them the gate $U_\phi$, and its apex is the Clifford mixture $\mathcal A = (\mathcal X + \mathcal Y)/2$.
		The channels of zero cost fill the inscribed Clifford pyramid $\mathcal P$ of \cref{eq:pyramid} (shaded), with base vertices $I$, $\mathcal S$, $\mathcal Z$, $\mathcal S^\dagger$ and apex $\mathcal A$.
		As in \cref{fig:clifford_diamond}, $\Lambda_1$ lies inside the pyramid and has $\bar c = 0$, while $\Lambda_2$ lies outside it and has $\bar c > 0$.
	}
	\label{fig:clifford_pyramid}
\end{figure}

\subsubsection{Geometric representation of the channel}
\label{sec:geometric_isotropic_noise}
In \cref{sec:aligned_dephasing} we defined the group $\mathcal G$ of $z$ rotations that are also Clifford operators, whose Bloch matrices are $R_z(k\pi/2)$, for $k = 0,1,2,3$.
For any matrix $M \in \so{3}$, we define the \emph{symmetrization} of $M$ as
\begin{equation}
	\sym(M) = \frac{1}{4} \sum_{k = 0}^3 R_z(k\pi/2) M R_z(-k\pi/2).
\end{equation}
In \cref{sec:proofs_pauli_noise} we show that every symmetrized matrix can be parametrized by two variables only:
\begin{equation}
	\sym(M) = \begin{pmatrix}
		\Re(\zeta)  & \Im(\zeta) &   \\
		-\Im(\zeta) & \Re(\zeta) &   \\
		            &            & t
	\end{pmatrix},
\end{equation}
where $\zeta \in \mathbb{C}$, $t \in \mathbb{R}$, and $2|\zeta| = t+1$.
That is, the symmetrized matrices are represented by points on the surface of a cone in $\mathbb{C} \times \mathbb{R}$, with apex at $(0, -1)$ and base the unit circle in the plane $t = 1$.

The advantage of this symmetrization procedure is that, since $M_\Lambda$ commutes with $R_z(k\pi/2)$, symmetrizing an unraveling of $\Lambda$ produces another one of no larger cost.
The search for the optimal unraveling can thus be restricted to symmetrized matrices without loss of generality, as we show in \cref{sec:proofs_pauli_noise}.
Differently said, \cref{eq:optimization_general} is now recast into an optimization problem defined on the three-dimensional cone
\begin{equation}
	\label{eq:cone}
	\mathcal C := \conv(\sym(\so{3})) = \qty{(\zeta, t) : |\zeta| \leq \frac{1+t}{2}},
\end{equation}
where $-1\le t \le1$.
Computing $\conv(\sym(\mathcal K_0))$, it is straightforward to verify that the corresponding matrices satisfy
\begin{equation}
	\label{eq:pyramid}
	\mathcal P := \conv(\sym(\mathcal K_0)) = \qty{(\zeta, t) : ||\zeta||_1 \leq \frac{1+t}{2}},
\end{equation}
where $-1\le t \leq 1$ and $||\cdot||_1$ denotes the $\ell_1$ norm.
\cref{eq:pyramid} defines the Clifford pyramid with apex $(0, -1)$ and base vertices $(i^k, 1)$.
\cref{eq:cone} and \cref{eq:pyramid} generalize the concepts of unit circle and Clifford diamond introduced in \cref{sec:geometric_aligned_dephasing} for Pauli noise isotropic along the equator of the Bloch sphere.
A sketch of this geometric structure is presented in \cref{fig:clifford_pyramid}.

\subsubsection{Optimal unraveling of the channel}

Given that every point in $\mathcal P$ is a convex combination of Clifford channels, there exists an unraveling for which CAMPS simulations have zero cost, $\bar c=0$.
Outside the pyramid, $\bar c > 0$.
Furthermore, direct computation shows that any $\so3$ rotation $R_z(\theta)$ gets mapped onto $(\zeta, t) = (e^{i\theta}, 1)$, while the Clifford $R_x(\pi)$ is mapped to the apex $(0, -1)$.
The former costs a single non-Clifford operation and the latter none, so that the two families have cost at most one.
Moreover, they span the whole cone $\mathcal C$, thus implying that any $xy$-symmetric Pauli noise channel $\Lambda$ can be written as a mixture of a cost-zero Clifford channel $\Sigma$ and a cost-one channel $\Phi$.
We therefore recovered the general expression in \cref{eq:cost} for the noise channel described by \cref{eq:depolarizing}.
Since the time $N_T$ at which CAMPS cannot be disentangled anymore scales as $N_T \sim N / \bar c$, the bound $\bar c \leq 1$ implies that the considered noise can never hinder the classical simulation of the circuit under consideration compared to the closed-system result.

We are now in the position to state the theorem that provides the optimal cost and unraveling for the Pauli noise channel isotropic on the equator described by $(\zeta, t) = (A + i B, \, 2s_\Lambda - 1)$.
As in the previous section, we assume $0 \leq B \leq A$ for simplicity, but equivalent expressions can be found when symmetric relations hold.
The results follow directly from those presented for the aligned dephasing channel, as further discussed in \cref{sec:proofs_pauli_noise}.

\begin{figure}[t!]
	\centering
	\includegraphics{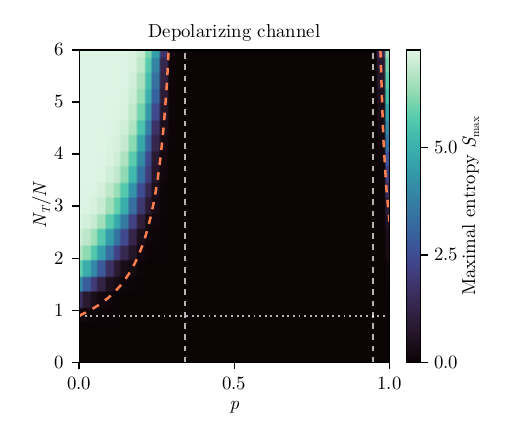}
	\caption{
		CAMPS simulations for depolarizing noise, $p_\bot = p_z = p/3$, for the same circuit as in \cref{fig:aligned_dephasing}.
		Maximal entanglement entropy $S_{\max}$ of the inner MPS as a function of $p$ and $N_T/N$, with the boundary $N_T \sim N/\bar c$ predicted by \cref{thm:pauli_noise} (orange dashed line), the threshold probabilities enclosing the classical phase (vertical white dashed lines) and the closed-system threshold $N_T \sim N$ (horizontal dotted line).
		Differently from aligned dephasing, the two regions of nonzero cost are asymmetric, since no Clifford operator maps $\mathcal N_p$ onto $\mathcal N_{1-p}$.
	}
	\label{fig:depolarizing}
\end{figure}

\begin{theorem}[Optimal unraveling for $xy$-isotropic Pauli noise]
	\label{thm:pauli_noise}
	Consider the $xy$-isotropic Pauli noise channel $\mathcal N_p$ in \cref{eq:depolarizing} and the rotation channel  $\mathcal U_\phi(\rho) = e^{i \phi \sz} \rho e^{-i \phi \sz}$.
	Define $s_\Lambda = \frac{1}{2}(1+f_z)$, $A = |f_\bot \cos(2\phi)|$, and $B = |f_\bot \sin(2\phi)|$.
	The cost of the optimal unraveling is
	\begin{equation}
		\label{eq:cost_pauli_noise}
		\bar c = \begin{dcases}
			0                                            & (i) \text{ if } A+B \leq s_\Lambda,                        \\
			\frac{A+B-s_\Lambda}{\sqrt{2}-1}             & (ii) \text{ else if } A + (\sqrt{2} - 1) B \leq s_\Lambda, \\
			\frac{B^2 + (s_\Lambda-A)^2}{2(s_\Lambda-A)} & (iii) \text{ otherwise}.                                   \\
		\end{dcases}
	\end{equation}

	\begin{itemize}
		\item [$(i)$] The optimal unraveling in the first case is
		      \begin{equation}
			      \Sigma = \alpha_0 I + \alpha_1 \mathcal S + \alpha_2 \mathcal Z + \alpha_3 \mathcal S^\dagger + (1-s_\Lambda) \mathcal A,
			      \label{eq:clifford_unraveling_pyramid}
		      \end{equation}
		      where we defined $\mathcal A(\rho) = (\sx \rho \sx + \sy \rho \sy)/2$.
		      Introducing the slack $\tau = s_\Lambda - A - B$, the coefficients read
		      \begin{flalign}
			       & \begin{aligned}
				         \alpha_0 & = \max(A, 0)+\tau/2,  & \alpha_1 & = \max(-B, 0), \\
				         \alpha_2 & = \max(-A, 0)+\tau/2, & \alpha_3 & = \max(B, 0).
			         \end{aligned}
		      \end{flalign}

		\item [$(ii)$] In the second region we have an optimal unraveling in the form of \cref{eq:cost}, where $\Phi(\rho) = e^{i \frac{\pi}{8} Z} \rho e^{-i \frac{\pi}{8} Z}$ is a $T$ gate and $\Sigma$ is given by \cref{eq:clifford_unraveling_pyramid} with the substitution
		      \begin{equation}
			      \begin{gathered}
				      A\to \frac{A -\bar c/\sqrt{2}}{1-\bar c} ,\qquad B\to \frac{B -\bar c/\sqrt{2}}{1 -\bar c}, \\
				      s_\Lambda \to \frac{s_\Lambda - \bar c}{1 - \bar c}.
			      \end{gathered}
		      \end{equation}

		\item [$(iii)$] In the last region we also find an optimal unraveling as in \cref{eq:cost}.
		      The Clifford part is
		      \begin{equation}
			      \Sigma = \frac{1 - s_\Lambda}{1-\bar c} \mathcal A + \frac{s_\Lambda - \bar c}{1 - \bar c} I,
		      \end{equation}
		      while the non-Clifford operator is a rotation about the $z$ axis by an angle satisfying $\tan(2\theta) = B /\sqrt{\bar c^2 - B^2}$.
	\end{itemize}
\end{theorem}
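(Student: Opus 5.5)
The strategy is to reduce \cref{thm:pauli_noise} to \cref{thm:aligned_dephasing} by slicing the cone $\mathcal C$ at fixed height $t$. First, use the symmetrization argument of \cref{sec:geometric_isotropic_noise}. Since $M_\Lambda$ commutes with every $R_z(k\pi/2)$, any unraveling $\{(p_j,M_{K_j})\}$ of $\Lambda$ can be replaced by $\{(p_j,\sym(M_{K_j}))\}$. This new family still averages to $M_\Lambda$, and it has no larger cost, because conjugating by Clifford $z$-rotations preserves the number of non-Clifford $R_z$ factors. Hence \cref{eq:optimization_general} can be posed on $\mathcal C$, with $\Lambda$ located at $(\zeta,t)=(A+iB,\,2s_\Lambda-1)$. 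Here the identification $f_\bot e^{2i\phi}\mapsto A+iB$ uses the Clifford symmetry of the pyramid to move to the sector $0\le B\le A$, and $t=f_z$.

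Next I would show that any point of $\mathcal C$ is reached at cost at most one by mixing the base circle (single $z$-rotations, cost one or zero) with the apex $\mathcal A$ (cost zero). Costs $m\ge 2$ are then never useful. To see this, note that $\sym$ of any cost-$\ge2$ element lies in $\mathcal C$, which is spanned by apex and base points. I will need to check that this replacement does not increase the average cost; the key fact is that the apex is Clifford. This gives a linear program: write $\Lambda=\sum_j w_j (e^{i\theta_j},1)+w_\mathcal A(0,-1)$ and minimize the total weight on non-Clifford angles. Matching the $t$ coordinate forces $\sum_j w_j=s_\Lambda$ and $w_\mathcal A=1-s_\Lambda$. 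The $\zeta$ coordinate then requires $\sum_j w_j e^{i\theta_j}=A+iB$, so dividing by $s_\Lambda$ gives a base-circle problem for the point $(A+iB)/s_\Lambda$ with total weight $s_\Lambda$.

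That rescaled problem is exactly the aligned-dephasing problem of \cref{thm:aligned_dephasing} with $(a,b)=(A,B)/s_\Lambda$, so its cost equals $s_\Lambda\,\bar c_{\rm dephasing}(A/s_\Lambda,B/s_\Lambda)$. Substituting into \cref{eq:cost_aligned_dephasing} gives the three cases: $A+B\le s_\Lambda$; then $s_\Lambda(a+b-1)/(\sqrt2-1)=(A+B-s_\Lambda)/(\sqrt2-1)$; and then $s_\Lambda\,(b^2+(1-a)^2)/(2(1-a))=(B^2+(s_\Lambda-A)^2)/(2(s_\Lambda-A))$. The region conditions rescale homogeneously in the same way, which reproduces \cref{eq:cost_pauli_noise}.

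The unravelings follow by rescaling the dephasing ones and adding $(1-s_\Lambda)\mathcal A$. In case (i) the diamond coefficients scale by $s_\Lambda$, which yields slack $\tau=s_\Lambda-A-B$. In case (ii), removing the $T$-gate weight $\bar c$ and renormalizing by $1-\bar c$ produces the stated substitutions, including $s_\Lambda\to(s_\Lambda-\bar c)/(1-\bar c)$. In case (iii) the Clifford part is $I$ with weight $s_\Lambda-\bar c$ plus $\mathcal A$, and the angle condition $\tan 2\theta=B/\sqrt{\bar c^2-B^2}$ comes from solving $\bar c\,e^{2i\theta}+(s_\Lambda-\bar c)=A+iB$.

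I expect the main obstacle to be the lower bound: showing that no unraveling does better, in particular that mixing at heights $t<1$ with non-Clifford elements cannot beat the slice decomposition. I would handle this with a homogeneity and convexity argument. The cost-zero set is the cone $\mathcal P$ over the diamond with apex $(0,-1)$, and the full set $\mathcal C$ is the cone over the disk with the same apex. The minimal cost is therefore positively homogeneous along rays from the apex, which justifies the reduction to the base slice. On that slice, optimality is inherited from the dephasing theorem.
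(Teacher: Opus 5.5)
Your proposal is correct and follows essentially the same route as the paper: symmetrize over $\mathcal G$, discard costs $m\geq 2$ because every point of the cone splits into base-circle rotations plus the Clifford apex, and rescale the problem at height $s_\Lambda$ onto the aligned-dephasing problem of \cref{thm:aligned_dephasing}. The only difference is cosmetic. You decompose directly into extreme points, so matching the $t$ coordinate forces the apex weight $1-s_\Lambda$, whereas the paper keeps $\Phi\in\mathcal C$, $\Sigma\in\mathcal P$ and argues that the optimum saturates $\bar c=|v|$. That same step of yours already gives the lower bound you were worried about, once you note via \cref{eq:pyramid} that symmetrized Clifford operators land in $\mathcal P$, so the extra homogeneity argument is not needed.
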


The structure of \cref{thm:pauli_noise} closely mirrors that of \cref{thm:aligned_dephasing} from the previous section.
This underscores the flexibility of our geometric construction for noise models that are more general than the aligned dephasing discussed in \cref{sec:aligned_dephasing}.
One such example is depolarizing noise and, more generally, any instance of Pauli noise with equal weights on the $x$ and $y$ components.

\subsubsection{CAMPS simulations for depolarizing noise}

In close analogy to the analysis presented in the previous section, we now report in \cref{fig:depolarizing} the evolution of the maximal entanglement entropy $S_{\max}$ as a function of $p$ and $N_T/N$, for a depolarizing channel as in \cref{eq:depolarizing} with $p_\bot = p_z = p/3$.

As in the previous case, we identify a range of probabilities -- marked in the plot by dashed white lines -- where the simulation remains efficient at arbitrary circuit depths.
Outside this band, the optimal unraveling renormalizes the efficiency threshold of CAMPS toward higher values, consistent with both our numerical findings and the theoretical prediction (orange dashed line).
Unlike the case of aligned dephasing, the two regions that cannot be fully disentangled are asymmetric, consequence of the fact that no Clifford operator $C$ satisfies $\mathcal N_p = C \mathcal N_{1-p} C^\dagger$.

\subsection{Tilted dephasing}
\label{sec:tilted_dephasing}

The noise family described by the channel in \cref{eq:depolarizing} allows for an unraveling which leads to the emergence of noise-induced classical phases.
Here we show that the existence of an optimal unraveling does not necessarily imply a classical phase.
On the contrary, noise can also hinder the simulability of our doped random Clifford circuit.
To make this statement explicit, in this section we study the most general form of single-qubit dephasing channel
\begin{equation}
	\mathcal{N}_{p,\vn}(\rho) = (1-p)\rho + p(\vn\cdot\vsigma)\rho(\vn\cdot\vsigma),
\end{equation}
where $\vsigma = (X,\,Y,\,Z)$ and
\begin{equation}
	\vn=(\sin\theta\cos\varphi,\,\sin\theta\sin\varphi,\,\cos\theta),
\end{equation}
with $\theta$ and $\varphi$ the polar and azimuthal angles of the noise axis, respectively.
At $(\theta,\varphi)=(0, 0)$ we recover the aligned dephasing studied in \cref{sec:aligned_dephasing}.

When dephasing is not aligned with the $U_\phi$ gate, the geometric picture of the cone and the Clifford pyramid inscribed does not apply anymore.
However, as we show in \cref{sec:proofs_tilted_dephasing}, the properties of the composite channel $\Lambda$ still lend themselves to a simple geometric interpretation.
The reason is rooted into the rigidity of $\Lambda$, whose Bloch matrix must satisfy $M_\Lambda R_z(-2\phi) \vn = \vn$.
As a consequence, each Kraus operator of a valid unraveling of $\Lambda$ can be fully parametrized by a single angle $\psi$ via
\begin{equation}
	M_K(\psi) = R_{\vn}(\psi) R_z(2\phi); \quad \psi \in [0, 2\pi).
\end{equation}
In terms of this angle, the task of finding the optimal unraveling can be recast as an optimization problem defined for a probability measure $\mu$ in the circle:
\begin{equation}
	\label{eq:optimization_tilted}
	\begin{split}
		\bar{c} = & \min_\mu \int \dd \mu(\psi) \, c(M_K(\psi)) \text{ such that} \\
		          & (i) \, \int \dd \mu(\psi) \qty(1 - \cos(\psi)) = 2p,          \\
		          & (ii) \, \int \dd \mu(\psi) \sin(\psi) = 0,                    \\
		          & (iii) \, \int \dd \mu(\psi) = 1.
	\end{split}
\end{equation}
We refer the reader to the appendix for further details and the proof.
We do not provide a closed analytical solution for the problem in \cref{eq:optimization_tilted}, but we present numerical results showcasing the richer phenomenology this model endows.

\begin{figure}
	\centering
	\includegraphics{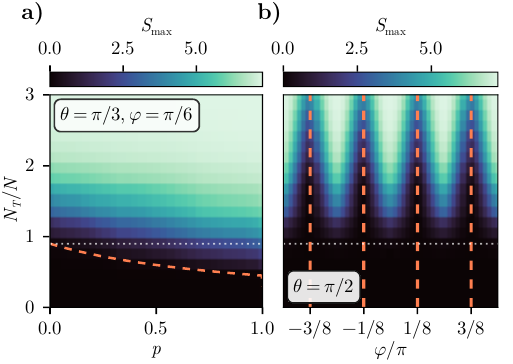}
	\caption{
		CAMPS simulations for tilted dephasing with $\phi = -\pi/8$, using the unraveling obtained by solving \cref{eq:optimization_tilted} numerically.
		(a) Maximal entanglement entropy $S_{\max}$ for the generic noise axis $(\theta, \varphi) = (\pi/3, \pi/6)$, as a function of $p$ and $N_T/N$: noise now hinders the simulation, and the predicted boundary (orange dashed line) falls below the closed-system threshold $N_T \sim N$ (horizontal white dotted line).
		(b) $S_{\max}$ for a noise axis lying on the equatorial plane, $\theta = \pi/2$, as a function of its azimuth $\varphi$: every equatorial axis remains disentanglable at least up to $N_T \sim N$, and the effect is maximal at $\varphi = \pm\pi/8, \pm3\pi/8$ (vertical dotted lines), where $\vn\cdot\vsigma$ and $U_\phi$ compose into a single Clifford operation.
		The dephasing probability is $p=50\%$.
	}
	\label{fig:tilted_dephasing}
	{\phantomsubcaptionlabel{fig:tilted_dephasing_generic}}
	{\phantomsubcaptionlabel{fig:tilted_dephasing_equator}}
\end{figure}

\subsubsection{CAMPS simulations for tilted dephasing}

For selected polar and azimuthal angles $(\theta, \varphi)$ and $\phi = -\pi/8$ (a $T$ gate), we numerically solve the optimization problem in \cref{eq:optimization_tilted} using the JuMP Julia package and the HiGHS algorithm in particular~\cite{lubin2023jump, huangfu2017parallelizing}.
Once the optimal unraveling is found, we simulate the evolution of the noisy quantum circuit using the CAMPS Ansatz and monitor the maximal entanglement entropy.

In \cref{fig:tilted_dephasing_generic} we report $S_{\max}$ for a generic angle $(\theta, \varphi) = (\pi/3, \pi/6)$ as a function of $p$ and $N_T/N$.
As visible, the effect of noise now hinders the classical simulability of the circuit, which can now be disentangled up to $N_T \lesssim N$.
Intuitively, this tilted dephasing noise, in conjunction with $U_\phi$ and the global Clifford unitaries, is injecting non-Clifford entanglement at rate larger than what CAMPS can handle.

Not all noise axes $\vn$ obstruct simulability.
In addition to the aligned dephasing case $\vn = (0, 0, 1)$, in \cref{fig:tilted_dephasing_equator} we show that any axis on the equatorial plane can be disentangled at least up to $N_T \sim N$.
In particular, this phenomenon is maximal for $\varphi = \pm \pi / 8, \pm 3\pi / 8$.
There the Kraus operator $\hat{n}\cdot\sigma$ -- itself a non-Clifford $\pi$-rotation -- combines with $U_\phi$ into a single Clifford operation.

Finally, in \cref{fig:twirled_dephasing_before} we display the optimal cost $\bar c$ found numerically as a function of the dephasing probability $p$, for various pairs $(\theta, \varphi)$.
Only the aligned dephasing case leads to the appearance of a classical phase (extended region with $\bar c = 0$), whereas other choices of $(\theta, \varphi)$ renormalize $\bar c$ with respect to the closed-system result~\cite{fux2025disentangling}.

\begin{figure}
	\centering
	\includegraphics[width=\linewidth]{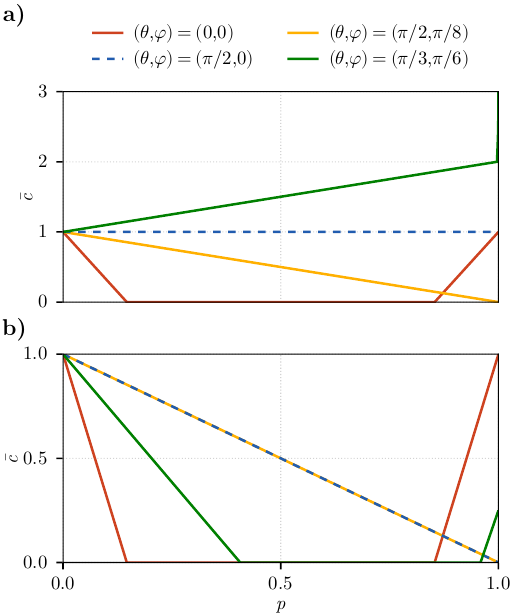}
	\caption{Optimal cost $\bar c$ of the tilted dephasing channel as a function of the dephasing probability $p$, for the noise axes $(\theta, \varphi)$ of the legend and $\phi = -\pi/8$; note the different vertical scales.
		(a) Before twirling: only the aligned axis $(0,0)$ develops a classical phase, an extended interval where $\bar c = 0$, whereas the generic axis $(\pi/3, \pi/6)$ has $\bar c > 1$ and is therefore harder to simulate than the closed-system circuit.
		(b) After twirling with $\mathcal G$: the cost is everywhere smaller than or equal to its untwirled counterpart, as proven in \cref{lemma:twirling}, and the generic axis acquires a classical phase of its own.
		The azimuth drops out of the twirled channel, cf. \cref{app_eq:twirled_channel}, so that the two equatorial axes $\varphi = 0$ and $\varphi = \pi/8$ now coincide.}
	\label{fig:twirled_dephasing}
	{\phantomsubcaptionlabel{fig:twirled_dephasing_before}}
	{\phantomsubcaptionlabel{fig:twirled_dephasing_after}}
\end{figure}

\subsubsection{Clifford-twirling the noise}
\label{sec:twirled_dephasing}

We showed previously that the tilted dephasing noise channel $\mathcal N_{p,\vn}$ makes CAMPS-based simulations inefficient for generic angles.
The same situation is a problem for experimental settings, as coherent errors
can interfere constructively, so that the intended coherent evolution is swamped.
The standard error mitigation technique is \emph{twirling}, which converts complex coherent errors into predictable stochastic noise \cite{bennett1996,wallman2016}.
For a channel $\Lambda$ and a given set of operators $G$, the $G$-twirled channel reads
\begin{equation}
	\label{eq:twirling}
	\Lambda^{\rm{tw}}(\rho) = \frac{1}{|G|}\sum_{g \in G} g^\dagger \Lambda(g\rho g^\dagger) g,
\end{equation}
where $|G|$ is the cardinality of the set.
Operationally, \cref{eq:twirling} corresponds to inserting a unitary $g \in G$ chosen uniformly at random before the channel, followed by its inverse $g^\dagger$ immediately after.
Importantly, twirling \emph{is not} a way of rewriting a channel.
It changes the evolution actually implemented by the noise, while preserving its legitimate coherent part.

As we show in \cref{sec:proofs_tilted_dephasing}, twirling the channel $\Lambda$ with $\mathcal G = \{\mathds{1}, S, Z, S^\dagger\}$ as the twirling set, both preserves the coherent evolution of $U_\phi$ and maps the noise channel to
\begin{equation}
	\mathcal N_{p, \vn}^{\rm tw} = (1-p)\,I + p\frac{\sin^2(\theta)}{2} \,\qty(\mathcal X + \mathcal Y) + p\cos^2(\theta) \, \mathcal Z,
\end{equation}
that is, it maps $\mathcal N_{p, \vn}$ into a Pauli channel isotropic along the equatorial plane.
Importantly, this is exactly the general form in \cref{eq:depolarizing} which we discussed in \cref{sec:pauli_noise}.
Therefore, all derivations and results for the optimal cost $\bar c$ and unraveling apply to twirled dephasing verbatim.

In \cref{fig:twirled_dephasing_after} we display the optimal cost of the twirled tilted dephasing channel for the same combinations of $(\theta, \varphi)$ as in \cref{fig:twirled_dephasing_before}.
We observe that $\bar c$ is always lower or at most equal with respect to the non-twirled counterpart.
Moreover, twirling leads to the emergence of new classical phases characterized by $\bar c = 0$ at angles $(\theta,\varphi)$ that were previously inaccessible to classical computation, for which CAMPS-based simulations are efficient at arbitrary circuit depth.

\section{Nonstabilizerness measures}
\label{sec:magic}

The emergence of classically simulable phases in noisy random quantum circuits is deeply connected to the amount of entanglement that cannot be removed by Clifford transformations.
A natural question is how our findings can be recast in the more general language of quantum resource theories.
The paradigmatic resource in this context is nonstabilizerness (commonly known as \emph{magic}), which quantifies the deviation of a quantum state from the set of stabilizer states~\cite{bravyi2005universal, veitch2014resource}.
In this section we show that the optimal cost $\bar c$ is not merely a property determined by the choice of a specific Ansatz, but is lower bounded by the nonstabilizerness of the noisy gate itself.
Conversely, we show for a specific measure of nonstabilizerness that, alone, it cannot resolve the boundary between the disentanglable and the genuinely quantum phase, so that the two resources are complementary diagnostics.

\subsection{Robustness of magic}

\begin{figure*}[t!]
	\centering
	\includegraphics{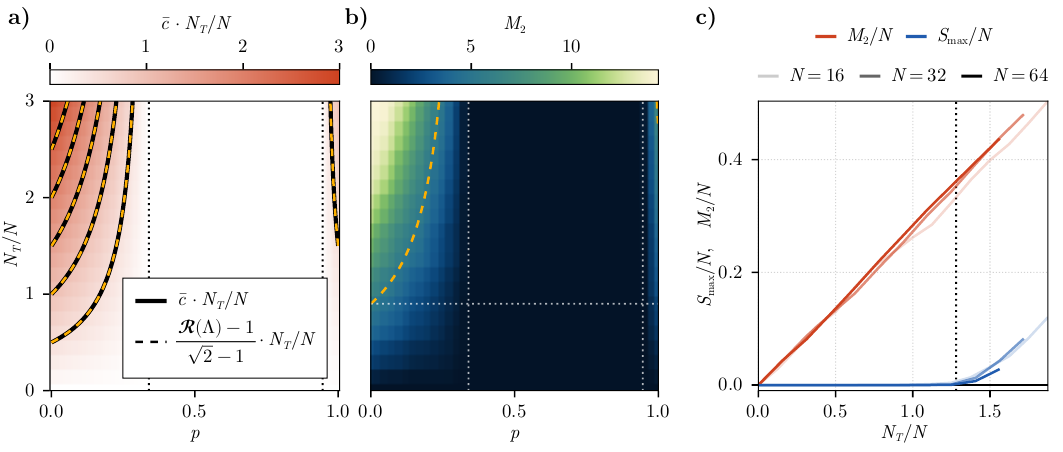}
	\caption{Nonstabilizerness of the noisy gate and of its trajectories, for the circuit of \cref{fig:sketch_circuit} with $\phi = -\pi/8$ and depolarizing noise.
		(a) Optimal cost accumulated up to depth $N_T$, that is $\bar c \, N_T/N$, as a function of $p$ and $N_T/N$.
		The solid lines are its level curves and the dashed ones the contours of the rescaled robustness of magic $(\mathcal R(J_\Lambda) - 1)/(\sqrt 2 - 1) \cdot N_T/N$: the two families coincide, as predicted by \cref{thm:magic}.
		(b) Trajectory-averaged stabilizer R\'enyi entropy $M_2$ over the same plane, with the predicted boundary $N_T \sim N/\bar c$ (dashed line).
		$M_2$ vanishes in the classical phase, but is blind to the boundary between the disentanglable and the genuinely quantum regions that \cref{fig:depolarizing} resolves.
		(c) Cut of the phase diagram at $p = 0.1$ for increasing system size.
		Once rescaled to intensive quantities the curves collapse: $M_2/N$ grows linearly from the first layer, whereas $S_{\max}/N$ departs from zero only at the predicted threshold $N_T \sim N/\bar c$ (vertical dotted line).
	}
	\label{fig:magic}
\end{figure*}

The \emph{Robustness of Magic} (RoM) associated to a state $\rho$, first introduced in Ref.~\cite{howard2017application}, is defined as
\begin{equation}
	\mathcal R(\rho) = \min \qty{ \sum_k |x_k| : \rho = \sum_k x_k \sigma_k },
	\label{eq:robustness_of_magic}
\end{equation}
where the $\sigma_k$ run over the set of pure stabilizer states of $N$ qubits, and $x_k$ real coefficients such that $\sum_k x_k=1$.
Intuitively, $\mathcal R(\rho) = 1$ if and only if $\rho$ admits a decomposition with no negative weight, i.e., if and only if $\rho$ is a convex combination of pure stabilizer states.
Conversely, states that don't admit such a decomposition, will necessarily require negative $x_k$, and the more negativity is needed, the larger $\mathcal R(\rho) > 1$.
This is one of the properties that make the RoM a faithful magic monotone: it is non-increasing under stabilizer operations and under classical randomness~\cite{howard2017application}.

In analogy to quantum states, the RoM can be extended to a quantum channel $\Lambda$ via the Choi–Jamio\l{}kowski isomorphism, which maps $\Lambda$ onto the state
\begin{equation}
	J_\Lambda = (\Lambda \otimes \mathds{1}) \ketbra{\Phi}{\Phi},
\end{equation}
where $\ket{\Phi}$ is a normalized $2$-qubit maximally entangled state.
If $\Lambda$ is a probabilistic mixture of stabilizer operations, then $J_\Lambda$ is a convex combination of stabilizer states, and $\mathcal R(J_\Lambda) = 1$.
Consequently, $\mathcal R(J_\Lambda) > 1$ certifies that $\Lambda$ contains a genuinely non-stabilizer component.
For the unital qubit channels of interest here ($\Lambda(\mathds{1}) = \mathds{1}$), the converse also holds, so that $\mathcal R(J_\Lambda)$ faithfully detects the nonstabilizerness of the channel.
Importantly, since all channel unravelings share the same Choi state, $\mathcal{R}(J_\Lambda)$ does not depend on the unraveling.
Indeed, the minimization in \cref{eq:robustness_of_magic} is over signed decompositions of a fixed geometric object, not over physical realizations.
Notably, for some of the noise models we considered in \cref{sec:classical_phases}, $\mathcal{R}(J_\Lambda)$ can be evaluated in closed form.
For aligned dephasing (cf. \cref{eq:aligned_dephasing}) and depolarizing noise (cf. \cref{eq:depolarizing} with $p_\bot=p_z=p/3$) we find the following expressions, respectively~\cite{howard2017application,seddon2019quantifying}
\begin{equation}
	\begin{split}
		 & \mathcal{R}(J_\Lambda) = \max\left(1, \sqrt{2}|1-2p|\right),                                         \\
		 & \mathcal{R}(J_\Lambda) = \max\left(1, \frac{1+(2\sqrt{2}-1)q}{2}, \frac{1-(2\sqrt{2}+1)q}{2}\right),
	\end{split}
	\label{eqs:robustness_closed}
\end{equation}
where $q = 1-\frac{4}{3}p$.
A thorough derivation of \cref{eqs:robustness_closed} can be found in \cref{sec:proofs_magic}.
The kink in the closed expressions of $\mathcal{R}(J_\Lambda)$ delimits the boundary of the noise-induced classical phase.
The non-analyticity of $\mathcal{R}(J_\Lambda)$ is rooted in the geometric structure of the channel $\Lambda$: inside the Clifford diamond/pyramid, the channel admits a decomposition in stabilizer operations only; outside a non-Clifford operation is required.
The polytope facet is the sharp boundary between the two scenarios, and reflects in the kink of $\mathcal{R}(J_\Lambda)$.
A more in depth discussion of the geometric interpretation of $\mathcal{R}(J_\Lambda)$, which complements the one outlined in \cref{sec:geometric_aligned_dephasing} and \cref{sec:geometric_isotropic_noise}, can be found in the \cref{sec:proofs_magic}.

The main result we derive in this section is that $\mathcal R(J_\Lambda)$ can be formally connected to the optimal cost $\bar c$ defined and discussed extensively in the previous sections, even though $\mathcal R(J_\Lambda)$ is a property of the channel alone, while $\bar c$ is defined operationally by optimizing over the unravelings of $\Lambda$.
We have the following theorem:
\begin{theorem}[Relation between $\bar c$ and RoM]
	\label{thm:magic}
	For any unital qubit channel $\Lambda$, denote with $J_\Lambda$ the associated Choi state.
	The non-Clifford fraction $\bar c$, that is, the cost of an unraveling for a CAMPS-based Ansatz, satisfies
	\begin{equation}
		\label{eq:magic_bound}
		\bar c \geq \frac{3(\mathcal R(J_\Lambda) - 1)}{2\sqrt 2 - 1}.
	\end{equation}
	Moreover, for the circuit in \cref{fig:sketch} with $\phi = -\pi/8$ and a Pauli noise channel isotropic in the equatorial plane, the cost is exactly
	\begin{equation}
		\label{eq:relation_cost_rom}
		\bar c = \frac{\mathcal R(J_\Lambda) - 1}{\sqrt 2 - 1}.
	\end{equation}
\end{theorem}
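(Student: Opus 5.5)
The plan has two parts: a general lower bound, which follows from convexity and subadditivity of the robustness of magic, and an exact identity, obtained by comparing Theorem~\ref{thm:pauli_noise} with the closed forms in \cref{eqs:robustness_closed}.

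\emph{General bound.} Start from an optimal unraveling $\Lambda = \bar c\,\Phi + (1-\bar c)\,\Sigma$, as in \cref{eq:cost}. The Choi map is linear, so $J_\Lambda = \bar c\, J_\Phi + (1-\bar c) J_\Sigma$. Because $\Sigma$ is a mixture of Clifford unitaries, $J_\Sigma$ is a convex combination of stabilizer states. Concatenating decompositions and using the triangle inequality then gives
\begin{equation}
\mathcal R(J_\Lambda) \le \bar c\,\mathcal R(J_\Phi) + (1-\bar c),
\qquad\text{that is,}\qquad
\mathcal R(J_\Lambda) - 1 \le \bar c\,\bigl(\mathcal R(J_\Phi)-1\bigr).
\end{equation}
The non-Clifford component $\Phi$ may itself be a mixture of unitaries of different costs $m$. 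Applying the same convexity argument to each term, and using $\sum_m p_m m$ as the cost, reduces the task to a single-unitary bound $\mathcal R(J_U) - 1 \le \kappa\, c(U)$. The constant should be $\kappa = (2\sqrt2-1)/3$.

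To obtain $\kappa$, I would first show that robustness is submultiplicative under composition, $\mathcal R(J_{U_1U_2}) \le \mathcal R(J_{U_1})\,\mathcal R(J_{U_2})$. This holds because a stabilizer-channel decomposition composes into another one. It reduces the problem to a single $z$-rotation conjugated by Cliffords. Conjugation by Cliffords does not change $\mathcal R$, so it suffices to maximize $\mathcal R(J_{R_z(2\theta)})$ over $\theta$. The Choi state of a qubit rotation is a two-qubit pure state, and its robustness can be computed via the Pauli/stabilizer-polytope facets, as in \cref{sec:proofs_magic}. I expect the maximum to be $1+\kappa$, attained at a $T$-like angle. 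If this maximal value turns out not to be exactly $(2+2\sqrt2)/3$, the same argument still goes through with the correct constant in place of $\kappa$.

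Submultiplicativity alone would give $\prod(1+\kappa)$, which exceeds $1+\kappa\,m$. The linear bound therefore needs a little more care. I would exploit the fact that a cost-$m$ unitary is just a rotation, so its robustness is still at most $\max_\theta$, i.e.\ at most $1+\kappa \le 1+\kappa m$ for every $m\ge1$. This step is the main obstacle: the closed-form maximization of the robustness of a single-qubit rotation channel over the stabilizer polytope, which I would carry out using the facet description from \cref{sec:proofs_magic}.

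\emph{Equality.} For $\phi=-\pi/8$ and $xy$-isotropic Pauli noise, the channel satisfies $A=B=|f_\bot|/\sqrt2$. The remark after Theorem~\ref{thm:aligned_dephasing} shows that only regimes $(i)$ and $(ii)$ of Theorem~\ref{thm:pauli_noise} occur. This gives
\begin{equation}
\bar c = \max\!\left(0, \frac{\sqrt2|f_\bot| - s_\Lambda}{\sqrt2-1}\right).
\end{equation}
I would then compute $\mathcal R(J_\Lambda)$ directly. Since $\Lambda$ is $\mathcal G$-covariant, it suffices to optimize over $\mathcal G$-symmetric stabilizer decompositions, i.e.\ over the pyramid $\mathcal P$ together with its negative extension. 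The optimal signed decomposition mixes the facet point with the $T$ vertex and one extra Clifford vertex with negative weight. This yields $\mathcal R = 1 + (\sqrt2|f_\bot| - s_\Lambda)$ outside $\mathcal P$ and $\mathcal R=1$ inside it. This is consistent with \cref{eqs:robustness_closed}: in the dephasing case, $s_\Lambda=1$ and $f_\bot=1-2p$, which reproduces $\max(1,\sqrt2|1-2p|)$. Comparing the two expressions gives \cref{eq:relation_cost_rom}.
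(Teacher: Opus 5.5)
Your reduction to a per-Kraus estimate $\mathcal R(J_U)-1\le\kappa\,c(U)$ is sound, but the way you propose to prove that estimate fails, so the general bound has a genuine gap. First, the per-rotation value is not $1+\kappa$. By \cref{prop:rom_closed} with $\zeta=e^{2i\theta}$ and $s=1$, one has $\mathcal R(J_{R_z(2\theta)})=|\cos 2\theta|+|\sin 2\theta|\le\sqrt2$. This maximum is attained at the $T$ angle and lies strictly below $1+\kappa=(2+2\sqrt2)/3\approx1.61$.

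Second, and decisively, the fallback ``a cost-$m$ unitary is just a rotation, so its robustness is at most $\max_\theta$'' is false. A rotation about a generic axis is not Clifford-equivalent to a $z$-rotation. Take the rotation $U$ whose Bloch matrix $M$ has columns $(1,-1,0)/\sqrt2$, $(1,1,-2)/\sqrt6$, $(1,1,1)/\sqrt3$. It has a zero entry and no entry $\pm1$, so $c(U)=2$ by \cref{lemma:cost_criterion}. The operator $W=\sx\otimes\sz+\sy\otimes\sz+\sz\otimes\sz$ is a sum of pairwise anticommuting Paulis, so $|\Tr(W\sigma)|\le1$ on every stabilizer state. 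Yet $\Tr(WJ_U)=\sum_i M_{i3}=\sqrt3$, hence $\mathcal R(J_U)\ge\sqrt3>1+\kappa$ by the dual argument of \cref{lemma:rom_witness}. No $m$-independent per-unitary bound can therefore give the constant in \cref{eq:magic_bound}. With the true maximum over unitaries in place of $\kappa$, you would only obtain $\bar c\ge(\mathcal R-1)/\kappa^*$ with $\kappa^*\ge\sqrt3-1>\kappa$, which is weaker.

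The missing idea is the route you discarded, used with the correct per-factor value:
\begin{itemize}
\item Bound each $z$-rotation by its signed Clifford weight over $\mathcal G$, $|\cos2\theta|+|\sin2\theta|\le\sqrt2$.
\item Compose these weights multiplicatively (\cref{lemma:signed_clifford}) to get $\mathcal R(J_K)\le(\sqrt2)^m$. This, rather than submultiplicativity of $\mathcal R(J)$ itself, is what your composition argument actually proves. The stabilizer states in \cref{eq:robustness_of_magic} need not be Choi states of Clifford channels.
\item Use that costs never exceed three (Euler decomposition, \cref{lemma:cost_criterion}), so the chord bound $(\sqrt2)^m\le1+\frac{2\sqrt2-1}{3}m$ applies on $m\in\{0,1,2,3\}$.
\end{itemize}
The constant $(2\sqrt2-1)/3$ is the slope of that chord, not the robustness of any single gate.

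In the equality part, your formula for $\bar c$ is right, with one caveat. The remark after \cref{thm:aligned_dephasing} only covers $s_\Lambda=1$. For general height, ruling out case $(iii)$ needs $\sqrt2A=|f_\bot|=|\zeta_\Lambda|\le s_\Lambda$, i.e.\ membership of $\Lambda$ in the cone.

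The real gap is the computation of $\mathcal R(J_\Lambda)$. Symmetrizing an optimal decomposition does not confine it to the pyramid and its negative extension. $\mathcal R$ is minimized over all two-qubit stabilizer states, and most of them (e.g.\ $\ket{00}$, which is invariant under the $\mathcal G$ action) are not Choi states of unital Clifford channels, so their symmetrizations leave the $(\zeta,t)$ family. A decomposition through the non-stabilizer $T$ vertex is not admissible in \cref{eq:robustness_of_magic} at all. The signed Clifford mixture of weight $\|\zeta_\Lambda\|_1+1-s_\Lambda$ gives only the upper bound. The matching lower bound needs a dual certificate, which the paper supplies through the witness of \cref{lemma:rom_witness}: it is bounded by one on all two-qubit stabilizer states and evaluates to $\|\zeta_\Lambda\|_1+1-s_\Lambda$ on $J_\Lambda$. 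The closed forms in \cref{eqs:robustness_closed} only cover dephasing and depolarizing noise, so they cannot replace this step for the whole isotropic family.
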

The proof of this result is reported in \cref{sec:proofs_magic}.
We want to remark a striking fact of \cref{eq:magic_bound}.
The cost $\bar c$ is a quantity defined operationally: it is the smallest density of non-Clifford operations attainable over all the ways of unraveling $\Lambda$ into quantum trajectories with unitary Kraus operators.
The right-hand side is instead a fixed property of the channel, extracted from an optimization problem over signed decompositions of $J_\Lambda$.
A priori, there is no reason for the two to be related, since infinitely many inequivalent unravelings with widely different non-Clifford content share the same Choi state.
What the theorem states is that the freedom in the unraveling cannot be exploited indefinitely, and no choice of unraveling can push the non-Clifford cost below the magic that the channel inherently carries as a whole.
From this perspective, the classically simulable phases identified in the previous sections admit a resource-theoretic interpretation as the regions in which noise degrades the nonstabilizerness of the channel faster than the circuit injects it.

We report in \cref{fig:magic} the optimal cost rescaled by non-Clifford density $\bar c \cdot N_T/N $, estimated numerically from quantum trajectories for a circuit with $T$ gates and depolarizing noise, together with the rescaled robustness $(\mathcal R(J_\Lambda) - 1)/(\sqrt 2 - 1)$ evaluated from \cref{eqs:robustness_closed}.
The heatmap refers to the optimal cost, with the black lines marking its level curves.
The yellow-dashed lines correspond to a contour plot to the RoM, suitably rescaled as in \cref{eq:relation_cost_rom}.
As predicted by \cref{thm:magic}, the two perfectly agree.

\subsection{Stabilizer Renyi entropy}

A second, possibly more widespread, measure of nonstabilizerness is the \emph{stabilizer Renyi entropy} (SRE)~\cite{leone2022stabilizer}, defined on a pure state $\ket{\Psi}$ as
\begin{equation}
	M_2(\ket{\Psi}) = -\log(\frac{1}{2^N} \sum_{P \in \mathcal P_N} \qty|\bra{\Psi} P \ket{\Psi}|^4),
\end{equation}
where $\mathcal P_N$ is the Pauli group of $N$ qubits.
$M_2$ measures how spread out the state is over the Pauli basis.
A stabilizer state has a uniform distribution of $\qty|\bra{\Psi} P \ket{\Psi}|^2$, giving $M_2 = 0$, while any support beyond a stabilizer group yields $M_2 > 0$.
Being defined on pure states, $M_2$ is naturally suited to the study of individual quantum trajectories rather than of the full channel.

In \cref{fig:magic} we report the trajectory-averaged $M_2$ for the same circuit subject to depolarizing noise.
The comparison with the EE of \cref{fig:depolarizing} highlights that, while $M_2$ does detect the fully classical phase, it is blind to the boundary between the classically disentanglable phase and the genuinely quantum one at $\bar c > 0$.
Taken together, the two diagnostics identify three regions of parameters (cf. \cref{fig:sketch_phases}):
\begin{enumerate}
	\item[I.] A disentanglable region, marked by $S_{\rm max} = 0$ and $M_2 > 0$.
	      Nonstabilizerness is present, yet it can be entirely absorbed into the Clifford part of the Ansatz;
	\item[II.] A fully classical phase which is unambiguously identified by both metrics (both of them vanish);
	\item[III.] A genuinely quantum phase, entered outside the classical region once the circuit is deep enough.
	      Intuitively, this can be understood in light of the fact that, even though noise reduces the rate at which magic is injected, the closed-system behaviour of Ref.~\cite{fux2025disentangling} is recovered at large enough $N_T$.
	      Then, both $S_{\rm max}$ and $M_2$ grow and the resources required by the Ansatz increase exponentially.
\end{enumerate}
The existence of the first region proves why the SRE alone cannot be used as a diagnostic for classical simulability within CAMPS.
What matters is not how much magic the state contains, but whether that magic can be reabsorbed by a Clifford transformation.

Finally, in \cref{fig:magic} we report data along a cut of the phase diagram at $p = 0.1$, for different system sizes.
Once rescaled to intensive quantities, the behavior is consistently found irrespective of the system size.
As $N$ increases, the entanglement-entropy density shows that disentangling the state remains possible up to $N_T \sim N$, the only finite-size effect being that the crossover at $N_T \sim N$ is not sharp.
Similarly, $M_2$ increases linearly already from the first application of the non-Clifford operation, irrespective of the system size.

\section{Disentangling the density operator}
\label{sec:campo_disentangling}

So far we focused on the simulation of open quantum circuits via stochastic quantum trajectories.
A complementary approach consists in evolving the averaged density operator $\rho$ via the channel $\Lambda$ in \cref{eq:lambda_channel} directly.
Since the density operator is the unconditioned average of all Kraus operators, any unraveling must leave the averaged dynamics generated by \cref{eq:lambda_channel} unchanged.
That is, the optimal unraveling search, which we proved to be fundamental for the appearance of classical phases in noisy quantum circuits, loses its applicability when looking at the dynamics of $\rho$.
In this section we ask whether classical phases can nevertheless emerge for the averaged open system dynamics.
To this purpose, we generalize the CAMPS Ansatz to matrix product operators (MPOs) \cite{verstraete2004,zwolak2004}, thus defining the \emph{Clifford-augmented Matrix Product Operator} (CAMPO) Ansatz
\begin{equation}
	\rho_{\rm CAMPO} = C\,\rho_{\rm MPO}\,C^\dagger.
\end{equation}

We consider the circuit in \cref{fig:sketch_circuit} and formulate the following rigorous result valid for the unconditioned dynamics generated by $\Lambda$ under Pauli noise:

\begin{theorem}[Structure of the evolved density operator]
	\label{thm:density_operator}
	Given the channel $\Lambda$ in \cref{eq:lambda_channel} subject to Pauli noise, up to the application of $N_T \sim N$ non-Clifford operators,
	\begin{equation}
		\rho = \tilde{C}\,\Sigma\,\tilde{C}^\dagger,\qquad\Sigma = \sum_\nu q_\nu\bigotimes_{j=1}^N\sigma_j^\nu,
	\end{equation}
	where $\sigma_j^\nu$ denotes the density matrix associated to a single-qubit pure state.
	That is, the density operator is a single-Clifford conjugation of the classical mixture of product states $\Sigma$.
	Moreover, the disentangling threshold is independent of the noise strength and noise contributes only classical correlations, without affecting the entanglement of the individual states in the classical mixture $\Sigma$.
\end{theorem}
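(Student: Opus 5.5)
Our plan is to reduce the statement to the closed-system disentangling result of \cref{sec:closed_system}, and to treat the Pauli noise as a classical random label carried by the trajectories. For Pauli noise the channel $\Lambda$ in \cref{eq:lambda_channel} has Kraus operators $V_j = P_j e^{i\phi Z}$ with $P_j\in\{\mathds{1},X,Y,Z\}$ and probabilities $p_j$. The averaged state after $N_T$ layers is therefore
\begin{equation}
	\rho = \sum_{\mathbf j} p_{\mathbf j}\, W_{\mathbf j}\ketbra{\Psi_0}W_{\mathbf j}^\dagger,
\end{equation}
where $W_{\mathbf j}= P_{j_{N_T}}U_\phi C_{N_T}\cdots P_{j_1}U_\phi C_1$ and $p_{\mathbf j}$ is the product of the $p_{j_t}$.

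The first step is to commute every Pauli to the end of the circuit. Since each $C_t$ is Clifford, $C_t P = P' C_t$ with $P'$ again a Pauli string. The only nontrivial commutation is that of a Pauli with $U_\phi = e^{i\phi Z_1}$. When the Pauli commutes with $Z_1$ it passes through unchanged. When it anticommutes, it turns $U_\phi$ into $U_{-\phi}=U_\phi^\dagger$, which equals $U_\phi$ times the Clifford $e^{-2i\phi Z_1}$ only for special angles. In general it therefore flips the sign of the angle. We then write each trajectory as
\begin{equation}
	W_{\mathbf j} = Q_{\mathbf j}\, C_{N_T} U_{\pm\phi}\cdots C_1 U_{\pm\phi},
\end{equation}
where the Pauli $Q_{\mathbf j}$ collects all the noise and the signs of the rotations are determined by $\mathbf j$.

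The second step, which we expect to be the main obstacle, is to show that a \emph{single} disentangling Clifford $\tilde C$ serves every sign pattern. For this we plan to replay the closed-system construction of Refs.~\cite{fux2025disentangling,liu2026classical}. Each non-Clifford gate, propagated back through the accumulated Clifford, acts on the MPS as $\exp(\pm i\phi\, \tilde P_t)$, with a Pauli string $\tilde P_t$ fixed by the Clifford circuit alone. While $N_T\lesssim N$, the strings $\tilde P_t$ can be chosen independent and commuting with the current stabilizer set. The disentangling Clifford $D_t$ that maps $\tilde P_t$ to a single-site $Z$ on a fresh qubit is then independent of the sign $\pm\phi$, since it depends only on the string.

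We would make this precise by induction on $t$, keeping the invariant that every branch has the form $Q\,\tilde C_t\,\bigotimes_k \ket{\psi_k}$. Here the $\ket{\psi_k}$ are single-qubit states, with $\ket{\psi_k}\propto e^{\pm i\phi Z}\ket{+}$ on the sites already used and $\ket{0}$ otherwise. The additional Pauli $Q$ is pulled inside as $\tilde C_t^\dagger Q \tilde C_t$, again a Pauli string. Acting on a product state, a Pauli string maps it to another product state of single-qubit pure states. This shows that every branch is $\tilde C\bigotimes_j\ket{\psi^{\nu}_j}$ with one common $\tilde C$. We would check that the Pauli frame never alters the choice of $\tilde P_t$: the frame only conjugates the rotation to $\pm\phi$, which leaves the support string unchanged.

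Summing over branches then gives $\rho=\tilde C\,\Sigma\,\tilde C^\dagger$ with $\Sigma=\sum_\nu q_\nu\bigotimes_j\sigma_j^\nu$. The remaining claims follow from this form. The threshold $N_T\sim N$ depends only on the Clifford-determined strings $\tilde P_t$, so it is independent of the noise strength. The noise enters only through the weights $q_\nu$ and local single-qubit rotations, so it contributes classical correlations and leaves every term of $\Sigma$ a product state.
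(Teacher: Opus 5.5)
Your argument is correct, but it handles the noise differently from the paper.

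\textbf{The paper's route.} Each Pauli insertion is kept inside the running disentangling frame. At layer $s$ it becomes $\tilde C_s^\dagger E_s \tilde C_s$, which flips the unconsumed register to some computational basis state $\ket{b}$. The sign of each later rotation is then the eigenvalue $\epsilon=\pm1$ of the dressing stabilizer $S_s$ in $Q_s=\sy_{j_s}S_s$ on that register. The induction therefore needs the stronger structural property that the residual Cliffords $V_s^\dagger$ send flipped registers to computational basis states.

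\textbf{Your route.} You commute all Paulis to the end first, so each trajectory is the noiseless skeleton with angles $\epsilon_s\phi$ followed by a single Pauli string. The unused register then stays in $\ket{0}$ throughout. You only need two facts: the frames $\tilde C_s$ depend solely on the Clifford data and the gate axis, so they are angle-independent; and the residual Cliffords act trivially when the unused register is $\ket{0}$. The final Pauli is absorbed as $\tilde C^\dagger Q\tilde C$ and produces the same finite single-qubit orbit as the paper's Pauli-dressing lemma. This is shorter and uses less of the closed-system construction.

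\textbf{What the paper's route buys.} It exposes the mechanism behind the classical correlations in $\Sigma$: register flips feed the signs of all downstream dressings. The paper reuses this to show that aligned dephasing yields an exact product $\Sigma$ with $\chi=1$. In your picture the same correlations are encoded jointly in the sign pattern $\epsilon_s$ and the final Pauli, so that product structure is harder to read off.

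A few statements need tightening.
\begin{itemize}
\item The closed-system step works when the pulled-back string \emph{anticommutes} with at least one $\sz_j$ on an unconsumed qubit, i.e.\ lies outside the normalizer of the unused register's stabilizer group. A string commuting with the current stabilizers is either a trivial phase or exactly the breakdown case, so your stated condition has it backwards.
\item The localizing Clifford must map the string to $\sx$ or $\sy$ on the fresh qubit, since a $\sz$ rotation on $\ket{0}$ does nothing. Mapping to $\sz$ is fine only after rotating the fresh qubit to $\ket{+}$, as your convention implicitly assumes.
\item Sign-independence of $D_t$ is not enough on its own. The consumed slots now carry trajectory-dependent states $e^{\pm i\phi\sz}\ket{+}$, so you must state that $D_t$ leaves invariant every state whose unused register is $\ket{0}$, whatever sits on the slots. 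The standard construction, built from Paulis controlled by the fresh qubit, has this property.
\item In your rewritten trajectory the layer order should be $U_{\pm\phi}C_t$, not $C_tU_{\pm\phi}$.
\end{itemize}
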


\begin{figure}
	\centering
	\includegraphics[width=\linewidth]{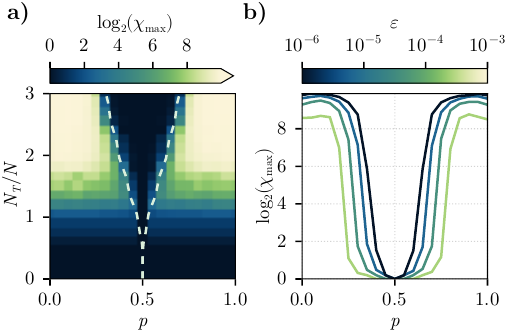}
	\caption{Disentangling the averaged density operator with the CAMPO Ansatz, for aligned dephasing.
		(a) Log-bond dimension $\log_2(\chi_{\max})$ of the inner MPO as a function of $p$ and $N_T/N$.
		Disentangling is possible only up to $N_T \sim N$, the closed-system threshold, independently of the noise strength, as predicted by \cref{thm:density_operator}.
		The region of vanishing bond dimension that opens above it around $p = 1/2$ is not due to disentangling, but to the relaxation of $\rho$ towards $\mathds{1}/2^N$: the dashed lines are the truncation thresholds $p_\pm(N_T)$ of \cref{app_eq:campo_threshold}.
		(b) $\log_2(\chi_{\max})$ as a function of $p$ for decreasing values of the MPO truncation error $\varepsilon$.
		That region shrinks as $\varepsilon \to 0$, confirming that it is an artifact of the truncation.
	}
	\label{fig:campo}
	{\phantomsubcaptionlabel{fig:campo_heatmap}}
	{\phantomsubcaptionlabel{fig:campo_scaling}}
\end{figure}

The proof of \cref{thm:density_operator} is reported in \cref{sec:proofs_campo_disentangling}.
We remark that the theorem predicts that a single Clifford operator is able to disentangle the density matrix regardless of the noise structure up to $N_T\sim N$: exactly the closed-system result~\cite{fux2025disentangling}.
Therefore, we do not expect the CAMPO Ansatz to unveil classical phases for generic noisy quantum dynamics.
Furthermore, the ability of CAMPO to disentangle dynamics up to $N_T \sim N$, does not imply that its simulation will be efficient.
In fact, the inner MPO representing $\Sigma$ will, in general, have to pay bond dimension in order to describe faithfully the classical correlations.
Whether CAMPO would or not allow for efficient simulation entirely depends on the details of the circuit being studied.
For example, the averaged dynamics generated by a circuit of Cliffords + $T$-gates with aligned dephasing approaches the infinite-temperature state $\rho \propto \mathds{1}$ at long times (cf. \cref{sec:proofs_campo_disentangling}).
This state requires bond dimension $\chi=1$, and thus a CAMPO-based Ansatz is able to accurately describe it.

In \cref{fig:campo_heatmap}, we report numerical results supporting our theoretical findings, in the case of aligned dephasing noise.
In particular, we plot the log-bond dimension $\log_2(\chi)$ as a function of $N_T/N$ and $p$, showing that CAMPO can only disentangle up to $N_T \sim N$, mimicking the closed-system result.
In particular, the region with zero log-bond dimension in the central part of the plot \emph{is not} due to a disentangling action operated by the Clifford operator in $\rho_{\rm CAMPO}$, but rather due to the steady-state density operator that only requires $\chi=1$ to be represented.
Two arguments support this claim.
First, in \cref{sec:proofs_campo_disentangling} we estimate the threshold probability at which singular values of the MPO would be truncated assuming the dynamics of the circuit under consideration.
The white dashed lines in \cref{fig:campo_heatmap} display this threshold probability and align well with the numerically-computed $\log_2(\chi)$.
Secondly, and consistently with the first argument, this region shrinks as the truncation error $\varepsilon$ of the MPO is reduced, pointing to the fact that it would disappear altogether for an exact simulation (see \cref{fig:campo_scaling}).

When compared to the data presented in \cref{sec:aligned_dephasing}, it appears evident how the classical simulability of open quantum dynamics changes qualitatively when adopting the density-matrix (averaged) or wave-function (single instance) description.
The explanation relies in the simple inequality
\begin{equation}
	\rho_{\rm CAMPO} = C\sum_{j=1}^{N_{\rm traj}}\frac{\ketbra{\Psi_j}}{N_{\rm traj}}C^\dagger\ne \sum_{j=1}^{N_{\rm traj}}\frac{C_j\ketbra{\Psi_j}C_j^\dagger}{N_{\rm traj}}.
\end{equation}
Unraveling $\Lambda$ in quantum trajectories allows each trajectory to pick \emph{its own} optimal Clifford independently, whereas the density-matrix description is restricted to a unique \emph{global} Clifford, a much more stringent condition.

\section{Discussion}
\label{sec:discussion}
We have shown that the freedom in unraveling in an open quantum circuit dynamics can be exploited to reduce the specific quantum resource that limits a chosen classical representation.
For Clifford circuits doped with noisy non-Clifford rotations, we introduced the average non-Clifford weight $\bar c$ as an operational cost adapted to CAMPS and determined the unraveling that minimizes it.
The resulting cost directly controls the depth up to which the trajectories can be completely Clifford-disentangled, $N_T\sim N/\bar c$.
Most importantly, when $\bar c=0$, the noisy channel admits a purely Clifford unraveling and the trajectories remain disentanglable at arbitrary circuit depths, giving rise to a noise-induced classical phase.

Using a geometric representation of single-qubit channels, we obtained the optimal unraveling analytically for aligned dephasing and for a broad family of Pauli noise, including depolarizing noise.
CAMPS simulations quantitatively reproduce the corresponding phase boundaries and demonstrate that the emergent classicality is intrinsically tied to the choice of unraveling: a naive Kraus decomposition of the same physical channel does not reveal the classical phase.
Conversely, noise does not universally aid classical simulation.
For tilted dephasing, the noise can increase the effective non-Clifford cost and hinder the ability to disentangle, while twirling the channel can reduce this cost and recover classically accessible regions.
Thus, it is the geometry of the noisy channel relative to the efficiently simulable manifold, rather than the noise strength alone, that determines whether dissipation simplifies the dynamics.

Our results also clarify the role of quantum resources in this transition.
The robustness of magic of the Choi state provides an unraveling-independent lower bound on the optimal cost and, for the Pauli-noise family considered here, is directly proportional to $\bar c$.
At the trajectory level, however, nonstabilizerness and residual entanglement provide complementary information.
Their interplay separates the dynamics into a disentanglable regime, where finite magic can still be accommodated without residual MPS entanglement; a classical phase, where both quantities vanish; and a quantum regime, where the non-Clifford resource can no longer be absorbed by Clifford transformations and the resources required by CAMPS grow rapidly.
This shows that neither entanglement nor nonstabilizerness alone determine classical simulability: the relevant quantity is the portion of the quantum resource that cannot be reorganized within the efficiently simulable sector of the representation.

The comparison with the unconditional dynamics further highlights the special role of quantum trajectories.
For the averaged density operator, a single Clifford frame disentangles the dynamics only up to the closed-system scale $N_T\sim N$, independently of the noise strength, and the unraveling freedom responsible for the trajectory classical phase is lost.
Optimal unravelings should therefore be viewed not merely as convenient representations of open-system dynamics, but as an additional computational degree of freedom that can expose classical structure hidden in the density-operator description.

Several directions follow naturally from these results.
An important extension is to continuous-time open dynamics governed by Lindblad master equations.
There, the freedom in choosing quantum-jump or diffusive unravelings could be optimized with respect to a resource-adapted cost, potentially defining a continuous-time rate of non-Clifford resource injection and establishing whether analogous classical phases survive in Hamiltonian many-body dynamics with dissipation.
A second direction is to replace the stabilizer manifold by other efficiently simulable structures.
For fermionic systems, Gaussian states and matchgate circuits provide the natural analogue of stabilizer states and Clifford dynamics.
Optimizing the unraveling of noisy doped-matchgate circuits with respect to the injected non-Gaussian resource could therefore reveal noise-induced Gaussian phases and determine whether irreducible fermionic non-Gaussianity plays the same role that the non-Clifford cost plays here.
More broadly, our results suggest a representation-dependent principle for open-system simulation: the useful unraveling is the one that minimizes the resource obstructing the efficiently simulable manifold on which the classical algorithm is built.

\begin{acknowledgments}
	We acknowledge useful discussions with Piotr Sierant, Vincenzo Savona, Marcello Dalmonte, Mario Collura, Alessio Paviglianiti and Alberto Mercurio.
	L.F. and F.F. acknowledge support by the Swiss National Science Foundation through Projects No. 200020\_215172, 200021-227992, and 20QU-1\_215928, and as part of NCCR SPIN (grant number 225153).
	E.T. was funded by the Swiss National Science Foundation (SNSF) under Grant No. TMPFP2\_234754. E.T. acknowledges CINECA (Consorzio Interuniversitario per il Calcolo Automatico) award, under the ISCRA initiative and Leonardo early access program, for the availability of high-performance computing resources and support.
\end{acknowledgments}

\appendix
\crefalias{section}{appendix}

\section{Proofs for aligned dephasing}
\label{sec:proofs_aligned_dephasing}

This appendix proves the results stated in \cref{sec:aligned_dephasing}.
The argument proceeds in two steps.
We first show that aligned dephasing constrains its unravelings so tightly that the search for the optimum can be transcribed into the complex plane: every admissible Kraus operator is a rotation about the $z$ axis, and the Clifford ones mix exactly into the diamond of \cref{fig:clifford_diamond} (\cref{app:aligned_reduction}).
The cost $\bar c$ from \cref{eq:cost} is then the solution of an elementary planar problem, which we solve in closed form in \cref{app:aligned_optimum}, thereby proving \cref{thm:aligned_dephasing}.

\subsection{Reduction to a planar problem}
\label{app:aligned_reduction}

\begin{lemma}[Admissible Kraus operators]
	\label{lemma:z_rotations}
	Let $\Lambda$ be a qubit channel whose Bloch matrix satisfies $(M_\Lambda)_{33} = 1$, and let $\Lambda(\rho)=\sum_j\lambda_jK_j\rho K_j^\dagger$ be any unraveling of $\Lambda$ into unitary Kraus operators.
	Then every $K_j$ is a rotation about the $z$ axis.
	Among these, the Clifford ones are exactly the four elements of $\mathcal G = \{\mathds{1}, S, \sz, S^\dagger\}$, the cyclic group of order four generated by the phase gate $S$.
\end{lemma}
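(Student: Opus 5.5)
The argument is an extremal-point argument in the Bloch (PTM) representation. Write each unitary Kraus operator $K_j$ through its Bloch matrix $M_j \in \so{3}$. The unraveling then reads $M_\Lambda = \sum_j \lambda_j M_j$ with $\lambda_j > 0$ and $\sum_j \lambda_j = 1$; the identity row and column of the PTM are trivial because the $K_j$ are unitary. Taking the $(3,3)$ entry gives
\begin{equation}
	1 = (M_\Lambda)_{33} = \sum_j \lambda_j (M_j)_{33}.
\end{equation}
Each $M_j$ is orthogonal, so every entry obeys $|(M_j)_{33}| \le 1$. A convex combination of numbers bounded by one can equal one only if each of them equals one. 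Hence $(M_j)_{33} = 1$ for every $j$ with $\lambda_j > 0$.

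Next I use orthogonality to pin down the third row and column. The third column of $M_j$ is a unit vector whose third component is $1$, so its other two components vanish, and the same holds for the third row. Thus $M_j$ is block-diagonal: a $2\times 2$ block $R_j$ on the $XY$ plane, together with the entry $1$. Since $\det M_j = 1$, we also have $\det R_j = 1$, so $R_j \in \so{2}$ and $M_j = R_z(2\theta_j)$ for some angle $\theta_j$. Every qubit unitary whose Bloch matrix is a $z$ rotation equals $e^{i\theta_j Z}$ up to a global phase, because $U(2)\to\so{3}$ is surjective with kernel the phases. The global phase does not affect the channel, so each $K_j$ is a rotation about the $z$ axis.

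For the Clifford claim, note that a single-qubit unitary is Clifford exactly when its Bloch matrix permutes the signed coordinate axes $\{\pm\hat x, \pm\hat y, \pm\hat z\}$. For a $z$ rotation $R_z(2\theta)$ this forces $\hat x \mapsto (\cos 2\theta, -\sin 2\theta, 0)$ to be a signed axis, i.e.\ $2\theta \in \frac{\pi}{2}\mathbb{Z}$. Modulo global phase, this leaves exactly the four rotations by $0, \pi/2, \pi, 3\pi/2$. These are $\mathds{1}$, $S$ (or $S^\dagger$), $Z$, and $S^\dagger$ (or $S$), matching the coordinates $z_{\mathds{1}}=1$, $z_S=-i$, $z_Z=-1$, $z_{S^\dagger}=i$ listed earlier. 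They form the cyclic group generated by $S$, since $S^2 = Z$ and $S^4 = \mathds{1}$ up to phase.

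The only step needing care is the extremality argument. The convex-combination bound requires all weights to be strictly positive, so Kraus operators with zero weight must be discarded; otherwise the argument is immediate. I should also check that the aligned dephasing channel really satisfies $(M_\Lambda)_{33}=1$, which follows from \cref{eq:aligned_dephasing_ptm}.
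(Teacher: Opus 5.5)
Your proposal is correct and follows the paper's proof. The convexity bound on $(M_j)_{33}$ forces every Bloch matrix to fix the $z$ axis, hence to equal $R_z(2\theta_j)$, and the Clifford condition then reduces to $2\theta_j \in \frac{\pi}{2}\mathbb{Z}$. You also spell out details the paper leaves implicit: orthonormality of the third row and column, the phase kernel of $U(2)\to\mathrm{SO}(3)$, and discarding zero-weight Kraus operators.
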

\begin{proof}
	Let $M_j \in \so{3}$ be the Bloch matrix associated to the Kraus operator $K_j$, so that $(M_\Lambda)_{33} = \sum_j \lambda_j (M_j)_{33}$ by linearity.
	Every rotation obeys $|(M_j)_{33}| \leq 1$, and the weights $\lambda_j$ are non-negative and sum to one.
	Therefore, $1 = \sum_j \lambda_j (M_j)_{33} \leq \sum_j \lambda_j = 1$, which can only happen if $(M_j)_{33} = 1 \, \forall j$.
	A rotation that leaves the $z$ axis invariant is a rotation about it, that is, $M_j = R_z(2\theta_j)$ and $K_j = e^{i\theta_j \sz}$.
	Such an operator maps the Pauli group onto itself if and only if $2\theta_j$ is an integer multiple of $\pi/2$, which singles out $\mathds{1}$, $S$, $\sz$ and $S^\dagger$.
\end{proof}

The composite channel $\Lambda = \mathcal N_p \circ \mathcal U_\phi$ satisfies the hypothesis of \cref{lemma:z_rotations}, since aligned dephasing leaves the $z$ axis untouched, cf.~\cref{eq:aligned_dephasing_ptm}.
Every Kraus operator of every unraveling of $\Lambda$ is therefore a rotation about $z$, and is faithfully described by the single complex number introduced in \cref{sec:geometric_aligned_dephasing}, which encodes the $xy$ block of its Bloch matrix.
The channel to be reproduced sits at $z_\Lambda = f(\cos(2\phi) + i\sin(2\phi))$, and each $K_j$ carries a cost of zero or one, according to whether it belongs to $\mathcal G$ or not.
The next lemma identifies the set of coordinates that are reachable at zero cost.

\begin{lemma}[Clifford mixtures]
	\label{lemma:clifford_channel}
	Let
	\begin{equation}
		\label{app_eq:clifford_channel}
		\Sigma = \alpha_0 I + \alpha_1 \mathcal S + \alpha_2 \mathcal Z + \alpha_3 \mathcal S^\dagger,
	\end{equation}
	with $\alpha_k \geq 0$ and $\sum_k \alpha_k = 1$, be a mixture of the Clifford channels associated to $\mathcal G$.
	Its complex coordinate is $z_\Sigma = (\alpha_0 - \alpha_2) + i (\alpha_3 - \alpha_1)$ and, as the weights vary, it covers exactly the closed $\ell_1$ ball
	\begin{equation}
		\label{app_eq:diamond}
		\mathcal D = \qty{z = a + i b : ||z||_1 := |a| + |b| \leq 1},
	\end{equation}
	that is, the diamond with vertices $\mathds{1}$, $S$, $\sz$ and $S^\dagger$.
\end{lemma}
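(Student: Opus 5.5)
The plan has two parts. First I compute the coordinate of $\Sigma$ by linearity. Then I show set equality with $\mathcal D$ through two inclusions.

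\emph{Coordinate.} The PTM of a mixture is the same mixture of the PTMs, so the $xy$ block of $\Sigma$ is $\sum_k \alpha_k$ times the blocks of the four Clifford rotations. The complex coordinate therefore satisfies $z_\Sigma=\sum_k \alpha_k z_{G_k}$. Substituting $z_{\mathds 1}=1$, $z_S=-i$, $z_Z=-1$, $z_{S^\dagger}=i$ from \cref{sec:geometric_aligned_dephasing} gives
$z_\Sigma=(\alpha_0-\alpha_2)+i(\alpha_3-\alpha_1)$.
The only thing to check is the sign convention for $S$ versus $S^\dagger$, which follows from $R(2\phi)$ at $2\phi=\mp\pi/2$.

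\emph{Inclusion $\subseteq$.} By the triangle inequality,
$|a|+|b|=|\alpha_0-\alpha_2|+|\alpha_3-\alpha_1|\le \sum_k\alpha_k=1$.
Equivalently, $z_\Sigma$ is a convex combination of the four points $\pm1,\pm i$, and these are the extreme points of the $\ell_1$ unit ball. So $z_\Sigma$ lies in $\mathcal D$.

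\emph{Inclusion $\supseteq$.} Given $z=a+ib$ with $|a|+|b|\le1$, I use the explicit weights of \cref{thm:aligned_dephasing}$(i)$. Set the slack $\tau=1-|a|-|b|\ge0$ and take $\alpha_0=\max(a,0)+\tau/2$, $\alpha_2=\max(-a,0)+\tau/2$, $\alpha_3=\max(b,0)$, $\alpha_1=\max(-b,0)$.

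These weights are all nonnegative. They also sum to one: $\max(a,0)+\max(-a,0)=|a|$ and similarly for $b$, so the total is $|a|+|b|+\tau=1$. Finally they reproduce the target coordinate, since $\alpha_0-\alpha_2=\max(a,0)-\max(-a,0)=a$ and $\alpha_3-\alpha_1=b$. The slack is split equally between $I$ and $\mathcal Z$, whose coordinates cancel, so it pads the total weight without moving the point.

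None of these steps is a real obstacle. The only point needing care is the orientation convention for $z_S$ versus $z_{S^\dagger}$: a sign slip there would swap $\alpha_1$ and $\alpha_3$ but would leave the covered set unchanged, since $\mathcal D$ is symmetric under $b\to-b$.
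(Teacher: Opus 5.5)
Your proposal is correct and follows the paper's proof essentially step for step. You read off $z_\Sigma=(\alpha_0-\alpha_2)+i(\alpha_3-\alpha_1)$ by linearity of the $xy$ blocks, prove containment in $\mathcal D$ via $|\alpha_0-\alpha_2|+|\alpha_3-\alpha_1|\le\sum_k\alpha_k=1$, and prove surjectivity with the same explicit weights padded by the slack $\tau=1-|a|-|b|$ split equally between $I$ and $\mathcal Z$. Your sign convention $z_S=-i$ agrees with the paper's $R_S$.
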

\begin{proof}
	The $xy$ blocks of the Bloch matrices relative to the operators $\mathds{1}$, $S$, $\sz$ and $S^\dagger$ are, respectively,
	\begin{flalign}
		 & \begin{aligned}
			   R_{\mathds{1}} & = \begin{pmatrix} 1 & 0\\ 0 & 1 \end{pmatrix},   & R_{S}         & = \begin{pmatrix} 0& -1\\ 1 & 0 \end{pmatrix},  \\
			   R_{Z}          & = \begin{pmatrix} -1 & 0\\ 0 & -1 \end{pmatrix}, & R_{S^\dagger} & = \begin{pmatrix} 0& 1\\ -1 & 0  \end{pmatrix},
		   \end{aligned}
	\end{flalign}
	so that, by linearity, the block associated to $\Sigma$ is the one of a coordinate $z_\Sigma = a + i b$ with $a = \alpha_0 - \alpha_2$ and $b = \alpha_3 - \alpha_1$.
	The image is contained in $\mathcal D$, because
	\begin{equation}
		||z_\Sigma||_1 = |\alpha_0 - \alpha_2| + |\alpha_3 - \alpha_1| \leq \alpha_0 + \alpha_1 + \alpha_2 + \alpha_3 = 1.
	\end{equation}
	Conversely, given $z = a + i b$ with $|a| + |b| \leq 1$, introduce the slack $\tau = 1 - |a| - |b| \geq 0$ and the weights
	\begin{flalign}
		 & \begin{aligned}
			   \alpha_0 & = \max(a, 0)+\tau/2,  & \alpha_1 & = \max(-b, 0), \\
			   \alpha_2 & = \max(-a, 0)+\tau/2, & \alpha_3 & = \max(b, 0).
		   \end{aligned}
	\end{flalign}
	They are non-negative and add up to $|a| + |b| + \tau = 1$.
	Moreover, the identity $x = \max(x, 0) - \max(-x, 0)$ gives
	\begin{equation}
		\begin{split}
			\alpha_0 - \alpha_2 = \max(a, 0) - \max(-a, 0) = a, \\
			\alpha_3 - \alpha_1 = \max(b, 0) - \max(-b, 0) = b,
		\end{split}
	\end{equation}
	so that every point of $\mathcal D$ is attained.
\end{proof}

Taken together, the two lemmas turn the search for the optimal unraveling into a two-dimensional geometric problem.
Group the Kraus operators of an arbitrary unraveling of $\Lambda$ according to their cost, the Clifford ones carrying a total weight $1 - \bar c$ and the remaining rotations a total weight $\bar c$, and collect each group into a single channel.
This is precisely the decomposition of \cref{eq:cost}, $\Lambda = \bar c \, \Phi + (1 - \bar c) \Sigma$, in which $\bar c$ is by construction the mean number of non-Clifford operations applied per time step.
Coordinates combine linearly, $z_\Lambda = \bar c \, z_\Phi + (1-\bar c) z_\Sigma$, and the two components are constrained differently: $\Sigma$ is admissible if and only if $z_\Sigma \in \mathcal D$ by \cref{lemma:clifford_channel}, whereas $\Phi$, being a mixture of rotations, is only bound to the unit disk $\mathcal B$ by $|z_\Phi| \leq 1$.
Absorbing the weight into the non-Clifford coordinate through $v = \bar c \, z_\Phi$, the two conditions become $|v| \leq \bar c$ and $||z_\Lambda - v||_1 \leq 1 - \bar c$.
Lowering $\bar c$ tightens the former and relaxes the latter, so that the optimum saturates $\bar c = |v|$ and
\begin{equation}
	\label{app_eq:aligned_optimization}
	\bar c = \min_{v \in \mathbb{C}} |v| \quad \text{subject to} \quad ||z_\Lambda - v||_1 + |v| \leq 1.
\end{equation}
A minimizer of \cref{app_eq:aligned_optimization} reconstructs the whole unraveling: a single non-Clifford rotation $z_\Phi = v / |v|$, selected with probability $\bar c = |v|$, together with the Clifford mixture $z_\Sigma = (z_\Lambda - v)/(1 - \bar c)$, whose weights are those of \cref{lemma:clifford_channel}.
The degenerate case $v = 0$ corresponds to an unraveling that needs no non-Clifford Kraus operator at all.
Geometrically, the constraint in \cref{app_eq:aligned_optimization} requires $z_\Lambda$ to lie in $\bar c \, \mathcal B + (1 - \bar c) \mathcal D$: the optimal cost is the smallest weight for which this body, which interpolates between the Clifford diamond at $\bar c = 0$ and the unit disk at $\bar c = 1$, has inflated enough to reach the channel.

\subsection{Proof of the optimal unraveling and cost}
\label{app:aligned_optimum}

We can now solve \cref{app_eq:aligned_optimization} in closed form.
Write $z_\Lambda = a + i b$ and assume $0 \leq b \leq a$, as in the statement of \cref{thm:aligned_dephasing}.
This is no loss of generality: composing $\Lambda$ with the Clifford $S$ multiplies its coordinate by $-i$, while conjugating $\Lambda$ with $X$ maps the coordinate to its complex conjugate.
Both operations are free of cost and map $\mathcal D$ onto itself; together they generate the eight symmetries of the diamond, whose fundamental domain is exactly the sector $0 \leq b \leq a$.
In this sector $a = |f \cos(2\phi)|$ and $b = |f \sin(2\phi)|$.

If $a + b \leq 1$ the channel lies inside the Clifford diamond, $v = 0$ is feasible and therefore optimal.
No non-Clifford Kraus operator is needed, $\bar c = 0$, and \cref{lemma:clifford_channel} returns the Clifford weights quoted in case $(i)$ of \cref{thm:aligned_dephasing}, with slack $\tau = 1 - a - b$.

Assume from now on $a + b > 1$, so that no purely Clifford unraveling exists and the minimizer is non-zero.
Note that this forces $a < 1$, since $a = 1$ would imply $b = 0$ and hence $a + b = 1$.
Two reductions bring \cref{app_eq:aligned_optimization} down to one dimension.
First, the minimizer can be taken in the rectangle $\mathcal V = \qty{v = v_1 + i v_2 : 0 \leq v_1 \leq a, \, 0 \leq v_2 \leq b}$: clipping either component to its interval does not increase $|v|$ nor $||z_\Lambda - v||_1$, hence it preserves feasibility while improving the objective.
Second, inside $\mathcal V$ the $\ell_1$ distance collapses to $||z_\Lambda - v||_1 = a + b - (v_1 + v_2)$, so that the only relevant quantities are the norms $|v|$ and $||v||_1 = v_1 + v_2$.
It is then natural to scan the problem at fixed radius $t \geq 0$.
Introducing the constrained maximum
\begin{equation}
	\label{app_eq:aligned_function_m}
	m(t) = \max \qty{v_1 + v_2 : v \in \mathcal V, \, |v| \leq t},
\end{equation}
the function
\begin{equation}
	\label{app_eq:aligned_function_G}
	G(t) = a + b - m(t) + t
\end{equation}
is the smallest value that $||z_\Lambda - v||_1 + t$ attains in the disk of radius $t$.
Whenever $G(t) \leq 1$, the maximizer of \cref{app_eq:aligned_function_m} is a feasible point of \cref{app_eq:aligned_optimization} of radius at most $t$, so that $\bar c \leq t$.
Conversely, an optimal $v$ satisfies $G(|v|) \leq ||z_\Lambda - v||_1 + |v| \leq 1$, so that $\bar c$ itself belongs to the set of radii meeting the constraint.
The cost is therefore the first radius at which $G$ drops below one,
\begin{equation}
	\label{app_eq:aligned_scalar_problem}
	\bar c = \min \qty{t \geq 0 : G(t) \leq 1}.
\end{equation}

The maximum in \cref{app_eq:aligned_function_m} is elementary.
By Cauchy-Schwarz $v_1 + v_2 \leq \sqrt2 \, |v| \leq \sqrt 2\, t$, with equality at $v = (1 + i) \, t/\sqrt2$, a point that belongs to $\mathcal V$ if and only if $t \leq \sqrt2 \, b$, having assumed $b \leq a$.
When $t > \sqrt2\, b$ the constraint $v_2 \leq b$ becomes active, and the remaining budget goes into $v_1 = \sqrt{t^2 - b^2}$.
Therefore, explicitly,
\begin{equation}
	\label{app_eq:aligned_G_explicit}
	G(t) = \begin{dcases}
		a + b - (\sqrt2 - 1)\, t & \text{if } t \leq \sqrt 2 \, b, \\
		a + t - \sqrt{t^2 - b^2} & \text{if } t \geq \sqrt 2 \, b,
	\end{dcases}
\end{equation}
where the two expressions agree at $t = \sqrt2\, b$, both equal to $a + (\sqrt2 - 1) b$.
\cref{app_eq:aligned_G_explicit} is exact on $0\le t\le|z_\Lambda|$, and this is the only
range we need.
On the first branch the restriction is automatic, since $t\le\sqrt2\,b\le\sqrt{a^2+b^2}$ whenever $b\le a$; on the second it is the condition $\sqrt{t^2-b^2}\le a$ that keeps the maximizer inside the rectangle $\mathcal V$.
There $G$ is continuous and strictly decreasing, its derivative being $1-\sqrt2<0$ on the first branch and $1-t/\sqrt{t^2-b^2}<0$ on the second; it starts above the constraint, $G(0)=a+b>1$, and at the right endpoint it has already dropped to $G(\sqrt{a^2+b^2})=\sqrt{a^2+b^2}=|z_\Lambda|\le1$.
Hence $G(t)=1$ has a unique root in $[0,\sqrt{a^2+b^2}]$, and by \cref{app_eq:aligned_scalar_problem} this root is the optimal cost $\bar c$.
The two remaining cases of \cref{thm:aligned_dephasing} simply distinguish whether it falls below or above $\sqrt2\, b$.

\emph{Case $(ii)$.}
If the root satisfies $\bar c \leq \sqrt2\, b$, solving $a + b - (\sqrt2 - 1) \bar c = 1$ gives
\begin{equation}
	\bar c = \frac{a + b - 1}{\sqrt2 - 1},
\end{equation}
and this assumption translates into the condition $a + (\sqrt2 - 1) b \leq 1$ quoted in the theorem.
The minimizer is $v = (1 + i)\, \bar c/\sqrt2$, which lies in $\mathcal V$ since $v_1 = \bar c /\sqrt 2 \leq b \leq a$, and the non-Clifford Kraus operator sits at
\begin{equation}
	z_\Phi = \frac{v}{|v|} = \frac{1 + i}{\sqrt 2}.
\end{equation}
It is a rotation of the Bloch sphere by $\pi/4$ about the $z$ axis, that is, a $T$ gate, irrespective of $\phi$ and $p$.
The Clifford component is the non-trivial mixture $z_\Sigma = (z_\Lambda - v)/(1 - \bar c)$, obtained from \cref{lemma:clifford_channel} through the substitution $a \to (a - \bar c/\sqrt2)/(1 - \bar c)$ and $b \to (b - \bar c/\sqrt2)/(1 - \bar c)$.

\emph{Case $(iii)$.}
Otherwise $\bar c \geq \sqrt2\, b$, and $a + \bar c - \sqrt{\bar c^2 - b^2} = 1$ gives
\begin{equation}
	\bar c = \frac{b^2 + (1-a)^2}{2(1-a)},
\end{equation}
which is well defined because $a < 1$.
The minimizer is $v = \sqrt{\bar c^2 - b^2} + i \, b$ with $\sqrt{\bar c^2 - b^2} = \bar c - (1 - a)$, and it lies in $\mathcal V$ as well: the only condition left to check, $v_1 \leq a$, amounts to $\bar c \leq 1$ and hence to $b^2 \leq 1 - a^2$, which is guaranteed by $|z_\Lambda| \leq 1$.
The non-Clifford Kraus operator now depends on both $\phi$ and $p$:
\begin{equation}
	z_\Phi = \frac{\sqrt{\bar c^2 - b^2} + i\, b}{\bar c},
\end{equation}
a rotation about the $z$ axis by an angle $2\theta$ with $\tan(2\theta) = b/\sqrt{\bar c^2 - b^2}$.
The Clifford component, instead, degenerates to the identity: the imaginary parts of $z_\Lambda$ and $v$ cancel and $a - v_1 = 1 - \bar c$, so that
\begin{equation}
	z_\Sigma = \frac{z_\Lambda - v}{1 - \bar c} = \frac{a - v_1}{1 - \bar c} = 1.
\end{equation}

This exhausts the three cases and proves \cref{thm:aligned_dephasing} in the sector $0 \leq b \leq a$, from which the symmetry argument above extends it to the whole complex plane.
The geometry of the two non-trivial regimes is illustrated in \cref{app_fig:aligned_minimizer}.
In \cref{app_fig:aligned_minimizer_T} the minimizer lies in the interior of $\mathcal V$, the optimal non-Clifford Kraus operator is a $T$ gate and the Clifford part is a non-trivial mixture of the four elements of $\mathcal G$.
In \cref{app_fig:aligned_minimizer_id} the constraint $v_2 = b$ is instead active, the Clifford part collapses onto the vertex $\mathds{1}$ and the non-Clifford Kraus operator is a generic rotation about $z$.

\begin{figure}[t!]
	\centering
	\includegraphics{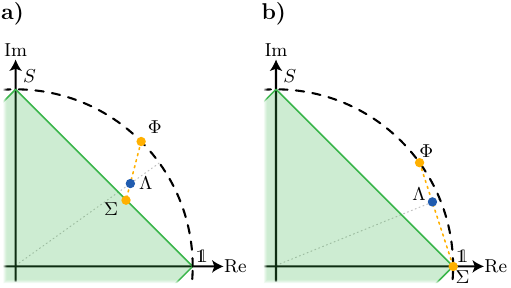}
	\caption{Optimal unraveling in the complex plane. The shaded region is the first quadrant of the Clifford diamond, the dashed line is the unit circle. The channel $\Lambda$ is decomposed into the non-Clifford component $\Phi$, of weight $\bar c$, and the Clifford mixture $\Sigma$. (a) Case $(ii)$ of \cref{thm:aligned_dephasing}: $\Phi$ is a $T$ gate and $\Sigma$ a non-trivial mixture of the four Clifford rotations. (b) Case $(iii)$: $\Sigma$ collapses onto the identity and $\Phi$ is a rotation about $z$ that depends on $\phi$ and $p$.}
	\label{app_fig:aligned_minimizer}
	{\phantomsubcaptionlabel{app_fig:aligned_minimizer_T}}
	{\phantomsubcaptionlabel{app_fig:aligned_minimizer_id}}
\end{figure}

\section{Proofs for Pauli noise isotropic on the equatorial plane}
\label{sec:proofs_pauli_noise}

This appendix proves the results stated in \cref{sec:pauli_noise}, following the same route as \cref{sec:proofs_aligned_dephasing}.
The noise is no longer confined to the $z$ axis, so \cref{lemma:z_rotations} does not apply and the Kraus operators of an unraveling can be arbitrary rotations.
What replaces it is a symmetry: the Bloch matrix $M_\Lambda$ associated to the channel $\Lambda$ commutes with every $g \in \mathcal G$, and averaging over $\mathcal G$ collapses $\so 3$ onto a two-parameter family at no cost, so that channels are again described by a complex number together with a height (\cref{app:pauli_symmetrization}).
In these coordinates the admissible channels fill the cone $\mathcal C$, the zero-cost ones the pyramid $\mathcal P \subset \mathcal C$, and a single non-Clifford Kraus operator again suffices (\cref{app:pauli_geometry}).
What is left is the planar problem of \cref{app_eq:aligned_optimization}, rescaled by the height of $\Lambda$, which we solve in \cref{app:pauli_optimum} to prove \cref{thm:pauli_noise}.

\subsection{Symmetrization and reduced coordinates}
\label{app:pauli_symmetrization}

Throughout the appendix $\sym(\cdot)$ denotes the average over conjugations by $\mathcal G$,
\begin{equation}
	\label{app_eq:symmetrization}
	\sym(M) = \frac{1}{4}\sum_{k=0}^3 R_z(k\pi/2)\, M \, R_z(-k\pi/2),
\end{equation}
a linear map, so that $\sym(\conv(\mathcal M)) = \conv(\sym(\mathcal M))$ for every $\mathcal M \subseteq \so3$.

\begin{lemma}[Symmetrization]
	\label{lemma:symmetrization}
	For every $m$, $\sym(\conv(\mathcal K_m)) \subseteq \conv(\mathcal K_m)$.
	Consequently, if $M_\Lambda = \sym(M_\Lambda)$, the value of \cref{eq:optimization_general} is unchanged when each $\conv(\mathcal K_m)$ is replaced by $\sym(\conv(\mathcal K_m))$.
\end{lemma}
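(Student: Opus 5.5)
The plan is to show that each $\conv(\mathcal K_m)$ is closed under conjugation by the Bloch matrices $R_z(k\pi/2)$ of the elements of $\mathcal G$. Since $\sym$ is an average of four such conjugations, the inclusion will then follow from convexity. The value of \cref{eq:optimization_general} is then handled by a sandwich argument.

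\emph{Step 1: invariance of $\mathcal K_m$.} Take $M_K \in \mathcal K_m$, written as $M_K = \tilde C_1 R_z(2\theta_1) \tilde C_2 R_z(2\theta_2) \tilde C_3 R_z(2\theta_3) \tilde C_4$ with exactly $m$ non-trivial $z$ rotations. Each $g = R_z(k\pi/2)$ is the Bloch matrix of a Clifford operator. Conjugation therefore gives $g M_K g^{-1} = (g\tilde C_1) R_z(2\theta_1) \tilde C_2 R_z(2\theta_2) \tilde C_3 R_z(2\theta_3) (\tilde C_4 g^{-1})$. This only redefines the outer Clifford factors and leaves the number of non-Clifford rotations untouched, so $g \mathcal K_m g^{-1} = \mathcal K_m$. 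The one point to check here is that the cost $c(M_K)$ is defined minimally, as the fewest non-trivial rotations over all such decompositions. Then conjugation cannot lower the cost either: applying the inverse conjugation to a cheaper decomposition of $g M_K g^{-1}$ would give a cheaper decomposition of $M_K$. I expect this step to be the main subtle point, and it is handled by exactly this invertibility argument.

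\emph{Step 2: the inclusion.} Conjugation by a fixed $g$ is linear, so it maps $\conv(\mathcal K_m)$ onto $\conv(g\mathcal K_m g^{-1}) = \conv(\mathcal K_m)$. Hence for $M \in \conv(\mathcal K_m)$, $\sym(M)$ is the uniform average of four elements of the convex set $\conv(\mathcal K_m)$, and so lies in it. This proves $\sym(\conv(\mathcal K_m)) \subseteq \conv(\mathcal K_m)$.

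\emph{Step 3: the optimization value.} Write $\bar c$ for the original value and $\bar c'$ for the value with each $\conv(\mathcal K_m)$ replaced by $\sym(\conv(\mathcal K_m))$. By Step 2 the restricted feasible set is smaller, so $\bar c' \ge \bar c$. Conversely, take any feasible decomposition $M_\Lambda = \sum_m p_m N_m$ with $N_m \in \conv(\mathcal K_m)$. Applying the linear map $\sym$ and using $M_\Lambda = \sym(M_\Lambda)$ gives $M_\Lambda = \sum_m p_m \sym(N_m)$. This is a feasible decomposition in the restricted problem with the same weights $p_m$, and hence the same cost $\sum_m p_m m$. Therefore $\bar c' \le \bar c$, and the two values coincide. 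If the minimum is understood as an infimum, the same argument applies to each feasible point.
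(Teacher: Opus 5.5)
Your proof is correct and follows the paper's argument essentially step by step. It shows that $\mathcal K_m$ is invariant under conjugation by the Clifford $R_z(k\pi/2)$, so that $\conv(\mathcal K_m)$ is closed under the average $\sym$, and then obtains the two-sided comparison by applying $\sym$ to a feasible decomposition of $M_\Lambda = \sym(M_\Lambda)$. Your invertibility argument, showing that conjugation cannot lower the minimal cost, is a small but welcome addition, since the paper only asserts that conjugation ``preserves the cost''.
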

\begin{proof}
	Conjugation by a Clifford operator maps the Bloch matrix associated to a Kraus operator $M_K = \tilde C_1 R_z(2\theta_1) \tilde C_2 R_z(2\theta_2) \tilde C_3 R_z(2\theta_3) \tilde C_4$ into one with the same number of rotations $R_z(2\theta)$, hence it preserves the cost.
	Since $\mathcal G$ belongs to the Clifford group, $R_z(k\pi/2) \, \mathcal K_m \, R_z(-k\pi/2) = \mathcal K_m$ for every $k$, and therefore $\sym(M) \in \conv(\mathcal K_m)$ for every $M \in \conv(\mathcal K_m)$.
	Let now $M_\Lambda = \sum_m p_m M_m$, with $M_m \in \conv(\mathcal K_m)$, be feasible for \cref{eq:optimization_general}.
	Applying $\sym$ to both sides and using $M_\Lambda = \sym(M_\Lambda)$ gives $M_\Lambda = \sum_m p_m \sym(M_m)$, with $\sym(M_m) \in \sym(\conv(\mathcal K_m))$, which is feasible for the restricted problem and carries the same cost $\sum_m p_m m$.
	The opposite inequality follows from $\sym(\conv(\mathcal K_m)) \subseteq \conv(\mathcal K_m)$, which makes every feasible point of the restricted problem feasible for the original one.
\end{proof}

The next lemma shows that symmetrized matrices are labelled by two coordinates only, and identifies the set they span.

\begin{lemma}[Reduced coordinates]
	\label{lemma:reduced_coordinates}
	For every $3\times3$ matrix $M$,
	\begin{equation}
		\label{app_eq:reduced_coordinates}
		\sym(M) = \begin{pmatrix}
			\Re(\zeta)  & \Im(\zeta) &   \\
			-\Im(\zeta) & \Re(\zeta) &   \\
			            &            & t
		\end{pmatrix},
	\end{equation}
	with $\zeta(M) = (M_{11}+M_{22})/2 + i (M_{12}-M_{21})/2$ and $t(M) = M_{33}$.
	Moreover, if $M \in \so3$, then
	\begin{equation}
		\label{app_eq:cone_identity}
		|\zeta(M)| = \frac{1+t(M)}{2},
	\end{equation}
	so that $\sym(\so3)$ is the lateral surface of the cone with apex $(0,-1)$ and base the unit circle at $t=1$, and
	\begin{equation}
		\label{app_eq:cone}
		\mathcal C := \conv(\sym(\so3)) = \qty{(\zeta, t) : |\zeta| \leq \frac{1+t}{2}, \, t \leq 1}.
	\end{equation}
\end{lemma}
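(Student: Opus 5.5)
The plan has three steps: compute the average explicitly, prove the cone identity for rotations, and then identify the convex hull.

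\emph{Step 1: the averaged matrix.} I will write $M$ in block form, with the upper-left $2\times2$ block $B$, the off-diagonal column $u$ and row $w^T$, and the corner entry $M_{33}$. Conjugating by $R_z(k\pi/2)$ does the following:
\begin{itemize}
\item it maps $B$ to $R_kBR_k^{-1}$, where $R_k$ is the planar rotation by $k\pi/2$;
\item it maps $u$ to $R_ku$ and $w^T$ to $w^TR_k^{-1}$;
\item it leaves $M_{33}$ fixed.
\end{itemize}
Since $\sum_{k=0}^3 R_k=0$ (the four vectors $i^k$ sum to zero), the off-diagonal blocks average to zero. For the planar block, I decompose $B=B_++B_-$, where $B_+=\tfrac12(B-JBJ)$ commutes with $J=R_1$ and $B_-$ anticommutes with it. The commuting part is fixed by conjugation. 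The anticommuting part is multiplied by $R_k^2=(-1)^k$ after conjugation, so it averages to zero. Hence $\sym(M)$ equals $B_+$ in the plane and $M_{33}$ in the corner. Writing out $B_+$ gives exactly the form \cref{app_eq:reduced_coordinates}, with $\zeta=(M_{11}+M_{22})/2+i(M_{12}-M_{21})/2$.

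\emph{Step 2: the cone identity for rotations.} I parametrize $M\in\so3$ by a unit axis $\vn=(n_1,n_2,n_3)$ and an angle $\alpha$, using $M=\cos\alpha\,\mathds 1+(1-\cos\alpha)\vn\vn^T+\sin\alpha\,[\vn]_\times$. Then
\begin{align}
M_{11}+M_{22}&=2\cos\alpha+(1-\cos\alpha)(1-n_3^2),\\
M_{12}-M_{21}&=\mp2n_3\sin\alpha,\\
t=M_{33}&=\cos\alpha+(1-\cos\alpha)n_3^2,
\end{align}
where the sign in the second line depends on the orientation convention and does not affect the modulus. It is simplest to verify $4|\zeta|^2=(1+t)^2$ through the substitution $c=\cos\alpha$, $s=\sin\alpha$, $x=n_3^2$. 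Both sides reduce to the same polynomial once $s^2=1-c^2$ is used. Concretely, $1+t=(1+c)+(1-c)x$, and the square of $2|\zeta|$ is $\bigl((1+c)-(1-c)x\bigr)^2+4xs^2$, which equals $\bigl((1+c)+(1-c)x\bigr)^2$ using $s^2=(1-c)(1+c)$. I expect this algebra to be the only real computation, and it is routine.

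\emph{Step 3: the image and its convex hull.} I first show that the image $\sym(\so3)$ is the whole lateral surface. Taking $n_3=1$ gives $(e^{\pm i\alpha},1)$, the full base circle. Taking $\vn$ in the $xy$-plane with $\alpha=\pi$ gives the apex $(0,-1)$. By varying $n_3$ and $\alpha$ continuously, together with the phase of $\zeta$ (which follows from $\alpha$ and the sign of $n_3$), every generator of the cone is reached. Alternatively, convexity already follows once the base circle and the apex are in the image, because $\mathcal C$ is the convex hull of those points.

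Next I establish the convex hull. One inclusion follows because $\{|\zeta|\le(1+t)/2,\ t\le1\}$ is convex: it is the intersection of a second-order cone with a half-space. That set contains $\sym(\so3)$, since $t=M_{33}\le1$ for rotations. The reverse inclusion follows because every point $(\zeta,t)$ of the solid cone equals $\lambda(e^{i\beta},1)+(1-\lambda)(0,-1)+$ a radial shrink, which I make precise as follows. I set $\lambda=(1+t)/2$, so that the height is matched. I then write $\zeta=\lambda r e^{i\beta}$ with $r\le1$, and realize $re^{i\beta}$ as a convex combination of two base-circle points. This exhibits every point of the solid cone as a convex combination of images of rotations, which proves \cref{app_eq:cone}.
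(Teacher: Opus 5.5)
Your proof is correct and follows the paper's argument essentially step by step. You use the same block decomposition with the $J$-commuting/anticommuting split of the planar block (the paper's $H$ and traceless-symmetric $A$), and the same Rodrigues-formula computation of the cone identity.

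The only variation is in the convex hull. You write each point of the solid cone as a mixture of the apex and two base-circle points, whereas the paper slices at fixed $t$ and uses the boundary circle of each slice. Your route is slightly more economical, since it needs only the base circle and the apex to lie in $\sym(\so3)$, and you verified both explicitly.
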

\begin{proof}
	Split $M$ into its upper-left $2\times2$ block $B$, the vectors $u = (M_{13}, M_{23})^T$ and $v^T = (M_{31}, M_{32})$, and the scalar $t = M_{33}$, which conjugation by $R_z(k\pi/2)$ maps to
	\begin{flalign}
		 & \begin{aligned}
			   B   & \mapsto R_z(k\pi/2)\, B \, R_z(-k\pi/2), \quad & u & \mapsto R_z(k\pi/2)\, u, \\
			   v^T & \mapsto v^T R_z(-k\pi/2), \quad                & t & \mapsto t.
		   \end{aligned}
	\end{flalign}
	Since $\sum_k R_z(k\pi/2) = 0$, the averages of $u$ and $v$ vanish and only $B$ and $t$ survive.
	Decompose $B = H + A$, with
	\begin{equation}
		H = \Re(\zeta)\,\mathds{1} + \Im(\zeta)\, J, \quad J = \begin{pmatrix} 0 & 1\\ -1 & 0\end{pmatrix},
	\end{equation}
	and $A$ the traceless symmetric remainder; reading off the entries of $B$ gives $\Re(\zeta) = (B_{11}+B_{22})/2$ and $\Im(\zeta) = (B_{12}-B_{21})/2$, as in \cref{app_eq:reduced_coordinates}.
	Rotations commute with $H$, hence $\sym(H) = H$, whereas conjugation by $R_z(k\pi/2)$ rotates a traceless symmetric matrix by $k\pi$, that is, $A \mapsto (-1)^k A$, so that $\sym(A) = 0$.
	This proves \cref{app_eq:reduced_coordinates}.

	Let now $M \in \so3$ have rotation angle $\omega$ and axis $\hat n$.
	Its trace gives $2\Re(\zeta) + t = \Tr(M) = 1 + 2\cos(\omega)$, while Rodrigues' rotation formula gives $t = M_{33} = \cos(\omega) + (1-\cos(\omega))\,\hat n_3^2$ and, through $M - M^T = \pm 2 \sin(\omega) [\hat n]_\times$ with $[\hat n]_\times$ the generator of rotations about $\hat n$, $\Im(\zeta)^2 = \sin^2(\omega)\, \hat n_3^2$.
	The first two relations read
	\begin{equation}
		\begin{split}
			\frac{1+t}{2} - \Re(\zeta) & = (1 - \cos(\omega))\,\hat n_3^2, \\
			\frac{1+t}{2} + \Re(\zeta) & = 1 + \cos(\omega),
		\end{split}
	\end{equation}
	and multiplying them gives
	\begin{equation}
		\qty(\frac{1+t}{2})^2 - \Re(\zeta)^2 = (1 - \cos^2(\omega))\,\hat n_3^2 = \Im(\zeta)^2,
	\end{equation}
	both $(1+t)/2$ and $|\zeta|$ being non-negative, that is, \cref{app_eq:cone_identity}.
	Finally, at fixed $t$ the section of the right-hand side of \cref{app_eq:cone} is the disk of radius $(1+t)/2$, whose boundary circle belongs to $\sym(\so3)$: every point of that disk is a convex combination of two of its boundary points, and the set is convex, so it is the convex hull of the lateral surface.
\end{proof}

Any unraveling writes $M_\Lambda$ as a convex combination of rotations, so that $\Lambda$ itself is a point of the cone, $M_\Lambda \in \mathcal C$.
The Bloch matrix of \cref{eq:depolarizing} composed with $\mathcal U_\phi$ commutes with every $R_z(k\pi/2)$, hence $M_\Lambda = \sym(M_\Lambda)$ and \cref{lemma:symmetrization} applies; its coordinates read
\begin{equation}
	\label{app_eq:pauli_coordinates}
	(\zeta_\Lambda, t_\Lambda) = \qty(f_\bot e^{2i\phi},\, f_z).
\end{equation}
Its membership in $\mathcal C$, that is $|f_\bot| \leq 1 - 2p_\bot$, is guaranteed by $p_\bot, p_z \geq 0$ and $2p_\bot + p_z \leq 1$, and plays here the role that $|z_\Lambda| \leq 1$ played in \cref{app:aligned_optimum}.

\subsection{The Clifford pyramid}
\label{app:pauli_geometry}

The zero-cost points form the analogue of the Clifford diamond of \cref{lemma:clifford_channel}, sketched in \cref{fig:clifford_pyramid}.

\begin{lemma}[Clifford mixtures]
	\label{lemma:clifford_pyramid}
	The images of the $24$ Clifford Bloch matrices are the nine points
	\begin{equation}
		\label{app_eq:clifford_images}
		\begin{split}
			\sym(\mathcal K_0) = \qty{(i^k, 1)}_{k=0}^{3} & \cup \qty{(0,-1)}                        \\
			                                              & \cup \qty{(\pm 1/2, 0),\, (\pm i/2, 0)},
		\end{split}
	\end{equation}
	and their convex hull is the pyramid
	\begin{equation}
		\label{app_eq:pyramid}
		\mathcal P := \conv(\sym(\mathcal K_0)) = \qty{(\zeta, t) : ||\zeta||_1 \leq \frac{1+t}{2}, \, t \leq 1}.
	\end{equation}
	Moreover, introducing the height $s = (1+t)/2 \in [0,1]$, every $(\zeta, s) \in \mathcal P$ is the coordinate of the Clifford mixture
	\begin{equation}
		\label{app_eq:clifford_channel_pyramid}
		\Sigma = \alpha_0 I + \alpha_1 \mathcal S + \alpha_2 \mathcal Z + \alpha_3 \mathcal S^\dagger + (1-s)\, \mathcal A, \quad \mathcal A = (\mathcal X + \mathcal Y)/2,
	\end{equation}
	where $\alpha_k = s \, \beta_k$ and the $\beta_k$ are the weights that \cref{lemma:clifford_channel} assigns to the point $\zeta/s \in \mathcal D$.
\end{lemma}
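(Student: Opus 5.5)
The proof has three parts, taken in order: enumerate the symmetrized images of the Clifford group, identify their convex hull with the $\ell_1$ cone, and produce explicit Clifford weights for every point of that hull.

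\textbf{Step 1: the nine images.} I would use the explicit coordinates of \cref{lemma:reduced_coordinates}, namely $\zeta(M) = (M_{11}+M_{22})/2 + i(M_{12}-M_{21})/2$ and $t(M)=M_{33}$. The 24 single-qubit Clifford Bloch matrices are exactly the signed permutation matrices of determinant one, so they can be sorted by where the $z$ axis is sent.
\begin{itemize}
\item If $z\mapsto +z$, the matrix is $R_z(k\pi/2)$. These give $(i^{k},1)$, up to the sign convention fixing $i$ versus $-i$, which is immaterial here.
\item If $z\mapsto -z$, the upper block is a reflection-type signed permutation, i.e.\ symmetric and traceless. Then $\zeta=0$ and $t=-1$, which is the apex.
\item If $z\mapsto \pm x$ or $\pm y$, then $t=0$. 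The upper $2\times 2$ block has a single nonzero entry $\pm1$, so $\zeta\in\{\pm1/2,\pm i/2\}$, and all four values occur.
\end{itemize}
This yields the nine points of \cref{app_eq:clifford_images}.

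\textbf{Step 2: the convex hull.} Let $Q=\{(\zeta,t): \|\zeta\|_1\le (1+t)/2,\ t\le 1\}$, and note that $t\ge -1$ is automatic in $Q$. The set $Q$ is convex because $\|\cdot\|_1$ is a norm and the constraint is linear in $t$. All nine points lie in $Q$: the four middle points $(\pm1/2,0)$, $(\pm i/2,0)$ sit on its boundary and are midpoints of the apex and base vertices. Hence $\conv(\sym(\mathcal K_0))\subseteq Q$.

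For the reverse inclusion, observe that $Q$ is the cone over the diamond with apex $(0,-1)$. Its horizontal section at height $s=(1+t)/2$ is $s\mathcal D$. So every point of $Q$ is $s$ times a point of the base diamond plus $(1-s)$ times the apex, and the base diamond is the hull of the $(i^k,1)$ by \cref{lemma:clifford_channel}.

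\textbf{Step 3: explicit weights.} The constructive statement does Step 2's reverse inclusion and the weight formula together. Given $(\zeta,s)$ with $s>0$, take the weights $\beta_k$ of \cref{lemma:clifford_channel} for $\zeta/s\in\mathcal D$ and set $\alpha_k=s\beta_k$. Then check the following.
\begin{itemize}
\item The weights are consistent: they are nonnegative and $\sum\alpha_k+(1-s)=1$.
\item The coordinates add up. By linearity of $\zeta$ and $t$, the $\alpha$-part contributes $(s\cdot\zeta/s,\ s\cdot 1)$.
\item $\mathcal A=(\mathcal X+\mathcal Y)/2$ is a mixture of $\pi$-rotations about $x$ and $y$, whose Bloch matrices are $\mathrm{diag}(1,-1,-1)$ and $\mathrm{diag}(-1,1,-1)$. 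Each has $\zeta=0$, $t=-1$, so $\mathcal A$ contributes $(0,-(1-s))$.
\item Summing gives $\zeta$ and $t=s-(1-s)=2s-1$, as required.
\item The degenerate case $s=0$ is just the apex $\mathcal A$.
\end{itemize}

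The main obstacle is the enumeration in Step 1, in particular getting the signs of $\Im\zeta$ right for the $z\mapsto -z$ and $z\mapsto \pm x,\pm y$ classes. I would handle it through the class structure above rather than by listing all 24 matrices. Membership $(0,-1)\in\mathcal P$ does not actually need the full enumeration, since $R_x(\pi)$ alone already gives the apex.
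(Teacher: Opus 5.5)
Your proposal is correct and follows the paper's proof: the same image computation (base vertices, apex from $\sx$ and $\sy$, and the four midpoints, which are redundant in the hull), the same cone-over-the-diamond description of the hull, and the same explicit weights $\alpha_k = s\beta_k$ plus $(1-s)\mathcal A$, including the $s=0$ apex case. Your classification of the 24 signed permutation matrices by the image of $e_3$ is a clean way to supply the "direct computation" the paper leaves implicit. Likewise, your two-inclusion argument for the hull spells out what the paper states in one line.
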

\begin{proof}
	The elements of $\mathcal G$ commute with $\sym$ and sit at $(i^k, 1)$ by \cref{lemma:clifford_channel}; the Pauli $\sx$ and $\sy$, of Bloch matrices $\mathrm{diag}(1,-1,-1)$ and $\mathrm{diag}(-1,1,-1)$, both have $\zeta = 0$ and $t = -1$ and are mapped to the apex.
	Direct computation of the remaining Clifford Bloch matrices completes \cref{app_eq:clifford_images} with the four points $(\pm 1/2, 0)$ and $(\pm i/2, 0)$, which are the midpoints of the segments joining the apex to the base vertices and are therefore redundant in the convex hull.
	What is left are the four base vertices and the apex, whose hull is the square pyramid \cref{app_eq:pyramid}: its section at height $s$ is the $\ell_1$ ball of radius $s$.

	For the last statement, let $(\zeta, s) \in \mathcal P$ with $s > 0$, so that $||\zeta/s||_1 \leq 1$ and \cref{lemma:clifford_channel} provides non-negative weights $\beta_k$ summing to one with $(\beta_0-\beta_2) + i(\beta_3-\beta_1) = \zeta/s$.
	The weights $\alpha_k = s\beta_k$ and $1-s$ of \cref{app_eq:clifford_channel_pyramid} are then non-negative and sum to one.
	The four Clifford channels sit at $t=1$ and $\mathcal A$ at $t=-1$, so that the mixture has coordinates
	\begin{equation}
		\zeta_\Sigma = s\,\frac{\zeta}{s} = \zeta, \qquad t_\Sigma = s - (1-s) = 2s-1,
	\end{equation}
	as claimed; for $s=0$ the only point of $\mathcal P$ is the apex, $\Sigma = \mathcal A$.
\end{proof}

The cone is reached, in turn, by adding a single non-Clifford Kraus operator.

\begin{lemma}[Cost-one Kraus operators]
	\label{lemma:cost_one}
	$\conv(\sym(\mathcal K_0 \cup \mathcal K_1)) = \mathcal C$.
	Consequently, for every channel with $M_\Lambda = \sym(M_\Lambda)$ the optimum of \cref{eq:optimization_general} is attained with $p_m = 0$ for $m \geq 2$, that is
	\begin{equation}
		\label{app_eq:pauli_decomposition}
		\bar c = \min \qty{c : M_\Lambda = c\, \Phi + (1-c)\, \Sigma, \ \Phi \in \mathcal C, \ \Sigma \in \mathcal P},
	\end{equation}
	and in particular $\bar c \leq 1$.
\end{lemma}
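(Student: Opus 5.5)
The proof has two parts: first the set equality $\conv(\sym(\mathcal K_0 \cup \mathcal K_1)) = \mathcal C$, then the reduction of \cref{eq:optimization_general} to the two-component form \cref{app_eq:pauli_decomposition}.

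For the equality, the inclusion $\subseteq$ is immediate. Every element of $\mathcal K_0\cup\mathcal K_1$ is a matrix in $\so3$, and $\mathcal C = \conv(\sym(\so3))$ by \cref{lemma:reduced_coordinates}. For $\supseteq$, I exhibit enough points on the lateral surface of the cone:
\begin{itemize}
	\item each rotation $R_z(2\theta)$ has cost at most one, commutes with $\mathcal G$, and is therefore mapped by $\sym$ to $(e^{2i\theta},1)$, so every point of the base circle is reached;
	\item the Clifford $\sx$ has Bloch matrix $\mathrm{diag}(1,-1,-1)$ and is mapped to the apex $(0,-1)$, as computed in \cref{lemma:clifford_pyramid}.
\end{itemize}
Any point $(\zeta,t)\in\mathcal C$ with $t<1$ lies on a segment from the apex to some point $(w,1)$ with $|w|\le1$. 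Explicitly, with $s=(1+t)/2$ and $w=\zeta/s$, we have $|w|\le1$ by \cref{app_eq:cone}. The point $(w,1)$ is in turn a convex combination of two points of the base circle. Hence $\mathcal C$ is contained in the convex hull of these images, and the equality follows.

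For the second part, I take any feasible decomposition $M_\Lambda=\sum_m p_m M_m$ with $M_m\in\conv(\mathcal K_m)$. By \cref{lemma:symmetrization} I may replace each $M_m$ by $\sym(M_m)$, which lies in $\mathcal C$. Each $\sym(M_m)$ with $m\ge1$ belongs to $\mathcal C=\conv(\sym(\mathcal K_0\cup\mathcal K_1))$. So it can be rewritten as $\lambda_m\Phi_m+(1-\lambda_m)\Sigma_m$, with $\Phi_m\in\conv(\sym(\mathcal K_1))$, $\Sigma_m\in\mathcal P$ and $\lambda_m\in[0,1]$. The cost contributed by this term then drops from $p_m m$ to $p_m\lambda_m\le p_m m$. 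Collecting all Clifford pieces into a single $\Sigma\in\mathcal P$, using convexity of $\mathcal P$ from \cref{lemma:clifford_pyramid}, and all cost-one pieces into a single $\Phi$, gives a decomposition $M_\Lambda=c\,\Phi+(1-c)\Sigma$ with $c\le\sum_m p_m m$.

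One point needs care: $\Phi$ must be realizable at cost exactly one per use. It is a convex combination of symmetrized cost-one matrices, and by \cref{lemma:symmetrization} this lies in $\conv(\mathcal K_1)$. Conversely, any $\Phi\in\mathcal C$ is such a combination together with Clifford terms, which can be moved into $\Sigma$ without increasing the cost. This establishes \cref{app_eq:pauli_decomposition}. Taking $c=1$ and $\Phi=M_\Lambda\in\mathcal C$ is always feasible, which gives $\bar c\le1$.

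The main obstacle is the bookkeeping in this last step. The cost of a mixed $\Phi\in\mathcal C$ has to be counted as at most one rather than as its full weight in $\mathcal C$, and the Clifford parts have to be cleanly separated out. Defining $\Phi$ from $\conv(\sym(\mathcal K_1))$ and absorbing every residual Clifford weight into $\Sigma$ handles this.
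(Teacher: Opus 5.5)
Your proposal is correct and follows essentially the same route as the paper: the $z$-rotations cover the base circle and $\sx$ gives the apex, and then each symmetrized $M_m$ is split into a $\mathcal P$ part plus a cost-one part of weight $\lambda_m\le 1\le m$, and these pieces are regrouped. Your refinements go slightly further than the paper's proof, which leaves the converse inequality $\bar c\le c$ implicit: you take $\Phi_m\in\conv(\sym(\mathcal K_1))\subseteq\conv(\mathcal K_1)$ rather than the paper's $\Phi_m\in\mathcal C$, and you check explicitly that any $\Phi\in\mathcal C$ can be re-split with its Clifford part absorbed into $\Sigma$, which makes that inequality explicit.
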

\begin{proof}
	A rotation $R_z(\theta)$ costs at most one and $\sym(R_z(\theta)) = (e^{i\theta}, 1)$, which sweeps the whole base circle as $\theta$ varies, while the apex belongs to $\sym(\mathcal K_0)$ by \cref{lemma:clifford_pyramid}.
	The convex hull of the base circle and the apex is $\mathcal C$, and the reverse inclusion holds because $\sym(\mathcal K_m) \subseteq \mathcal C$ for every $m$ by \cref{lemma:reduced_coordinates}.
	Let now $M_\Lambda = \sum_m p_m M_m$ be feasible for the symmetrized problem of \cref{lemma:symmetrization}, so that $M_m \in \sym(\conv(\mathcal K_m)) \subseteq \mathcal C$.
	Since $\mathcal C = \conv(\sym(\mathcal K_0 \cup \mathcal K_1))$, each $M_m$ splits as $M_m = \mu_m \Phi_m + (1-\mu_m)\Sigma_m$ with $\Phi_m \in \mathcal C$, $\Sigma_m \in \mathcal P$ and $\mu_m \in [0,1]$, and in particular $\mu_0 = 0$ because $M_0 \in \mathcal P$.
	Collecting the two groups produces a decomposition of the form \cref{app_eq:pauli_decomposition} of cost $\sum_m p_m \mu_m \leq \sum_m p_m m$, since $\mu_m \leq 1 \leq m$ for $m \geq 1$.
	Finally, $M_\Lambda \in \mathcal C$ makes $c=1$, $\Phi = M_\Lambda$ admissible, whence $\bar c \leq 1$.
\end{proof}

\subsection{Proof of the optimal unraveling and cost}
\label{app:pauli_optimum}

In the height coordinate $s = (1+t)/2$ the two bodies of \cref{app_eq:pauli_decomposition} read
\begin{equation}
	\label{app_eq:cone_pyramid_height}
	\begin{split}
		\mathcal C & = \qty{(\zeta, s) : |\zeta| \leq s \leq 1},     \\
		\mathcal P & = \qty{(\zeta, s) : ||\zeta||_1 \leq s \leq 1},
	\end{split}
\end{equation}
so that $\mathcal C$ and $\mathcal P$ are the cone over the unit disk $\mathcal B$ and over the Clifford diamond $\mathcal D$ of \cref{app_eq:diamond}, respectively.
As in \cref{app:aligned_reduction} we absorb the weight of the non-Clifford Kraus operators into its coordinates, $v = \bar c \, \zeta_\Phi$ and $\eta = \bar c \, s_\Phi$, so that $\Phi \in \mathcal C$ and $\Sigma \in \mathcal P$ become
\begin{equation}
	\label{app_eq:pauli_constraints}
	|v| \leq \eta \leq \bar c, \qquad ||\zeta_\Lambda - v||_1 \leq s_\Lambda - \eta \leq 1 - \bar c.
\end{equation}
Lowering $\eta$ relaxes the second constraint and is only limited by $\eta \geq |v|$; at $\eta = |v|$ the first constraint reads $\bar c \geq |v|$ and the rightmost one, $s_\Lambda - |v| \leq 1 - \bar c$, is satisfied at $\bar c = |v|$ because $s_\Lambda \leq 1$.
The optimum therefore saturates $\bar c = \eta = |v|$ and
\begin{equation}
	\label{app_eq:pauli_optimization}
	\bar c = \min_{v \in \mathbb{C}} |v| \quad \text{subject to} \quad ||\zeta_\Lambda - v||_1 + |v| \leq s_\Lambda,
\end{equation}
a minimizer of which reconstructs the unraveling through
\begin{equation}
	\label{app_eq:pauli_reconstruction}
	(\zeta_\Phi, s_\Phi) = \qty(\frac{v}{|v|}, 1), \qquad
	(\zeta_\Sigma, s_\Sigma) = \qty(\frac{\zeta_\Lambda - v}{1 - \bar c}, \frac{s_\Lambda - \bar c}{1 - \bar c}).
\end{equation}
The non-Clifford Kraus operator lies on the base circle of the cone: it is a rotation about the $z$ axis, exactly as in \cref{app:aligned_optimum}, and the Clifford part is recovered from \cref{lemma:clifford_pyramid}.

\cref{app_eq:pauli_optimization} is \cref{app_eq:aligned_optimization} with the unit diamond replaced by the diamond of size $s_\Lambda$.
Assuming $s_\Lambda > 0$, since $s_\Lambda = 0$ forces $\zeta_\Lambda = 0$ and $\bar c = 0$, the substitution $\tilde \zeta_\Lambda = \zeta_\Lambda/s_\Lambda$ and $\tilde v = v / s_\Lambda$ maps one problem into the other,
\begin{equation}
	\label{app_eq:pauli_rescaled}
	\bar c = s_\Lambda \, \min_{\tilde v \in \mathbb{C}} \qty{|\tilde v| : ||\tilde \zeta_\Lambda - \tilde v||_1 + |\tilde v| \leq 1},
\end{equation}
and \cref{app_eq:cone_identity} guarantees $|\tilde \zeta_\Lambda| \leq 1$, the hypothesis used in \cref{app:aligned_optimum}.
The solution obtained there applies verbatim to $\tilde \zeta_\Lambda = (A + i B)/s_\Lambda$, with $A = |f_\bot \cos(2\phi)|$ and $B = |f_\bot \sin(2\phi)|$ and $0 \leq B \leq A$ as in \cref{thm:pauli_noise}.
Substituting $(a, b) = (A, B)/s_\Lambda$ in \cref{thm:aligned_dephasing} and multiplying the cost by $s_\Lambda$ turns its three conditions and costs into
\begin{equation}
	\begin{split}
		a + b \leq 1             & \ \longrightarrow \ A + B \leq s_\Lambda,              \\
		a + (\sqrt2 -1) b \leq 1 & \ \longrightarrow \ A + (\sqrt2 - 1) B \leq s_\Lambda,
	\end{split}
\end{equation}
and
\begin{equation}
	\begin{split}
		s_\Lambda\,\frac{a+b-1}{\sqrt2 - 1}     & \longrightarrow \frac{A + B - s_\Lambda}{\sqrt 2 - 1},            \\
		s_\Lambda\,\frac{b^2 + (1-a)^2}{2(1-a)} & \longrightarrow \frac{B^2 + (s_\Lambda - A)^2}{2(s_\Lambda - A)},
	\end{split}
\end{equation}
which is \cref{eq:cost_pauli_noise}.
It remains to read off the unraveling in each case from \cref{app_eq:pauli_reconstruction}.

\emph{Case $(i)$.}
Here $v = 0$ and $\bar c = 0$, so that $\Sigma = \Lambda$ and \cref{lemma:clifford_pyramid} returns \cref{eq:clifford_unraveling_pyramid}: the weights $\beta_k$ of \cref{lemma:clifford_channel} for $\tilde \zeta_\Lambda$ carry the slack $1 - a - b$, so that $\alpha_k = s_\Lambda \beta_k$ carries $\tau = s_\Lambda - A - B$.

\emph{Case $(ii)$.}
The minimizer is $\tilde v = (1+i)\,\bar c/(\sqrt2 \, s_\Lambda)$, so that $\zeta_\Phi = (1+i)/\sqrt2$ and $s_\Phi = 1$: the non-Clifford Kraus operator is again a $T$ gate, irrespective of $\phi$, $p_\bot$ and $p_z$.
The Clifford part is the mixture \cref{app_eq:clifford_channel_pyramid} of coordinates \cref{app_eq:pauli_reconstruction}, that is, with the substitutions
\begin{equation}
	\label{app_eq:pauli_case_ii_substitution}
	A \to \frac{A - \bar c/\sqrt2}{1 - \bar c}, \qquad
	B \to \frac{B - \bar c/\sqrt2}{1 - \bar c}, \qquad
	s_\Lambda \to \frac{s_\Lambda - \bar c}{1 - \bar c}.
\end{equation}

\emph{Case $(iii)$.}
The minimizer now satisfies $\zeta_\Lambda - v = s_\Lambda - \bar c$, which is real, so that \cref{app_eq:pauli_reconstruction} gives $\zeta_\Sigma = s_\Sigma$: the Clifford part sits on the edge of $\mathcal P$ joining the apex to the vertex $\mathds{1}$,
\begin{equation}
	\label{app_eq:pauli_case_iii_sigma}
	\Sigma = \frac{s_\Lambda - \bar c}{1 - \bar c}\, I + \frac{1 - s_\Lambda}{1 - \bar c}\, \mathcal A,
\end{equation}
while the non-Clifford Kraus operator is the rotation about $z$ by the angle $2\theta$ with $\tan(2\theta) = B/\sqrt{\bar c^2 - B^2}$ found in \cref{app:aligned_optimum}.

This proves \cref{thm:pauli_noise}.
Its close resemblance to \cref{thm:aligned_dephasing} is now transparent: the whole effect of the noise components that do not preserve the $z$ axis is to lower the height $s_\Lambda$ of the channel, thereby shrinking the diamond that the Clifford part has at its disposal, without changing the geometry of the problem.
Aligned dephasing is recovered at $p_\bot = 0$, where $s_\Lambda = 1$ and $\mathcal P$ is cut at its base.

\section{Proofs for tilted dephasing}
\label{sec:proofs_tilted_dephasing}

This appendix proves the results stated in \cref{sec:tilted_dephasing}.
The noise axis $\vn$ is now generic, so that $M_\Lambda$ does not commute with $\mathcal G$ and neither the picture of the Clifford diamond of \cref{sec:proofs_aligned_dephasing} nor the cone of \cref{sec:proofs_pauli_noise} are available.
Yet, even in this case there is a relation that ties the structure of the channel to the geometry of the problem: $\Lambda$ maps a specific pure state onto another pure state, and this alone pins every Kraus operator of every unraveling to a one-parameter family (\cref{lemma:tilted_rigidity}).
An unraveling is then a probability measure on a circle, and the cost of each Kraus operator is fixed by an algebraic property of its Bloch matrix (\cref{lemma:cost_criterion}).

\subsection{Reduction to a measure on the circle}
\label{app:tilted_reduction}

In the Bloch matrix representation the composite channel $\Lambda = \mathcal N_{p, \vn} \circ \mathcal U_\phi$ reads
\begin{equation}
	\label{app_eq:tilted_bloch}
	M_\Lambda = N_{p,\vn}\, R_z(2\phi), \qquad N_{p,\vn} = (1-2p)\,\mathds{1} + 2p\, \vn \vn^T,
\end{equation}
so that the noise leaves the axis $\vn$ invariant and shrinks by $1-2p$ every unit vector $\vn_\bot$ orthogonal to it,
\begin{equation}
	\label{app_eq:tilted_noise_action}
	N_{p,\vn}\, \vn = \vn, \qquad N_{p,\vn}\, \vn_\bot = (1-2p)\, \vn_\bot.
\end{equation}
In particular $M_\Lambda\, v = \vn$ for $v = R_z(-2\phi)\vn$: the channel maps the pure states $\pm v$ onto the pure states $\pm \vn$ without shrinking them, and it is this single rigid direction that constrains all of its unravelings.

\begin{lemma}[Rigidity of the Kraus operators]
	\label{lemma:tilted_rigidity}
	Let $\Lambda(\rho) = \sum_j \lambda_j K_j \rho K_j^\dagger$ be any unraveling of $\Lambda$ into unitary Kraus operators and let $R_j \in \so3$ be the Bloch matrix of $K_j$.
	Then $R_j\, v = \vn$ for every $j$ and, consequently,
	\begin{equation}
		\label{app_eq:tilted_circle}
		R_j \in \mathcal O := \qty{M_\psi = R_{\vn}(\psi)\, R_z(2\phi) : \psi \in [0, 2\pi)},
	\end{equation}
	the corresponding unitaries being $V(\psi) = e^{i \psi\, \vn \cdot \vsigma/2}\, U_\phi$.
\end{lemma}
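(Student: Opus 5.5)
The plan is to mirror the extremality argument of \cref{lemma:z_rotations}, with the fixed axis $z$ replaced by the rigid pair $v \mapsto \vn$. First, by linearity of the Bloch representation, $M_\Lambda = \sum_j \lambda_j R_j$ with $\lambda_j \geq 0$ and $\sum_j \lambda_j = 1$; unitary Kraus operators contribute no affine shift, since the channel is unital. I will apply both sides to the unit vector $v = R_z(-2\phi)\vn$. By \cref{app_eq:tilted_noise_action}, $M_\Lambda v = N_{p,\vn}\vn = \vn$.

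Next, I take the scalar product with $\vn$:
\[
1 = \vn^T M_\Lambda v = \sum_j \lambda_j\, \vn^T R_j v .
\]
Each $R_j$ is orthogonal, so $R_j v$ is a unit vector, and Cauchy--Schwarz gives $\vn^T R_j v \leq 1$. The convex combination can reach one only if every term saturates. Equality in Cauchy--Schwarz for unit vectors forces $R_j v = \vn$ for every $j$ with $\lambda_j > 0$. I will state explicitly that Kraus operators with zero weight are discarded. This is the same squeeze used for $(M_\Lambda)_{33} = 1$ in \cref{lemma:z_rotations}.

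Then I will characterize the solutions of $R v = \vn$ in $\so3$. One particular solution is $R_z(2\phi)$, because $R_z(2\phi) v = \vn$ by construction. Any other solution $R$ gives $R R_z(2\phi)^{-1} \in \so3$, and this matrix fixes $\vn$. The stabilizer of a unit vector in $\so3$ is the one-parameter group $\{R_{\vn}(\psi)\}$, so $R = R_{\vn}(\psi) R_z(2\phi)$, which proves \cref{app_eq:tilted_circle}.

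Finally, I will lift the statement to unitaries using the standard correspondence that $e^{i\psi\,\vn\cdot\vsigma/2}$ has Bloch matrix $R_{\vn}(\psi)$, up to the sign convention already fixed by $U_\phi \leftrightarrow R_z(2\phi)$. Bloch matrices determine unitaries up to a global phase, and global phases are irrelevant in $K\rho K^\dagger$. This gives $V(\psi) = e^{i\psi\,\vn\cdot\vsigma/2}U_\phi$.

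The only delicate point I expect is this sign and orientation convention, namely whether $\psi$ or $-\psi$ appears. Since $\psi$ ranges over the full circle, either choice parametrizes the same set $\mathcal O$, so the obstacle is cosmetic rather than substantive.
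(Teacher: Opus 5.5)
Your proposal is correct and follows the paper's proof essentially step by step. The paper applies $M_\Lambda = \sum_j \lambda_j R_j$ to $v$ and saturates the triangle inequality $\bigl|\sum_j \lambda_j R_j v\bigr| \leq \sum_j \lambda_j = 1$, while you project onto $\vn$ and saturate Cauchy--Schwarz term by term; this is the same extremality argument, and your explicit discarding of zero-weight Kraus operators is a detail the paper leaves implicit. The subsequent step, where $R_j R_z(-2\phi)$ fixes $\vn$ and is therefore a rotation about it, is identical to the paper's.
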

\begin{proof}
	Let $u_j = R_j v$, so that $|u_j| = 1$ and, by linearity, $\sum_j \lambda_j u_j = M_\Lambda v = \vn$.
	Taking the norm,
	\begin{equation}
		1 = \Big|\sum_j \lambda_j u_j\Big| \leq \sum_j \lambda_j |u_j| = \sum_j \lambda_j = 1,
	\end{equation}
	so that the triangle inequality is saturated: all the $u_j$ are parallel and equally oriented, and $u_j = \vn$ for every $j$.
	Since $R_z(2\phi)\, v = \vn$ as well, the rotation $S_j = R_j R_z(-2\phi)$ satisfies $S_j \vn = \vn$ and is therefore a rotation about $\vn$ by some angle $\psi_j$, that is, $R_j = R_{\vn}(\psi_j) R_z(2\phi)$.
\end{proof}

Two Kraus operators deserve a name: $\psi = 0$ gives $V(0) = U_\phi$, while $\psi = \pi$ gives $V(\pi) \propto (\vn\cdot\vsigma)\, U_\phi$, the Kraus operator of $\mathcal N_{p, \vn}$ applied after the gate.
The naive unraveling of \cref{sec:aligned_dephasing} is thus the measure $\mu = (1-p)\,\delta_0 + p\,\delta_\pi$.
By \cref{lemma:tilted_rigidity} a generic unraveling is nothing but a probability measure $\mu$ on the circle, and the next lemma determines the admissible ones.

\begin{lemma}[Admissible unravelings]
	\label{lemma:tilted_measure}
	A probability measure $\mu$ on $[0, 2\pi)$ defines an unraveling of $\Lambda$ if and only if
	\begin{equation}
		\label{app_eq:tilted_constraints}
		\int \dd\mu(\psi) = 1, \qquad \int \dd\mu(\psi)\, e^{i\psi} = 1-2p.
	\end{equation}
\end{lemma}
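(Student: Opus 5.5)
By \cref{lemma:tilted_rigidity}, any unraveling is a probability measure $\mu$ on the circle, with Kraus Bloch matrices $M_\psi = R_{\vn}(\psi) R_z(2\phi)$. The plan is therefore to compare the averaged Bloch matrix $\int \dd\mu(\psi)\, M_\psi$ with $M_\Lambda = N_{p,\vn} R_z(2\phi)$. Since $R_z(2\phi)$ is invertible and is applied on the right in both expressions, it can be stripped off. The unraveling condition then becomes
\begin{equation}
	\int \dd\mu(\psi)\, R_{\vn}(\psi) = N_{p,\vn}.
\end{equation}
This is a purely linear condition on the first Fourier data of $\mu$. Positivity and normalization of $\mu$ are what make the average a legitimate convex mixture of unitary channels. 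The trace-preserving, identity-component block of the PTM is automatically correct for unitary Kraus operators.

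The next step is to expand by Rodrigues' formula,
\begin{equation}
	R_{\vn}(\psi) = \cos\psi\, \mathds{1} + (1-\cos\psi)\, \vn\vn^T + \sin\psi\, [\vn]_\times,
\end{equation}
and integrate term by term. Writing $m_0 = \int \dd\mu$, $m_c = \int \dd\mu \cos\psi$ and $m_s = \int \dd\mu \sin\psi$, this gives
\begin{equation}
	\int \dd\mu\, R_{\vn}(\psi) = m_c\, \mathds{1} + (m_0 - m_c)\, \vn\vn^T + m_s\, [\vn]_\times.
\end{equation}
The target is $N_{p,\vn} = (1-2p)\, \mathds{1} + 2p\, \vn\vn^T$, with no antisymmetric part.

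The three matrices $\mathds{1}$, $\vn\vn^T$ and $[\vn]_\times$ are linearly independent, being respectively symmetric full-rank, symmetric rank-one, and antisymmetric nonzero. Equality of the two sides is therefore equivalent to three scalar conditions:
\begin{itemize}
	\item $m_s = 0$ from the antisymmetric part;
	\item $m_c = 1-2p$ from the $\mathds{1}$ coefficient;
	\item $m_0 - m_c = 2p$ from the $\vn\vn^T$ coefficient, which with $m_c = 1-2p$ gives $m_0 = 1$.
\end{itemize}
Together these are exactly \cref{app_eq:tilted_constraints}, with $\int \dd\mu\, e^{i\psi} = m_c + i\,m_s = 1-2p$. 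They also match conditions $(i)$–$(iii)$ of \cref{eq:optimization_tilted}.

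For the converse, any probability measure satisfying these conditions reproduces $M_\Lambda$ by the same computation. It therefore defines the channel $\int \dd\mu(\psi)\, V(\psi)\rho V(\psi)^\dagger = \Lambda(\rho)$, since two qubit channels with equal PTMs coincide. A continuous measure is read as a continuous unraveling, or approximated by finitely supported ones if discrete Kraus sums are required.

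The only delicate point is the linear-independence argument: it needs care for a generic axis $\vn$, and I would verify it by testing against $\vn$ and $\vn_\bot$ as in \cref{app_eq:tilted_noise_action}. Applying both sides to $\vn$ gives $m_0 = 1$. Applying them to $\vn_\bot$ gives $m_c\, \vn_\bot + m_s\, \vn\times\vn_\bot = (1-2p)\,\vn_\bot$, and since $\vn\times\vn_\bot \perp \vn_\bot$ this yields $m_c = 1-2p$ and $m_s = 0$ directly.
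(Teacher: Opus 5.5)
Your proof is correct and follows the paper's route: you strip the common right factor $R_z(2\phi)$ to reduce the unraveling condition to $\int \dd\mu(\psi)\, R_{\vn}(\psi) = N_{p,\vn}$, and then you compare the two sides. The paper makes that comparison by writing both matrices in the adapted basis $\{\vn, \vn_\bot, \vn\times\vn_\bot\}$ and equating entries, which is exactly your final check against $\vn$ and $\vn_\bot$; your Rodrigues expansion is the coordinate-free version of the same computation.
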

\begin{proof}
	Averaging the Kraus operators produces the channel of Bloch matrix $\int \dd\mu(\psi)\, M_\psi$, which by \cref{app_eq:tilted_bloch,app_eq:tilted_circle} equals $M_\Lambda$ if and only if $\int \dd\mu(\psi)\, R_{\vn}(\psi) = N_{p,\vn}$.
	In the orthonormal basis $\qty{\vn, \vn_\bot, \vn\times\vn_\bot}$, in which \cref{app_eq:tilted_noise_action} is diagonal, the two matrices read
	\begin{equation}
		\begin{split}
			R_{\vn}(\psi) & = \begin{pmatrix}
				                  1 &  & \\ & \cos\psi &\sin\psi \\ & - \sin\psi & \cos\psi
			                  \end{pmatrix}, \\
			N_{p,\vn}     & = \begin{pmatrix}
				                  1 &  & \\ & 1-2p & \\ & & 1-2p
			                  \end{pmatrix},
		\end{split}
	\end{equation}
	and equating them entry by entry gives \cref{app_eq:tilted_constraints}.
\end{proof}

Splitting the second constraint into its real and imaginary parts turns \cref{app_eq:tilted_constraints} into the two moment conditions of \cref{eq:optimization_tilted},
\begin{equation}
	\int \dd\mu(\psi)\, \qty(1 - \cos\psi) = 2p, \qquad \int \dd\mu(\psi)\, \sin\psi = 0.
\end{equation}

\subsection{The cost of a Kraus operator}
\label{app:tilted_cost}

In \cref{sec:proofs_aligned_dephasing,sec:proofs_pauli_noise} every Kraus operator was either Clifford or a single rotation about $z$, so that the cost of an unraveling was just the total weight of its non-Clifford Kraus operators.
Here $\mathcal O$ meets all the cost classes $\mathcal K_m$ and the cost of $V(\psi)$ is a non-trivial function of $\psi$, which the following lemma reduces to the inspection of the entries of $M_\psi$.

\begin{lemma}[Cost criterion]
	\label{lemma:cost_criterion}
	Let $c(M)$ be the cost of $M \in \so3$ introduced in \cref{sec:pauli_noise}, that is, the minimal number of rotations $R_z(2\theta_j)$ in a decomposition $M = \tilde C_1 R_z(2\theta_1) \tilde C_2 R_z(2\theta_2) \tilde C_3 R_z(2\theta_3) \tilde C_4$ into Clifford Bloch matrices $\tilde C_j$.
	Then
	\begin{equation}
		\label{app_eq:cost_criterion}
		c(M) = \begin{dcases}
			0 & \text{if } M \text{ is Clifford},             \\
			1 & \text{else if } \exists\, i,j : |M_{ij}| = 1, \\
			2 & \text{else if } \exists\, i,j : M_{ij} = 0,   \\
			3 & \text{otherwise}.
		\end{dcases}
	\end{equation}
\end{lemma}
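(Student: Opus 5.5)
We prove \cref{lemma:cost_criterion} by working one cost class at a time. For each class we show two things: the matrices of cost at most $m$ have the stated entry structure, and conversely every matrix with that structure admits a decomposition with $m$ rotations.

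Two facts underlie the whole argument. First, Clifford Bloch matrices are exactly the $24$ signed permutation matrices in $\so3$. Second, left and right multiplication by such a matrix only permutes rows and columns and flips their signs. Hence the two predicates ``some entry has modulus one'' and ``some entry vanishes'' are invariant under $M \mapsto \tilde C M \tilde C'$. The cost is invariant under the same operation, since Cliffords can be absorbed into $\tilde C_1$ and $\tilde C_4$. The case $c=0$ is then the definition.

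\emph{Cost one.} If $M = \tilde C R_z(2\theta)\tilde C'$, then $R_z$ has the entry $1$ in position $(3,3)$, and that unit entry survives the signed permutations. Conversely, suppose $|M_{ij}|=1$. Orthogonality forces row $i$ and column $j$ to vanish outside the position $(i,j)$. Permuting that entry into position $(3,3)$ with the sign fixed to $+1$ leaves a matrix of the form $\mathrm{diag}(R,1)$ with $R\in\so2$, which is $R_z(2\theta)$.

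\emph{Cost two.} A product $R_z(\alpha)\tilde C R_z(\beta)$ either has a unit entry, in which case $\tilde C$ fixes the $z$ axis up to sign, or $\tilde C$ maps $\hat z$ to $\pm\hat x$ or $\pm\hat y$. In the latter case $\hat z^T R_z(\alpha)\tilde C R_z(\beta)\hat z = \hat z^T\tilde C\hat z = 0$, which gives a zero entry; conjugating by the outer Cliffords moves it to an arbitrary position.

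Conversely, suppose $M_{ij}=0$ and move the zero to position $(3,3)$, so that $\hat z^T M\hat z=0$. Then $M\hat z$ lies in the $xy$ plane, and there is a rotation $R_z(\beta')$ with $R_z(\beta') M\hat z = \hat x$. The Clifford $\tilde C_{xz}$ that exchanges $\hat x$ and $\hat z$ (with the sign chosen to keep determinant one) then gives $\tilde C_{xz}R_z(\beta')M\hat z=\hat z$. A rotation that fixes $\hat z$ is an $R_z$, so $M = R_z(-\beta')\tilde C_{xz}^{-1}R_z(\gamma)$, which has two rotations.

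\emph{Cost three.} Any element of $\so3$ can be written with Euler angles as $R_z R_x R_z$. Writing $R_x = \tilde H R_z \tilde H$, where $\tilde H$ is the Hadamard Bloch matrix, gives three $z$ rotations. Hence every $M$ has $c(M)\le3$. If $M$ has neither a unit nor a zero entry, the previous two steps rule out cost at most two, so $c(M)=3$.

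\emph{Main obstacle.} The delicate step is the ``only if'' direction of the zero-entry criterion in full generality. One must check that a cost-one matrix, which may also contain zeros, does not spoil the ordering of the criterion. This is harmless, because the criterion tests the unit-entry condition first. One must also check that the invariance of ``has a zero entry'' under signed permutations genuinely covers every product $\tilde C_1R_z\tilde C_2R_z\tilde C_3$. This last point reduces to the computation of $\hat z^T\tilde C\hat z$ above, split according to whether $\tilde C\hat z=\pm\hat z$ or not.
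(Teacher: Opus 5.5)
Your proposal is correct and follows essentially the same route as the paper. It characterizes $c(M)\le 1$ and $c(M)\le 2$ by the unit-entry and zero-entry conditions, using that $R_z$ fixes $\hat z$ and that Clifford Bloch matrices are signed permutations, builds the converse decompositions by moving the relevant entry to position $(3,3)$, and closes with the Euler decomposition plus a Hadamard conjugation. One small wording slip: in the cost-two step the case split should be on $\tilde C\hat z$, since $\tilde C\hat z=\pm\hat z$ gives a $(3,3)$ entry $\pm1$ of $R_z(\alpha)\tilde C R_z(\beta)$ and otherwise that entry is $0$; the claim ``has a unit entry, in which case $\tilde C$ fixes the $z$ axis'' is not implied as written, although your computation of $\hat z^T\tilde C\hat z$ already uses the correct dichotomy.
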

\begin{proof}
	Since a product of Clifford operators is Clifford, $c(M) = 0$ if and only if $M$ is Clifford.
	We prove that $c(M) \leq 1 \iff \exists\, i,j : |M_{ij}| = 1$, that $c(M) \leq 2 \iff \exists\, i,j : M_{ij} = 0$, and that $c(M) \leq 3$ always, which together give \cref{app_eq:cost_criterion}.

	Let $M = \tilde C_1 R_z(2\theta) \tilde C_2$.
	Since $R_z(2\theta) e_3 = e_3$, we have $M \tilde C_2^{-1} e_3 = \tilde C_1 e_3$, and both $\tilde C_2^{-1} e_3$ and $\tilde C_1 e_3$ are signed coordinate vectors, say $\pm e_j$ and $\pm e_i$: hence $|M_{ij}| = 1$.
	Conversely, $|M_{ij}| = 1$ forces the remaining entries of that row and column to vanish by orthogonality, so that $M e_j = \pm e_i$.
	Choosing Clifford matrices with $\tilde C_1 e_3 = \pm e_i$ and $\tilde C_2^{-1} e_3 = e_j$, the rotation $G = \tilde C_1^{-1} M \tilde C_2^{-1}$ fixes $e_3$ and is therefore a rotation about it, $G = R_z(2\theta)$, whence $M = \tilde C_1 R_z(2\theta) \tilde C_2$.

	Let now $M = \tilde C_1 R_z(2\theta_1) \tilde C_2 R_z(2\theta_2) \tilde C_3$ and set $G = \tilde C_1^{-1} M \tilde C_3^{-1} = R_z(2\theta_1) \tilde C_2 R_z(2\theta_2)$.
	Rotations about $e_3$ leave it invariant, so that $e_3^T G e_3 = (\tilde C_2)_{33} \in \{0, \pm 1\}$; reading this back on $M$ gives a pair with $|M_{ij}| \in \{0, 1\}$.
	Either $M_{ij} = 0$, in which case the statement is proven, or $|M_{ij}| = 1$ and consequently all other entries in the same row and column vanish.
	Conversely, let $M_{ij} = 0$ and pick Clifford matrices with $\tilde C_1 e_3 = e_i$ and $\tilde C_3^{-1} e_3 = e_j$, so that $G = \tilde C_1^{-1} M \tilde C_3^{-1}$ satisfies $e_3^T G e_3 = 0$, i.e. $G e_3$ lies in the equatorial plane.
	Choosing a Clifford $\tilde C_2$ with $\tilde C_2 e_3$ equatorial as well, and $\theta_1$ such that $R_z(2\theta_1) \tilde C_2 e_3 = G e_3$, the rotation $\qty(R_z(2\theta_1) \tilde C_2)^{-1} G$ fixes $e_3$ and equals some $R_z(2\theta_2)$, whence $M = \tilde C_1 R_z(2\theta_1) \tilde C_2 R_z(2\theta_2) \tilde C_3$.

	Finally, every $M \in \so3$ admits the Euler decomposition $M = R_z(2\theta_1) R_x(2\theta_2) R_z(2\theta_3)$, and $R_x(2\theta) = \tilde C R_z(2\theta) \tilde C^{-1}$ for the Clifford $\tilde C$ exchanging the $x$ and $z$ axes, so that $c(M) \leq 3$.
\end{proof}

Writing $c(\psi) = c(M_\psi)$, \cref{lemma:tilted_measure,lemma:cost_criterion} turn the search for the optimal unraveling into \cref{eq:optimization_tilted},
\begin{equation}
	\label{app_eq:tilted_optimization}
	\bar c = \min_{\mu} \int_0^{2\pi} \dd\mu(\psi)\, c(\psi),
\end{equation}
the minimum running over the probability measures obeying \cref{app_eq:tilted_constraints}.
We have no closed form solution for it.
Numerically, \cref{app_eq:cost_criterion} allows one to enumerate the angles with $c(\psi) = 0, 1, 2$; the remaining ones, of cost three, are sampled uniformly on the circle, and \cref{app_eq:tilted_optimization} is solved over the resulting finite set of Kraus operators.

The criterion has an immediate consequence for the equatorial noise axes, $\theta = \pi/2$, singled out in \cref{sec:tilted_dephasing}.
There $M_\pi = R_{\vn}(\pi) R_z(2\phi)$ is a rotation by $\pi$ about the equatorial axis of azimuth $\varphi + \phi$, so that $(M_\pi)_{33} = -1$ and $c(M_\pi) \leq 1$ for every $\varphi$, whereas a generic axis gives $c(M_\pi) = 3$.
The naive unraveling then costs
\begin{equation}
	\bar c \leq (1-p)\, c(M_0) + p\, c(M_\pi) \leq 1,
\end{equation}
which is why any equatorial axis can be disentangled at least up to $N_T \sim N$.
A rotation by $\pi$ about an equatorial axis is moreover Clifford if and only if its azimuth is an integer multiple of $\pi/4$, so that $c(M_\pi) = 0$ and $\bar c \leq 1-p$ when
\begin{equation}
	\varphi + \phi \in \frac{\pi}{4}\mathbb{Z},
\end{equation}
that is, at $\varphi = \pm\pi/8, \pm3\pi/8$ for $\phi = -\pi/8$: there the non-Clifford $\pi$ rotation $\vn\cdot\vsigma$ and the non-Clifford gate $U_\phi$ compose into a single Clifford operation.

\subsection{Clifford twirling}
\label{app:twirling}

It remains to prove the two statements of \cref{sec:twirled_dephasing}: twirling can only lower the cost, and $\mathcal G$-twirling maps $\mathcal N_{p,\vn}$ onto a Pauli channel isotropic on the equator, for which \cref{thm:pauli_noise} then holds verbatim.
Throughout, $g^\dagger \Lambda g$ denotes the channel $\rho \mapsto g^\dagger \Lambda(g \rho g^\dagger) g$, so that \cref{eq:twirling} reads $\Lambda^{\rm tw} = |G|^{-1}\sum_{g \in G} g^\dagger \Lambda g$.

\begin{lemma}[Twirling]
	\label{lemma:twirling}
	The optimal cost is convex and invariant under Clifford conjugation,
	\begin{equation}
		\label{app_eq:cost_convexity}
		\bar c\qty(\sum_i \mu_i \Lambda_i) \leq \sum_i \mu_i\, \bar c(\Lambda_i), \qquad \bar c\qty(g^\dagger \Lambda g) = \bar c(\Lambda),
	\end{equation}
	for $\mu_i \geq 0$, $\sum_i \mu_i = 1$ and $g$ Clifford.
	Consequently $\bar c(\Lambda^{\rm tw}) \leq \bar c(\Lambda)$ for any twirling set $G$ of Clifford operators.
\end{lemma}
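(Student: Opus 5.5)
The plan is to work directly with the definition of $\bar c$ as the minimal average cost $\sum_j \lambda_j\, c(K_j)$ over all unravelings of a channel into unitary Kraus operators. This is equivalent to \cref{eq:optimization_general} once the Kraus operators are grouped by cost class $\mathcal K_m$. Both properties in \cref{app_eq:cost_convexity} will follow from operations on unravelings that preserve feasibility and do not raise the cost. The twirling bound will then be a one-line corollary.

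\emph{Convexity.} Fix an optimal (or $\epsilon$-optimal) unraveling $\Lambda_i = \sum_j \lambda^{(i)}_j K^{(i)}_j \rho K^{(i)\dagger}_j$ for each $\Lambda_i$. The concatenated family, with weights $\mu_i\lambda^{(i)}_j$, is a valid unraveling of $\sum_i \mu_i \Lambda_i$: the weights are non-negative and sum to one, and the channels add linearly. Its cost is $\sum_i \mu_i \sum_j \lambda^{(i)}_j c(K^{(i)}_j) = \sum_i \mu_i \bar c(\Lambda_i)$, up to $\epsilon$, and this upper bounds the minimum. In the language of \cref{eq:optimization_general}, this is the statement that the feasible set is closed under convex combination, since a mixture of points of $\sum_m p_m \conv(\mathcal K_m)$ is again of that form with mixed weights. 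If the minimum is not attained, I will run the argument with $\epsilon$-optimal unravelings and let $\epsilon\to 0$.

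\emph{Clifford invariance.} Given an unraveling $\{\lambda_j, K_j\}$ of $\Lambda$, the family $\{\lambda_j, g^\dagger K_j g\}$ unravels $g^\dagger\Lambda g$. The cost is preserved termwise: this is the fact already used in the proof of \cref{lemma:symmetrization}. If $M_K = \tilde C_1 R_z(2\theta_1)\tilde C_2 R_z(2\theta_2)\tilde C_3 R_z(2\theta_3)\tilde C_4$, then conjugation by the Bloch matrix of $g$ only absorbs into $\tilde C_1$ and $\tilde C_4$, so $c(g^\dagger K g)\le c(K)$. Hence $\bar c(g^\dagger\Lambda g)\le\bar c(\Lambda)$. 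Applying the same argument with $g^\dagger$ in place of $g$ gives the reverse inequality, and therefore equality.

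\emph{Twirling.} Write $\Lambda^{\rm tw} = |G|^{-1}\sum_{g\in G} g^\dagger\Lambda g$. Convexity gives $\bar c(\Lambda^{\rm tw}) \le |G|^{-1}\sum_g \bar c(g^\dagger\Lambda g)$, and invariance turns each term into $\bar c(\Lambda)$.

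The only delicate point is the well-definedness of the cost of a single Kraus operator. The cost must be a function of the Bloch matrix alone, defined as the minimal number of rotations as in \cref{lemma:cost_criterion}, and not of a chosen decomposition. Otherwise the termwise invariance in the second step could fail. With that definition fixed, and the global phase ignored, every step is elementary.
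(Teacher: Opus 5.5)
Your proposal is correct and follows the paper's proof essentially step for step. You concatenate optimal unravelings with weights $\mu_i\lambda_{ij}$ for convexity, conjugate each Kraus operator by $g$ (absorbing the Clifford into the outer factors, as in \cref{lemma:symmetrization}) for invariance, and combine the two for $\Lambda^{\rm tw}$; your $\epsilon$-optimality caveat and the two-sided inequality via $g^\dagger$ are only minor refinements of the paper's assumption of attained optima and its bijection between the unravelings of $\Lambda$ and those of $g^\dagger\Lambda g$.
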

\begin{proof}
	Let $\Lambda_i(\rho) = \sum_j \lambda_{ij} K_{ij}\rho K_{ij}^\dagger$ be optimal unravelings.
	The Kraus operators $K_{ij}$ with weights $\mu_i \lambda_{ij}$ unravel $\sum_i \mu_i \Lambda_i$ at cost $\sum_{ij}\mu_i\lambda_{ij}\, c(M_{ij}) = \sum_i \mu_i \bar c(\Lambda_i)$, and the minimum over all unravelings can only be smaller.
	Conjugation by a Clifford maps the unravelings of $\Lambda$ bijectively onto those of $g^\dagger \Lambda g$, sending each Kraus operator to a Clifford conjugate of the same cost, cf. the proof of \cref{lemma:symmetrization}: hence the second identity.
	Since $\Lambda^{\rm tw}$ is a convex combination of the channels $g^\dagger \Lambda g$, the two properties give $\bar c(\Lambda^{\rm tw}) \leq \sum_g \bar c(g^\dagger \Lambda g)/|G| = \bar c(\Lambda)$.
\end{proof}

Two requirements fix the twirling set.
Its elements must be Clifford, so that twirling injects no non-Clifford operation of its own, and they must leave the gate untouched, $g^\dagger U_\phi g = U_\phi$, so that twirling acts on the noise alone,
\begin{equation}
	g^\dagger \Lambda g = \qty(g^\dagger \mathcal N_{p,\vn}\, g) \circ \mathcal U_\phi.
\end{equation}
For generic $\phi$ the second requirement forces $g$ to preserve the $z$ axis, and \cref{lemma:z_rotations} identifies the Clifford operators that do so with the four elements of $\mathcal G$: this is the twirling set used in \cref{sec:twirled_dephasing}.

\begin{lemma}[Twirled noise]
	\label{lemma:twirled_noise}
	$\mathcal G$-twirling maps the tilted dephasing channel onto
	\begin{equation}
		\label{app_eq:twirled_channel}
		\mathcal N_{p,\vn}^{\rm tw} = (1-p)\, I + p\,\frac{\sin^2\theta}{2}\qty(\mathcal X + \mathcal Y) + p \cos^2\theta\, \mathcal Z,
	\end{equation}
	that is, onto the channel of \cref{eq:depolarizing} with
	\begin{equation}
		\label{app_eq:twirled_probabilities}
		p_\bot = \frac{p}{2}\sin^2\theta, \qquad p_z = p\cos^2\theta,
	\end{equation}
	whose PTM is $\mathrm{diag}(1, f_\bot, f_\bot, f_z)$ with $f_\bot = 1 - p(1+\cos^2\theta)$ and $f_z = 1 - 2p\sin^2\theta$.
\end{lemma}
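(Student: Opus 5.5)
The plan is to compute the twirled channel directly in the Bloch (PTM) representation and then read off the Pauli weights. By \cref{app_eq:tilted_bloch} the noise has Bloch matrix $N_{p,\vn} = (1-2p)\mathds{1} + 2p\,\vn\vn^T$. Twirling by $g\in\mathcal G$ conjugates this by $R_z(k\pi/2)$, so the twirled noise has Bloch matrix $\sym(N_{p,\vn})$ in the notation of \cref{app_eq:symmetrization}. Since $g^\dagger U_\phi g = U_\phi$ for every $g \in \mathcal G$, only the noise is affected. Linearity of $\sym$ gives $\sym(N_{p,\vn}) = (1-2p)\mathds{1} + 2p\,\sym(\vn\vn^T)$, and it remains to symmetrize the rank-one projector.

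For the projector I will invoke \cref{lemma:reduced_coordinates}, which holds for any $3\times3$ matrix. With $M = \vn\vn^T$ it gives $\Im\zeta = (M_{12}-M_{21})/2 = 0$, because $M$ is symmetric. It also gives $\Re\zeta = (n_1^2+n_2^2)/2 = \sin^2\theta/2$ and $t = n_3^2 = \cos^2\theta$. The azimuth $\varphi$ thus drops out, and
\begin{equation}
\sym(N_{p,\vn}) = \mathrm{diag}\bigl(1-2p+p\sin^2\theta,\ 1-2p+p\sin^2\theta,\ 1-2p+2p\cos^2\theta\bigr).
\end{equation}
Using $\sin^2\theta = 1-\cos^2\theta$, the first two entries equal $1-p(1+\cos^2\theta) = f_\bot$. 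The third equals $1-2p\sin^2\theta = f_z$. The PTM is therefore $\mathrm{diag}(1,f_\bot,f_\bot,f_z)$, as claimed.

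Finally I will identify this diagonal PTM with the Pauli channel of \cref{eq:depolarizing}. A Pauli channel $(1-p_x-p_y-p_z)I + p_x\mathcal X + p_y\mathcal Y + p_z\mathcal Z$ has PTM entries $1-2(p_y+p_z)$, $1-2(p_x+p_z)$ and $1-2(p_x+p_y)$ on $X$, $Y$ and $Z$. The PTM determines the channel, so it suffices to check that the weights \cref{app_eq:twirled_probabilities} reproduce these entries. They do: $p_x=p_y=p_\bot$ gives $f_z = 1-4p_\bot = 1-2p\sin^2\theta$ and $f_\bot = 1-2(p_\bot+p_z) = 1-p\sin^2\theta-2p\cos^2\theta$, which matches. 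The identity weight is $1-2p_\bot-p_z = 1-p$, and all weights are nonnegative, which yields \cref{app_eq:twirled_channel}.

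An alternative route, at the Kraus level, expands $(\vn\cdot\vsigma)\rho(\vn\cdot\vsigma)$ into Pauli terms. There the $X$–$Y$ cross terms are cancelled by the $S$-conjugation and the $X$/$Z$ and $Y$/$Z$ cross terms by the $Z$-conjugation; I will use this only as a sanity check. No step is a real obstacle. The point needing most care is the sign convention in the equivalence between the twirling map $g^\dagger\Lambda(g\rho g^\dagger)g$ and conjugation $R\,M\,R^{-1}$ of the Bloch matrix. Because $\mathcal G$ is a group, summing over $g$ or over $g^\dagger$ gives the same average, so the convention does not affect the result.
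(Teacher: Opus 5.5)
Your proof is correct, but it runs in the opposite direction from the paper's. The paper works at the Kraus level, which is exactly your \emph{sanity check} route. It conjugates $\vn\cdot\vsigma$ by the four elements of $\mathcal G$ and writes $\mathcal N^{\rm tw}_{p,\vn}(\rho) = (1-p)\rho + \frac{p}{4}\sum_k N_k\rho N_k$. It then checks that the cross terms $\sum_k a_k b_k$, $\sum_k a_k n_z$, $\sum_k b_k n_z$ vanish and that the diagonal ones give $\sin^2\theta/2$ and $\cos^2\theta$. The Pauli weights therefore come out first, and the PTM is read off afterwards from $f_\bot = 1-2(p_\bot+p_z)$ and $f_z = 1-4p_\bot$.

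You instead symmetrize the Bloch matrix $N_{p,\vn} = (1-2p)\mathds{1} + 2p\,\vn\vn^T$ with \cref{lemma:reduced_coordinates}, which indeed holds for arbitrary $3\times 3$ matrices, so you obtain the PTM first. You then recover the Pauli weights by matching against the PTM of a general Pauli channel. This last step relies on the PTM determining the channel, and implicitly on the twirled channel being unital and trace preserving, which fixes the first row and column to $(1,0,0,0)$.

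Your route reuses the symmetrization machinery of \cref{sec:proofs_pauli_noise} and exposes the structure in one line. The symmetry of $\vn\vn^T$ forces $\Im\zeta = 0$, so no residual rotation survives. Only $n_x^2+n_y^2$ enters $\Re\zeta$, so the azimuth drops out. The paper's route produces the Kraus (Pauli) decomposition directly, which is what the unraveling needs, and avoids the PTM-to-weights inversion. Both arguments are equally short and complete.
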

\begin{proof}
	Conjugating the noise operator $\vn \cdot \vsigma$ by the elements of $\mathcal G$ gives
	\begin{equation}
		\begin{split}
			\mathds{1}\, (\vn\cdot\vsigma)\, \mathds{1} & = n_x \sx + n_y \sy + n_z \sz,  \\
			S^\dagger (\vn\cdot\vsigma)\, S             & = n_y \sx - n_x \sy + n_z \sz,  \\
			\sz\, (\vn\cdot\vsigma)\, \sz               & = -n_x \sx - n_y \sy + n_z \sz, \\
			S\, (\vn\cdot\vsigma)\, S^\dagger           & = -n_y \sx + n_x \sy + n_z \sz,
		\end{split}
	\end{equation}
	four operators $N_k = a_k \sx + b_k \sy + n_z \sz$ in which $(a_k, b_k)$ runs over the four rotations of $(n_x, n_y)$ by multiples of $\pi/2$.
	Twirling acts on the noise alone, so that
	\begin{equation}
		\mathcal N_{p,\vn}^{\rm tw}(\rho) = (1-p)\,\rho + \frac{p}{4}\sum_{k=0}^3 N_k \rho N_k,
	\end{equation}
	and in the expansion of $N_k \rho N_k$ over the Pauli basis the mixed terms cancel,
	\begin{equation}
		\sum_k a_k b_k = \sum_k a_k n_z = \sum_k b_k n_z = 0,
	\end{equation}
	while the diagonal ones give
	\begin{equation}
		\frac{1}{4}\sum_k a_k^2 = \frac{1}{4}\sum_k b_k^2 = \frac{n_x^2 + n_y^2}{2} = \frac{\sin^2\theta}{2}, \quad n_z^2 = \cos^2\theta,
	\end{equation}
	which is \cref{app_eq:twirled_channel}.
	Reading off $p_\bot$ and $p_z$ and inserting them in $f_\bot = 1 - 2(p_\bot + p_z)$ and $f_z = 1 - 4 p_\bot$ gives the PTM.
\end{proof}

The azimuth $\varphi$ has disappeared, as it must for a channel isotropic on the equator, and every result of \cref{sec:proofs_pauli_noise} applies to $\Lambda^{\rm tw}$ with the substitution \cref{app_eq:twirled_probabilities}.
Together with \cref{lemma:twirling}, this is what \cref{fig:twirled_dephasing} compares: the twirled cost is never larger than the original one, and it vanishes on an extended range of $p$ whenever \cref{app_eq:twirled_probabilities} places the channel inside the Clifford pyramid.

\section{Proofs for nonstabilizerness measures}
\label{sec:proofs_magic}

This appendix proves the results stated in \cref{sec:magic}: the closed forms of \cref{eqs:robustness_closed} and \cref{thm:magic}.
The route is the one of \cref{sec:proofs_pauli_noise}: for the channels of \cref{eq:depolarizing} the RoM of the Choi state is again a function of the reduced coordinates $(\zeta_\Lambda, s_\Lambda)$ of \cref{lemma:reduced_coordinates}, and it measures the amount by which the channel sticks out of the Clifford pyramid $\mathcal P$ (\cref{app:rom_closed}).
Since the same excess controls the cost of \cref{thm:pauli_noise}, the two quantifiers are locked together (\cref{app:rom_cost}).

\subsection{The robustness of magic in reduced coordinates}
\label{app:rom_closed}

Define the maximally entangled state $\ketbra{\Phi}{\Phi} = (\mathds{1}\otimes\mathds{1} + \sx\otimes\sx - \sy\otimes\sy + \sz\otimes\sz)/4$.
The Choi state of the unital channel $\Lambda$ is $J_\Lambda = (\Lambda\otimes\mathds{1})\ketbra{\Phi}{\Phi}$.
Since $\Lambda(\sigma_j) = \sum_i (M_\Lambda)_{ij}\, \sigma_i$, we find
\begin{equation}
	\label{app_eq:choi_bloch}
	J_\Lambda = \frac{1}{4}\Big(\mathds{1}\otimes\mathds{1} + \sum_{i,j=1}^3 (M_\Lambda)_{ij}\, \eta_j\, \sigma_i \otimes \sigma_j\Big),
\end{equation}
with $\eta = (1,-1,1)$.
\cref{app_eq:choi_bloch} is linear in $M_\Lambda$, and Bloch matrices combine linearly as well, so that any decomposition of the channel is inherited by its Choi state: if $\Lambda = \sum_j \lambda_j \mathcal C_j$, then $J_\Lambda = \sum_j \lambda_j J_{\mathcal C_j}$.
Notice that everywhere in \cref{sec:classical_phases} we wrote the decomposition of a generic channel $\Lambda = \bar c \Phi + (1 - \bar c) \Sigma$, with $\Phi$ and $\Sigma$ a generic non-Clifford unitary, and a Clifford mixture respectively.
Here, we use a different decomposition: any unital channel can be decomposed as a mixture of Clifford unitaries with possibly negative weights $\Lambda = \sum_j \lambda_j \mathcal C_j$, with $\lambda_j \in \mathbb{R}$ and $\sum_j \lambda_j = 1$~\cite{howard2017application, bravyi2016trading}.
The amount of negativity that is present in the decomposition measures precisely how non-Clifford is the channel $\Lambda$.

\begin{lemma}[Signed Clifford decompositions]
	\label{lemma:signed_clifford}
	Let $\Lambda = \sum_j \lambda_j \, \mathcal C_j$, where $\mathcal C_j(\rho) = C_j \rho C_j^\dagger$ are Clifford channels, $\lambda_j \in \mathbb{R}$ and $\sum_j \lambda_j = 1$, and call $\nu = \sum_j |\lambda_j|$ the weight of the decomposition.
	Then $\mathcal R(J_\Lambda) \leq \nu$.
	Moreover, if $\Lambda$ and $\Lambda'$ admit decompositions of weights $\nu$ and $\nu'$, then $\Lambda \circ \Lambda'$ admits one of weight $\nu \nu'$.
\end{lemma}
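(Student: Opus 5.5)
The plan is to prove the two claims separately, both by direct use of the linearity of the Choi map in \cref{app_eq:choi_bloch}.

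\emph{First claim.} I would show that the Choi state of a Clifford channel is a pure stabilizer state. For a Clifford unitary $C_j$,
\begin{equation}
J_{\mathcal C_j} = (C_j\otimes\mathds{1})\ketbra{\Phi}{\Phi}(C_j\otimes\mathds{1})^\dagger .
\end{equation}
Here $\ket{\Phi}$ is itself a two-qubit stabilizer state, with stabilizers $\sx\otimes\sx$, $-\sy\otimes\sy$ and $\sz\otimes\sz$, as can be read off from the Pauli expansion given just before the lemma. The operator $C_j\otimes\mathds{1}$ is a two-qubit Clifford, and Cliffords map stabilizer states to stabilizer states. Hence each $J_{\mathcal C_j}$ is a pure stabilizer state $\sigma_j$.

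Linearity of the Choi isomorphism then gives
\begin{equation}
J_\Lambda = \sum_j \lambda_j \sigma_j, \qquad \sum_j\lambda_j = 1 .
\end{equation}
This is an admissible signed decomposition in \cref{eq:robustness_of_magic}. Since $\mathcal R$ is defined as a minimum over all such decompositions, $\mathcal R(J_\Lambda)\le\sum_j|\lambda_j|=\nu$.

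One subtlety concerns repeated stabilizer states. Different $C_j$ may give the same $\sigma_j$, for instance when they differ by a global phase. In that case the corresponding coefficients merge, and the triangle inequality shows that merging can only lower the $\ell_1$ weight. The bound therefore still holds.

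\emph{Second claim.} I would compose the two decompositions $\Lambda=\sum_j\lambda_j\mathcal C_j$ and $\Lambda'=\sum_k\lambda'_k\mathcal C'_k$ term by term. Composition of superoperators is bilinear, so
\begin{equation}
\Lambda\circ\Lambda' = \sum_{j,k}\lambda_j\lambda'_k\,\mathcal C_j\circ\mathcal C'_k .
\end{equation}
Each $\mathcal C_j\circ\mathcal C'_k$ is the channel of the Clifford unitary $C_jC'_k$, because the Clifford group is closed under products. The coefficients satisfy $\sum_{j,k}\lambda_j\lambda'_k=\big(\sum_j\lambda_j\big)\big(\sum_k\lambda'_k\big)=1$. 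The weight factorises:
\begin{equation}
\sum_{j,k}|\lambda_j\lambda'_k| = \Big(\sum_j|\lambda_j|\Big)\Big(\sum_k|\lambda'_k|\Big) = \nu\nu' .
\end{equation}
As in the first part, collecting coinciding Clifford channels can only decrease this weight. So $\Lambda\circ\Lambda'$ admits a signed Clifford decomposition of weight at most $\nu\nu'$, which proves the claim; if the lemma is meant with exact equality, the unmerged decomposition has weight exactly $\nu\nu'$.

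\emph{Main obstacle.} There is essentially none. The only step needing care is verifying that $\ket{\Phi}$, with the sign convention fixed above, is a stabilizer state, so that its Clifford images are pure stabilizer states. The rest is linearity together with the fact that $\mathcal R$ is defined as a minimum over signed decompositions.
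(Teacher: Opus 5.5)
Your proof is correct and follows the paper's own argument. Each $J_{\mathcal C_j}$ is a pure stabilizer state, so linearity of the Choi map makes $J_\Lambda=\sum_j\lambda_j J_{\mathcal C_j}$ admissible in the definition of $\mathcal R$; the term-by-term composition $\sum_{j,k}\lambda_j\lambda'_k\,\mathcal C_j\circ\mathcal C'_k$ has coefficients summing to one and weight $\nu\nu'$. Your added checks — that $\ket{\Phi}$ is stabilized by $\sx\otimes\sx$, $-\sy\otimes\sy$, $\sz\otimes\sz$, and that merging coincident terms can only lower the weight — are correct details the paper leaves implicit.
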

\begin{proof}
	The Choi state $J_{\mathcal C_j} = (C_j \otimes \mathds{1}) \ketbra{\Phi}{\Phi} (C_j \otimes \mathds{1})^\dagger$ is a pure stabilizer state, so that $J_\Lambda = \sum_j \lambda_j J_{\mathcal C_j}$ is admissible in \cref{eq:robustness_of_magic} and $\mathcal R(J_\Lambda) \leq \nu$.
	For the second statement, $\Lambda \circ \Lambda' = \sum_{jl} \lambda_j \lambda'_l \, \mathcal C_j \circ \mathcal C'_l$ is again a combination of Clifford channels, whose coefficients sum to one and whose absolute values sum to $\nu \nu'$.
\end{proof}

Note that $\mathcal R \geq 1$ always, since $\sum_k |x_k| \geq |\sum_k x_k| = 1$.
The signed extension of \cref{lemma:clifford_channel} is immediate: for any $z = a+ib$ in the unit disk its weights $\beta_k$, now carrying the slack $\tau = 1 - ||z||_1$ of either sign, still reproduce $z$ and still sum to one.
Moreover, since the term that carries $|a|$ equals $(1 + |a| - |b|)/2$, non-negative because $|b|\leq 1$, so at most one weight turns negative and the slack cancels in $\sum_j |\beta_j|$.
Therefore, we find
\begin{equation}
	\label{app_eq:signed_diamond}
	\sum_k |\beta_k| = \max\qty(||z||_1, \, 1).
\end{equation}

The matching lower bound comes from a witness.

\begin{lemma}[Witness]
	\label{lemma:rom_witness}
	Let
	\begin{equation}
		\label{app_eq:rom_witness}
		\begin{split}
			W = & \frac{1}{2}\qty(\sx\otimes\sx - \sy\otimes\sy - \sx\otimes\sy - \sy\otimes\sx) \\
			    & + \frac{1}{2}\qty(\mathds{1}\otimes\mathds{1} - \sz\otimes\sz).
		\end{split}
	\end{equation}
	Then $|\Tr(W \sigma)| \leq 1$ for every pure two-qubit stabilizer state $\sigma$, and consequently $\mathcal R(\rho) \geq \Tr(W\rho)$ for every two-qubit state $\rho$.
\end{lemma}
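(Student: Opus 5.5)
The proof has two parts. First I bound $\Tr(W\sigma)$ on pure stabilizer states by direct Pauli bookkeeping. Then the bound on $\mathcal R$ follows by linearity from \cref{eq:robustness_of_magic}.

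\emph{Pauli bookkeeping.} Expand $W=\sum_P w_P P$ over two-qubit Pauli strings. The only nonzero coefficients are $w_{\mathds{1}\mathds{1}}=w_{XX}=\tfrac12$ and $w_{YY}=w_{XY}=w_{YX}=w_{ZZ}=-\tfrac12$. A pure two-qubit stabilizer state is $\sigma=\tfrac14\sum_{g\in\mathcal S}g$, where $\mathcal S$ is an abelian group of four signed Pauli strings not containing $-\mathds{1}$. Hence $\Tr(P\sigma)=\pm1$ if $\pm P\in\mathcal S$, and $\Tr(P\sigma)=0$ otherwise. Writing $\langle P\rangle=\Tr(P\sigma)$ and
\begin{equation*}
S=\langle XX\rangle-\langle YY\rangle-\langle XY\rangle-\langle YX\rangle-\langle ZZ\rangle,
\end{equation*}
we get $\Tr(W\sigma)=\tfrac12(1+S)$. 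The claim $|\Tr(W\sigma)|\le1$ is therefore equivalent to $-3\le S\le1$.

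\emph{Commutation structure.} Among the five strings $XX,YY,ZZ,XY,YX$, check pairwise commutation.
\begin{itemize}
\item $XX$ and $YY$ anticommute with both $XY$ and $YX$.
\item The remaining pairs commute, so the maximal commuting subsets are the two triples $\{XX,YY,ZZ\}$ and $\{XY,YX,ZZ\}$.
\end{itemize}
The products inside each triple fix the signs:
\begin{equation*}
(XX)(YY)=-ZZ,\qquad (X\otimes Y)(Y\otimes X)=(iZ)\otimes(-iZ)=ZZ.
\end{equation*}
Since $\mathcal S$ is a group, if it contains two elements of a triple it contains the third, with sign $\langle ZZ\rangle=-\langle XX\rangle\langle YY\rangle$ or $\langle ZZ\rangle=\langle XY\rangle\langle YX\rangle$, respectively. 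Moreover $\mathcal S$ cannot meet both triples outside $ZZ$, because such elements anticommute.

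\emph{Case analysis.} There are three possibilities for how $\mathcal S$ meets the five strings.
\begin{itemize}
\item If $\mathcal S$ contains at most one of the five strings up to sign, then $S\in\{-1,0,1\}$.
\item If it contains the first triple, set $a=\langle XX\rangle$ and $b=\langle YY\rangle$. Then $S=a-b+ab$, which over $a,b\in\{\pm1\}$ takes the values $1,1,-3,1$.
\item If it contains the second triple, set $c=\langle XY\rangle$ and $d=\langle YX\rangle$. Then $S=-c-d-cd$, which takes the values $-3,1,1,1$.
\end{itemize}
In every case $-3\le S\le 1$, which proves the first assertion.

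\emph{Bound on the robustness.} For any feasible decomposition $\rho=\sum_k x_k\sigma_k$ in \cref{eq:robustness_of_magic}, linearity of the trace gives
\begin{equation*}
\Tr(W\rho)=\sum_k x_k\Tr(W\sigma_k)\le\sum_k|x_k|\,|\Tr(W\sigma_k)|\le\sum_k|x_k|.
\end{equation*}
Minimizing over decompositions yields $\mathcal R(\rho)\ge\Tr(W\rho)$.

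The main obstacle is keeping the sign bookkeeping correct: the $-\tfrac12$ coefficients, and the relative sign between $(XX)(YY)=-ZZ$ and $(XY)(YX)=+ZZ$. An independent check is that the extremal values $S=-3$ and $S=1$ are actually attained, for example by the Bell-type states fixed by the two triples.
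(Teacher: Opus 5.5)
Your proposal is correct and follows essentially the same route as the paper: expand $W$ in Pauli strings, use that $XX, YY$ anticommute with $XY, YX$ so that at most one of the triples $\{XX,YY,ZZ\}$, $\{XY,YX,ZZ\}$ can contribute, fix $\langle ZZ\rangle$ through $(XX)(YY)=-ZZ$ and $(XY)(YX)=ZZ$, enumerate the cases, and then obtain $\mathcal R(\rho)\ge\Tr(W\rho)$ by linearity over any feasible decomposition. Writing the condition as $-3\le S\le 1$ for your signed sum $S$ only repackages the paper's case values $\Tr(W\sigma)\in\{0,\tfrac12,1\}\cup\{1,a\}\cup\{1,-d\}$.
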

\begin{proof}
	The second statement follows from the first: if $\rho = \sum_k x_k \sigma_k$ is admissible in \cref{eq:robustness_of_magic}, then $\Tr(W\rho) \leq \sum_k |x_k| \, |\Tr(W \sigma_k)| \leq \sum_k |x_k|$, regardless of the choice of decomposition.
	Thus, the relation will hold in particular for the decomposition that minimizes $\sum_k |x_k|$, i.e., $\mathcal R(\rho)$.
	For the first, let $S$ be the stabilizer group of $\sigma$, so that $\ev{P} := \Tr(P \sigma) = \pm 1$ if $\pm P \in S$ and vanishes otherwise, and write
	\begin{equation}
		\Tr(W\sigma) = \frac{a - b - d - e}{2} + \frac{1 - c}{2},
	\end{equation}
	with $a = \ev{\sx\otimes\sx}$, $b = \ev{\sy\otimes\sy}$, $d = \ev{\sx\otimes\sy}$, $e = \ev{\sy\otimes\sx}$ and $c = \ev{\sz\otimes\sz}$.
	The five Pauli operators involved obey
	\begin{equation}
		(\sx\otimes\sx)(\sy\otimes\sy) = -\sz\otimes\sz, \qquad (\sx\otimes\sy)(\sy\otimes\sx) = \sz\otimes\sz,
	\end{equation}
	while $\sx\otimes\sx$ and $\sy\otimes\sy$ both anticommute with $\sx\otimes\sy$ and $\sy\otimes\sx$.
	Since $S$ is abelian, either $a = b = 0$ or $d = e = 0$.
	In each of the two triples $\{\sx\otimes\sx, \sy\otimes\sy, \sz\otimes\sz\}$ and $\{\sx\otimes\sy, \sy\otimes\sx, \sz\otimes\sz\}$ the product of any two elements is proportional to the third, so that the surviving triple contributes $0$, $1$ or $3$ non-vanishing expectation values.
	If it contributes none or one, $\Tr(W\sigma) \in \{0, \sfrac{1}{2}, 1\}$ in all cases.
	If it contributes three, the relations above fix $c = -ab$ and $c = de$ respectively, and
	\begin{equation}
		\begin{split}
			\Tr(W\sigma) & = \frac{a-b}{2} + \frac{1+ab}{2} \in \{1, a\},   \\
			\Tr(W\sigma) & = -\frac{d+e}{2} + \frac{1-de}{2} \in \{1, -d\}.
		\end{split}
	\end{equation}
	In all cases $|\Tr(W\sigma)| \leq 1$.
\end{proof}

\begin{proposition}[Closed form of the RoM]
	\label{prop:rom_closed}
	Let $\Lambda$ be a channel with $\sym(M_\Lambda) = M_\Lambda$, of reduced coordinates $(\zeta_\Lambda, s_\Lambda)$ as in \cref{app_eq:cone_pyramid_height}.
	Then
	\begin{equation}
		\label{app_eq:rom_closed}
		\mathcal R(J_\Lambda) - 1 = \max\qty(0, \, ||\zeta_\Lambda||_1 - s_\Lambda).
	\end{equation}
\end{proposition}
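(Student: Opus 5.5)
Upper and lower bounds are proved separately, then matched.

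\emph{Upper bound.} We build an explicit signed Clifford decomposition of $\Lambda$ and apply \cref{lemma:signed_clifford}. Mirroring \cref{lemma:clifford_pyramid}, write
\begin{equation}
	\Lambda = \sum_k \alpha_k\, \mathcal G_k + (1 - s_\Lambda)\, \mathcal A, \qquad \alpha_k = s_\Lambda \beta_k,
\end{equation}
with $\mathcal G_k \in \{I, \mathcal S, \mathcal Z, \mathcal S^\dagger\}$ and $\beta_k$ the signed weights of the point $\zeta_\Lambda / s_\Lambda$. This point lies in the unit disk by \cref{app_eq:cone_identity}, so the signed extension of \cref{lemma:clifford_channel} applies. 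The coordinates then match $(\zeta_\Lambda, s_\Lambda)$ exactly, by the same computation as in \cref{lemma:clifford_pyramid}. Two facts close the step. First, $\mathcal A$ is a convex mixture of the Clifford channels $\mathcal X,\mathcal Y$ and $1 - s_\Lambda \geq 0$. Second, a symmetric channel is determined by its reduced coordinates, since $\sym$ kills all other entries of the Bloch matrix. Hence the decomposition reproduces $\Lambda$. Its weight is, by \cref{app_eq:signed_diamond},
\begin{equation}
	s_\Lambda \max\bigl(\|\zeta_\Lambda\|_1 / s_\Lambda, 1\bigr) + 1 - s_\Lambda = 1 + \max\bigl(0, \|\zeta_\Lambda\|_1 - s_\Lambda\bigr).
\end{equation}
The degenerate case $s_\Lambda = 0$ forces $\zeta_\Lambda = 0$ and $\Lambda = \mathcal A$, so it is trivially fine.

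\emph{Lower bound.} The bound $\mathcal R \geq 1$ is automatic, so it remains to show $\mathcal R(J_\Lambda) \geq 1 + \|\zeta_\Lambda\|_1 - s_\Lambda$ when this exceeds one. First reduce to the sector $\Re\zeta_\Lambda, \Im\zeta_\Lambda \geq 0$. Conjugating or composing $\Lambda$ with elements of $\mathcal G$ and with $X$ is a local Clifford action on $J_\Lambda$. It leaves $\mathcal R$ invariant (monotone under stabilizer operations, applied in both directions) and permutes the sign patterns of $\zeta$. In this sector we evaluate the witness of \cref{lemma:rom_witness} on $J_\Lambda$ using \cref{app_eq:choi_bloch}. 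With $M_\Lambda$ in the form \cref{app_eq:reduced_coordinates}, the $\sx\otimes\sx$, $\sy\otimes\sy$, $\sx\otimes\sy$, $\sy\otimes\sx$, $\sz\otimes\sz$ coefficients are, up to the factor $1/4$ and the signs $\eta_j$, $\Re\zeta$, $-\Re\zeta$, $\mp\Im\zeta$, $\pm\Im\zeta$ and $t$. Using $\Tr(P\otimes Q\, P'\otimes Q') = 4\delta\delta$, we expect
\begin{equation}
	\Tr(W J_\Lambda) = \Re\zeta + \Im\zeta + \tfrac{1}{2}(1 - t) = \|\zeta\|_1 + 1 - s_\Lambda.
\end{equation}
The main obstacle is this bookkeeping: one must check that the signs in $W$, namely the relative minus signs on $\sy\otimes\sy$ and on the cross terms, are aligned with $\eta = (1,-1,1)$ and with the orientation of $\Im\zeta$. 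If a sign comes out wrong, we either move to the adjacent sector through the Clifford symmetry or flip $\sx\otimes\sy + \sy\otimes\sx$. The stabilizer bound $|\Tr(W\sigma)| \leq 1$ is already guaranteed by \cref{lemma:rom_witness} and survives such a relabeling, because local Cliffords permute stabilizer states. Combining this with $\mathcal R(\rho) \geq \Tr(W\rho)$ gives the lower bound, which matches the upper bound and proves \cref{app_eq:rom_closed}.
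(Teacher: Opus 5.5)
Your proposal is correct and follows the paper's proof essentially step for step: the upper bound comes from the signed mixture $\sum_k s_\Lambda\beta_k\,\mathcal G_k + (1-s_\Lambda)\mathcal A$ via \cref{app_eq:signed_diamond} and \cref{lemma:signed_clifford}, and the lower bound from the witness $W$ of \cref{lemma:rom_witness}. The sign bookkeeping you left open resolves cleanly: once the $\eta_j$ are applied, $\langle \sx\otimes\sy\rangle = \langle \sy\otimes\sx\rangle = -\Im\zeta_\Lambda$ (equal signs, not the $\mp/\pm$ you wrote), so $W$ already gives $\|\zeta_\Lambda\|_1 + 1 - s_\Lambda$ throughout the sector $\Re\zeta_\Lambda, \Im\zeta_\Lambda \geq 0$ and no fallback relabeling is needed.
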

\begin{proof}
	As in \cref{app:aligned_optimum} we may assume $0 \leq \Im \zeta_\Lambda \leq \Re \zeta_\Lambda$.
	Indeed, composing $\Lambda$ with $S$ and conjugating it with $\sx$ act on the Choi state as conjugations by local Cliffords, which permute the pure stabilizer states and therefore leave $\mathcal R$ invariant, while on the coordinates they generate the eight symmetries of the diamond, which preserve $||\zeta_\Lambda||_1$ and $s_\Lambda$.

	\emph{Upper bound.}
	If $||\zeta_\Lambda||_1 \leq s_\Lambda$ the channel belongs to $\mathcal P$ and \cref{lemma:clifford_pyramid} provides a decomposition of weight one, so that $\mathcal R(J_\Lambda) = 1$.
	Otherwise $s_\Lambda > 0$ and the same construction applies with signed weights: the Clifford mixture \cref{app_eq:clifford_channel_pyramid} associated to $\zeta_\Lambda/s_\Lambda$, which lies in the unit disk because $M_\Lambda \in \mathcal C$, cf. \cref{app_eq:cone_pyramid_height}, has weight
	\begin{equation}
		\nu = s_\Lambda \sum_k |\beta_k| + (1 - s_\Lambda) = ||\zeta_\Lambda||_1 + 1 - s_\Lambda
	\end{equation}
	by \cref{app_eq:signed_diamond}, the channel $\mathcal A$ entering with the non-negative weight $1 - s_\Lambda$.
	\cref{lemma:signed_clifford} then gives $\mathcal R(J_\Lambda) \leq ||\zeta_\Lambda||_1 + 1 - s_\Lambda$.

	\emph{Lower bound.}
	By \cref{app_eq:choi_bloch,app_eq:reduced_coordinates} the only non-vanishing Pauli expectation values of $J_\Lambda$ are
	\begin{equation}
		\begin{split}
			 & \ev{\sx\otimes\sx} = -\ev{\sy\otimes\sy} = \Re \zeta_\Lambda, \\
			 & \ev{\sx\otimes\sy} = \ev{\sy\otimes\sx} = -\Im \zeta_\Lambda, \\
			 & \ev{\sz\otimes\sz} = 2 s_\Lambda - 1,
		\end{split}
	\end{equation}
	so that the witness \cref{app_eq:rom_witness} evaluates to
	\begin{equation}
		\Tr(W J_\Lambda) = \Re \zeta_\Lambda + \Im \zeta_\Lambda + 1 - s_\Lambda = ||\zeta_\Lambda||_1 + 1 - s_\Lambda,
	\end{equation}
	and \cref{lemma:rom_witness} gives $\mathcal R(J_\Lambda) \geq ||\zeta_\Lambda||_1 + 1 - s_\Lambda$.
	Together with $\mathcal R \geq 1$, this proves \cref{app_eq:rom_closed}.
\end{proof}

We can now evaluate \cref{app_eq:rom_closed} on the channels of \cref{sec:classical_phases}.
For $\phi = -\pi/8$ the coordinates \cref{app_eq:pauli_coordinates} read $\zeta_\Lambda = f_\bot e^{-i\pi/4}$ and $s_\Lambda = (1+f_z)/2$, so that $||\zeta_\Lambda||_1 = \sqrt2 |f_\bot|$ and
\begin{equation}
	\label{app_eq:rom_general}
	\mathcal R(J_\Lambda) = \max\qty(1, \, \sqrt2\, |f_\bot| + \frac{1 - f_z}{2}).
\end{equation}
Aligned dephasing has $f_\bot = 1-2p$ and $f_z = 1$, whence
\begin{equation}
	\mathcal R(J_\Lambda) = \max\qty(1, \, \sqrt 2 \, |1-2p|),
\end{equation}
while depolarizing noise has $f_\bot = f_z = q$ with $q = 1 - \frac{4}{3}p$, and splitting \cref{app_eq:rom_general} according to the sign of $q$,
\begin{equation}
	\mathcal R(J_\Lambda) = \max\qty(1, \, \frac{1 + (2\sqrt2-1)q}{2}, \, \frac{1 - (2\sqrt2+1)q}{2}).
\end{equation}
These are the two lines of \cref{eqs:robustness_closed}.

\cref{app_eq:rom_closed} also makes the geometric content of the RoM explicit: $\mathcal R(J_\Lambda) - 1$ is the excess of the channel over the facet $||\zeta||_1 = s$ of the Clifford pyramid \cref{app_eq:cone_pyramid_height}.
It vanishes exactly on $\mathcal P$, its level sets are the dilated pyramids $||\zeta||_1 - s = \rm{const}$, and its only non-analyticity is the facet crossing.
In particular, by \cref{lemma:clifford_pyramid} and case $(i)$ of \cref{thm:pauli_noise},
\begin{equation}
	\mathcal R(J_\Lambda) = 1 \iff (\zeta_\Lambda, s_\Lambda) \in \mathcal P \iff \bar c = 0,
\end{equation}
so that within this family the RoM is faithful on channels.
For aligned dephasing $s_\Lambda = 1$, the pyramid is cut at its base, and the picture reduces to the Clifford diamond of \cref{fig:clifford_diamond} with $\mathcal R(J_\Lambda) - 1 = ||z_\Lambda||_1 - 1$.

\subsection{From the robustness to the cost}
\label{app:rom_cost}

\begin{lemma}[Robustness of a Kraus operator]
	\label{lemma:rom_kraus}
	Let $K$ be a unitary Kraus operator of cost $c(M_K) = k$.
	Then $\mathcal R(J_K) \leq (\sqrt 2)^k$.
\end{lemma}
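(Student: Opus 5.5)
The plan is to combine the multiplicativity in \cref{lemma:signed_clifford} with a one-rotation bound. By the definition of cost recalled in \cref{lemma:cost_criterion}, a Kraus operator of cost $k$ has a Bloch matrix
\[
M_K = \tilde C_1 R_z(2\theta_1)\tilde C_2 \cdots R_z(2\theta_k)\tilde C_{k+1},
\]
so the unitary channel $\mathcal K(\rho)=K\rho K^\dagger$ is a composition of Clifford channels and $k$ single-qubit $z$-rotation channels $\mathcal U_\theta(\rho)=e^{i\theta Z}\rho e^{-i\theta Z}$. A Clifford channel is a decomposition of weight one. By the second part of \cref{lemma:signed_clifford}, it then suffices to show that every $\mathcal U_\theta$ admits a signed Clifford decomposition of weight at most $\sqrt2$. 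The product of the weights is then at most $(\sqrt2)^k$, and the first part of \cref{lemma:signed_clifford} gives $\mathcal R(J_K)\le(\sqrt2)^k$.

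For the single rotation, work in the complex-plane picture of \cref{sec:proofs_aligned_dephasing}. Since $\mathcal U_\theta$ fixes the $z$ axis, it is represented by the point $z=e^{2i\theta}$ on the unit circle. Conversely, any real combination of $I,\mathcal S,\mathcal Z,\mathcal S^\dagger$ with weights summing to one has Bloch matrix $\mathrm{diag}$-block $(z_\Sigma,1)$. So a channel is reproduced exactly as soon as its coordinate is matched; this is the PTM identity used in \cref{lemma:clifford_channel}, which is linear and holds for signed weights too.

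The key step is then the signed extension of \cref{lemma:clifford_channel} already recorded in \cref{app_eq:signed_diamond}. For $z$ in the unit disk, the weights $\beta_k$ with slack $\tau=1-\|z\|_1$ of arbitrary sign still reproduce $z$ and sum to one, with
\[
\sum_k|\beta_k|=\max(\|z\|_1,1).
\]
For $z=e^{2i\theta}$ this gives
\[
\|z\|_1=|\cos2\theta|+|\sin2\theta|\le\sqrt2
\]
by Cauchy--Schwarz, hence weight at most $\sqrt2$.

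The main point to handle carefully is justifying \cref{app_eq:signed_diamond} itself. One must check that at most one weight becomes negative, namely $\tau/2$ on the vertex opposite to the dominant real direction, and that the remaining weights stay non-negative. Taking $|a|\ge|b|$ without loss of generality, the $\alpha$ carrying $|a|$ equals $(1+|a|-|b|)/2\ge0$, and the terms $|b|$ and $0$ are clearly non-negative. The only possibly negative weight is $\tau/2$, so $\sum|\beta_k|=|a|+|b|+\tau/2+|\tau|/2$, which equals $\max(\|z\|_1,1)$.

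Finally, the composition argument needs the decompositions of the Clifford factors and of the rotations to be composed in the order dictated by $M_K$. \cref{lemma:signed_clifford} is stated for arbitrary compositions, so applying it $2k$ times concludes the proof. A global phase of $K$ is irrelevant at the channel level, since the Bloch matrix determines $\mathcal K$.
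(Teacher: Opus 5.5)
Your proposal is correct and follows the paper's proof: you write $M_K$ as $k$ rotations $R_z(2\theta_j)$ interleaved with Cliffords, bound each rotation's signed Clifford weight by $\|e^{2i\theta}\|_1=|\cos 2\theta|+|\sin 2\theta|\le\sqrt2$ via \cref{app_eq:signed_diamond}, and then use multiplicativity of the weights together with the Choi-state bound of \cref{lemma:signed_clifford}. Your re-derivation of \cref{app_eq:signed_diamond} is also sound, and the assumption $|a|\ge|b|$ is not needed there, since $(1+|a|-|b|)/2\ge 0$ already follows from $|b|\le 1$.
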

\begin{proof}
	The channel of a rotation $R_z(2\theta)$ has complex coordinate $z = e^{2i\theta}$ on the unit circle, which \cref{app_eq:signed_diamond} decomposes over $\mathcal G$ with weight $||z||_1 = |\cos(2\theta)| + |\sin(2\theta)| \leq \sqrt 2$.
	Clifford channels have weight one, so that composing the $k$ rotations and the Cliffords of the decomposition $M_K = \tilde C_1 R_z(2\theta_1) \tilde C_2 R_z(2\theta_2) \tilde C_3 R_z(2\theta_3) \tilde C_4$ of \cref{lemma:cost_criterion} through \cref{lemma:signed_clifford} gives a signed Clifford decomposition of $K$ of weight at most $(\sqrt2)^k$.
\end{proof}

\begin{proof}[Proof of \cref{thm:magic}]
	Let $M_\Lambda = \sum_m p_m M_m$, with $M_m \in \conv(\mathcal K_m)$, be feasible for \cref{eq:optimization_general}.
	Writing each $M_m$ as a convex combination of Bloch matrices of unitary Kraus operators of cost at most $m$, and using the linearity of $\Lambda \mapsto J_\Lambda$, the Choi state $J_\Lambda$ is a convex combination of Choi states $J_K$ with $\mathcal R(J_K) \leq (\sqrt 2)^m$ by \cref{lemma:rom_kraus}.
	The RoM is convex, as decompositions of the parts combine into a decomposition of the mixture, hence
	\begin{equation}
		\mathcal R(J_\Lambda) \leq \sum_m p_m \, (\sqrt2)^m.
	\end{equation}
	By \cref{lemma:cost_criterion} the costs never exceed three, and on $m \in \{0,1,2,3\}$ the convexity of $m \mapsto (\sqrt2)^m$ gives the chord bound
	\begin{equation}
		(\sqrt 2)^m \leq 1 + \frac{2\sqrt2 - 1}{3}\, m,
	\end{equation}
	with equality at the endpoints.
	Therefore $\mathcal R(J_\Lambda) - 1 \leq \frac{2\sqrt2-1}{3} \sum_m p_m m$, and minimizing the right-hand side over the feasible points of \cref{eq:optimization_general} yields $\sum_m p_m m = \bar c$ from which \cref{eq:magic_bound} follows directly.

	For the second statement, let $\Lambda$ be as in \cref{thm:pauli_noise} with $\phi = -\pi/8$, so that $A = B = |f_\bot|/\sqrt2$ and $A + B = ||\zeta_\Lambda||_1$.
	Case $(iii)$ of \cref{thm:pauli_noise} never occurs, since $A + (\sqrt2 - 1) B = \sqrt 2 A = |\zeta_\Lambda| \leq s_\Lambda$ by \cref{app_eq:cone_pyramid_height}.
	The first two cases of \cref{eq:cost_pauli_noise} then collapse into a single expression, which \cref{app_eq:rom_closed} rewrites as
	\begin{equation}
		\bar c = \frac{\max\qty(0, \, ||\zeta_\Lambda||_1 - s_\Lambda)}{\sqrt2 - 1} = \frac{\mathcal R(J_\Lambda) - 1}{\sqrt 2 - 1},
	\end{equation}
	that is, \cref{eq:relation_cost_rom}.
\end{proof}

\section{Proofs for CAMPO disentangling}
\label{sec:proofs_campo_disentangling}

This appendix proves \cref{thm:density_operator} and analyses the two distinct mechanisms that make the bond dimension of the inner MPO collapse to $\chi = 1$ in \cref{fig:campo_heatmap}.
We first recall the closed-system construction of Ref.~\cite{fux2025disentangling} in the only form we need (\cref{app:campo_noiseless}): in the frame of the disentangler, each layer is a single-qubit rotation on a fresh qubit, dressed by a stabilizer of the qubits not yet used.
Both effects of a Pauli insertion follow from this (\cref{app:campo_dressing}): it keeps every trajectory a product state in the \emph{same} frame, which proves the theorem, and it flips the unconsumed register, hence the eigenvalues $\epsilon = \pm 1$ of the dressing stabilizers on it, which is the origin of the classical correlations of $\Sigma$.
Aligned dephasing is the special case in which the inserted Pauli \emph{is} the gate axis: the sign flips are then absent and $\Sigma$ is an exact product state, $\chi = 1$ (\cref{app:campo_aligned}).
\cref{app:campo_infinite_temperature} deals with the opposite regime $N_T \gg N$, where $\rho$ has relaxed to $\mathds{1}/2^N$, and estimates the depth at which a partially relaxed $\rho$ is truncated to $\chi = 1$.

\subsection{The noiseless disentangler}
\label{app:campo_noiseless}

Write the noiseless skeleton of \cref{fig:sketch_circuit} and the magic states it produces as
\begin{equation}
	\label{app_eq:skeleton}
	W_s = U_\phi C_s \cdots U_\phi C_1, \qquad \ket{x(\theta)} = e^{i \theta \sy}\ket{0},
\end{equation}
with $U_\phi = e^{i\phi\sz_1}$.
Following Ref.~\cite{fux2025disentangling}, one constructs Clifford operators $\tilde C_s = C_s \tilde C_{s-1} V_s$, with $V_s$ Clifford and $\tilde C_0 = \mathds 1$, together with pairwise distinct \emph{slots} $j_1, \ldots, j_s$, such that in the frame of $\tilde C_s$ the $s$-th layer is a single-qubit rotation dressed by a stabilizer, followed by the residual Clifford $V_s^\dagger$,
\begin{equation}
	\label{app_eq:frame_localization}
	\tilde C_s^\dagger \, U_\phi C_s \, \tilde C_{s-1} = e^{i \phi \, Q_s} \, V_s^\dagger, \quad Q_s = \sy_{j_s} S_s,
\end{equation}
where $S_s$ belongs to the stabilizer group of the qubits that have not been consumed,
\begin{equation}
	\label{app_eq:running_code}
	\mathcal S_s = \big\langle \sz_j \, : \, j \notin \{j_1, \ldots, j_s\} \big\rangle,
\end{equation}
so that $[\sy_{j_s}, S_s] = 0$ and $Q_s^2 = \mathds 1$.
Since $U_\phi = e^{i \phi \sz_1}$ and $C_s \tilde C_{s-1} = \tilde C_s V_s^\dagger$, \cref{app_eq:frame_localization} is equivalent to the statement that $Q_s$ is the image of the gate axis in the frame,
\begin{equation}
	\label{app_eq:axis_image}
	\tilde C_s^\dagger \, \sz_1 \, \tilde C_s = Q_s.
\end{equation}
Three properties of the construction carry the whole generalization to open systems.
\begin{itemize}
	\item [$(i)$] The Cliffords $\tilde C_s$, the slots $j_s$ and the hypothesis under which they exist -- that the pulled-back gate axis $P_s \equiv (C_s \tilde C_{s-1})^\dagger \, \sz_1 \, (C_s \tilde C_{s-1})$ anticommutes with at least one generator of $\mathcal S_{s-1}$, i.e. that $P_s$ lies outside the normalizer $N(\mathcal S_{s-1})$ -- are fixed by the Clifford data $\{C_r\}_{r\leq s}$ and by the axis $\sz_1$ alone.
	      They depend neither on the angle $\phi$, nor on the state, nor on the \emph{signs} of the stabilizers.
	\item [$(ii)$] The residual Cliffords $V_s$ preserve the structure of the frame states: $V_s^\dagger$ maps a product state whose qubits outside the slots $\{j_1, \ldots, j_{s-1}\}$ are in a computational basis state $\ket b$ into another state of the same form, and acts as the identity when $\ket b = \ket 0^{\rm unused}$.
	\item [$(iii)$] For deep random Cliffords the hypothesis holds at every step as long as $N_T \lesssim N$~\cite{fux2025disentangling}.
\end{itemize}
The case $P_s \in \mathcal S_{s-1}$, in which $e^{i\phi P_s}$ acts on the state as a global phase, is harmless: no slot is consumed, $V_s = \mathds 1$, and the construction proceeds with $\tilde C_s = C_s \tilde C_{s-1}$. 
It only delays the breakdown of the recursion, and we disregard it in the following; the remaining case $P_s \in N(\mathcal S_{s-1}) \setminus \mathcal S_{s-1}$ is the genuine breakdown excluded by (iii).
The normal form \cref{app_eq:frame_localization} is fixed only up to $\mathcal S_s$: we adopt here the \emph{minimal} choice, in which $V_s$ strips from the pulled-back axis $P_s$ only what is not already an element of $\mathcal S_s$, whereas Ref.~\cite{fux2025disentangling} strips the Pauli string completely and obtains $S_s = \mathds 1$.
The two conventions differ by Clifford operations that act trivially on the noiseless state and give the same $\rho$; retaining $S_s$ is what makes the trajectory dependence explicit in \cref{app:campo_dressing}.
The closed-system result is \cref{app_eq:frame_localization} evaluated on $\ket{0}^{\otimes N}$: every $S_s$ acts there as the identity and every $V_s^\dagger$ acts as the identity by $(ii)$, so that each layer rotates its own fresh slot, and
\begin{equation}
	\label{app_eq:noiseless_product}
	W_{N_T}\ket{0}^{\otimes N} = \tilde C \qty(\bigotimes_{s=1}^{N_T} \ket{x(\phi)}_{j_s} \otimes \ket{0}^{\rm unused}),
\end{equation}
with $\tilde C = \tilde C_{N_T}$.

\subsection{Pauli dressing and proof of \cref{thm:density_operator}}
\label{app:campo_dressing}

All the Kraus operators of a Pauli channel are proportional to unitaries, so the probability $q_\nu$ of a trajectory $\nu$ -- a choice of a Kraus branch at each of the noise locations -- is a product of the noise probabilities and does not depend on the state.
The density operator is then the classical mixture
\begin{equation}
	\label{app_eq:trajectory_mixture}
	\begin{split}
		 & \rho = \sum_\nu q_\nu \ketbra{\psi_\nu},                                                         \\
		 & \ket{\psi_\nu} \propto E^\nu_{N_T} U_\phi C_{N_T} \cdots E^\nu_1 U_\phi C_1 \ket 0 ^{\otimes N},
	\end{split}
\end{equation}
where $E^\nu_s \in \mathcal P_N$ is the Pauli selected at layer $s$.

\begin{lemma}[Pauli dressing]
	\label{lemma:pauli_dressing}
	Let $N_T \lesssim N$.
	Then, for every trajectory $\nu$,
	\begin{equation}
		\label{app_eq:dressed_trajectory}
		\ket{\psi_\nu} \propto \tilde C \bigotimes_{j=1}^N \ket{y_{\nu, j}},
	\end{equation}
	with $\ket{y_{\nu, j}}$ single-qubit pure states and $\tilde C$ the noiseless disentangler of \cref{app_eq:noiseless_product}, the same for every trajectory.
	The consumed slots carry $\ket{y} \in \{\ket{x(\pm\phi + m \pi/2)}\}$, the unused ones $\ket{y} \in \{\ket 0, \ket 1\}$.
\end{lemma}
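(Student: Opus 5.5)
The plan is an induction on the layer index $s$. The claim to carry along is that, in the frame of $\tilde C_s$, the partial trajectory state is a product state
\begin{equation*}
	\tilde C_s^\dagger \ket{\psi^{(s)}_\nu} \propto \bigotimes_{r\le s}\ket{y_{\nu,j_r}} \otimes \ket{b_\nu}^{\rm unused},
\end{equation*}
where the consumed slots hold states $\ket{x(\pm\phi+m\pi/2)}$ and $b_\nu$ is a computational basis string on the unused qubits. The base case $s=0$ is $\ket 0^{\otimes N}$ with $\tilde C_0=\mathds 1$.

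For the inductive step, apply $U_\phi C_s$ and rewrite it through \cref{app_eq:frame_localization} as $\tilde C_s e^{i\phi Q_s}V_s^\dagger \tilde C_{s-1}^\dagger$. This step relies on property $(i)$: the Cliffords $\tilde C_s$, the slots and the hypothesis $(iii)$ depend only on the Clifford data and the axis. Neither the signs nor the state enter, so the same $\tilde C_s$ serves every trajectory.

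By property $(ii)$, $V_s^\dagger$ maps the product form to another product form. We need it to act only on the unused register and the earlier slots in a way that preserves this form. I would check this from the minimal-choice construction: $V_s$ only permutes and flips $Z$-type data outside the slots.

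Next, $S_s\in\mathcal S_s$ is a product of $\sz_j$ on unused qubits. On $\ket{b}$ it acts as a sign $\epsilon=\pm1$. Hence $e^{i\phi Q_s}=e^{i\phi\sy_{j_s}S_s}$ acts as $e^{i\epsilon\phi\sy_{j_s}}$ on the fresh slot $j_s$, which is still $\ket{0}$ or $\ket 1$. Since $e^{i\epsilon\phi\sy}\ket 1\propto\ket{x(\epsilon\phi+\pi/2)}$, the slot ends up in one of the allowed states.

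The noise is handled next. The Pauli $E^\nu_s$ is conjugated into the frame as $\tilde C_s^\dagger E^\nu_s\tilde C_s$, which is again a Pauli string up to phase. A Pauli acts on each factor of a product state locally. On unused qubits it keeps computational basis states up to phase, since $\sx,\sy$ flip them and $\sz$ gives a phase. On a consumed slot $\ket{x(\theta)}$ it gives the following:
\begin{itemize}
	\item[] $\sy$ gives a phase;
	\item[] $\sz$ sends $\theta\to-\theta$;
	\item[] $\sx$ sends $\theta\to\pi/2-\theta$, up to phase.
\end{itemize}
All three outcomes stay in the set $\{x(\pm\phi+m\pi/2)\}$, provided the orbit of angles under these maps closes. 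I would verify this with the identities $e^{i\theta\sy}$ versus $\sz e^{i\theta\sy}\sz=e^{-i\theta\sy}$. Earlier slots are never touched again by later rotations, because each later rotation acts on a fresh slot $j_{s'}$ and on the unused $\sz$'s only through signs. This closes the induction, and setting $s=N_T$ gives \cref{app_eq:dressed_trajectory}.

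The main obstacle I expect is property $(ii)$. Specifically, I must confirm that the residual $V_s^\dagger$, built for the noiseless signs, still maps the sign-flipped register $\ket{b_\nu}$ to a computational basis string. I also must confirm that it does not entangle an earlier slot holding $\ket{x(\cdot)}$ with the register. If $V_s$ could act nontrivially on consumed slots, I would instead absorb that action into the frame. Concretely, I would show that $V_s$ acts on consumed slots only by single-qubit Cliffords from $\{\mathds 1,\sx,\sy,\sz\}$, which the angle set is closed under.
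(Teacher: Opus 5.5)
Your proposal is correct and follows the paper's proof essentially step for step. It uses the same induction in the frame of $\tilde C_s$, invokes property $(ii)$ for $V_s^\dagger$, reduces $S_s$ to a sign $\epsilon$ on the computational-basis register so that $e^{i\phi Q_s}$ becomes $e^{i\epsilon\phi\sy_{j_s}}$ on the fresh slot, lets the frame-conjugated Pauli insertions act locally, and relies on property $(i)$ for trajectory-independent frames. Your worry that $V_s^\dagger$ must act on consumed slots only through register-controlled Paulis is exactly what the paper folds into property $(ii)$.

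One small slip: $\sy$ does not act as a phase on a consumed slot. In fact $\sy\ket{x(\theta)}\propto\ket{x(\theta+\pi/2)}$ is orthogonal to $\ket{x(\theta)}$, which is the same identity behind your correct $e^{i\epsilon\phi\sy}\ket 1\propto\ket{x(\epsilon\phi+\pi/2)}$. This state still lies in $\{\ket{x(\pm\phi+m\pi/2)}\}$, so the conclusion is unaffected.
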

\begin{proof}
	We show by induction that, in the frame of the current $\tilde C_s$, the state is a product whose unconsumed qubits are in a computational basis state $\ket b$.
	This holds at $s = 0$.
	A layer acts through \cref{app_eq:frame_localization}: first the residual $V_s^\dagger$, which by $(ii)$ preserves the product structure and leaves the unconsumed register in a computational basis state, again denoted $\ket b = \ket{b_{j_s}} \otimes \ket*{\hat b}$, with $\ket*{\hat b}$ its restriction to the qubits that remain unconsumed after the layer; then $e^{i \phi Q_s}$, which on such a state reduces, using $S_s \ket*{\hat b} = \epsilon \ket*{\hat b}$ with $\epsilon = \pm 1$, to the single-qubit rotation
	\begin{equation}
		\label{app_eq:flipped_layer}
		e^{i\phi Q_s} \qty(\ket{b_{j_s}} \otimes \ket*{\hat b})
		= \qty(e^{i \epsilon \phi \sy_{j_s}}\ket{b_{j_s}}) \otimes \ket*{\hat b}
	\end{equation}
	of the fresh slot, and leaves $\ket*{\hat b}$ untouched on the qubits that remain unconsumed.
	An insertion acts through $\tilde C_s^\dagger E^\nu_s \tilde C_s \in \mathcal P_N$, a product of single-qubit Paulis: on the slots it permutes the magic states within the finite orbit
	\begin{equation}
		\label{app_eq:pauli_orbit}
		\begin{split}
			\sz\ket{x(\theta)} & = \ket{x(-\theta)},              \\
			\sy\ket{x(\theta)} & \propto \ket{x(\theta + \pi/2)}, \\
			\sx\ket{x(\theta)} & \propto \ket{x(\pi/2 - \theta)},
		\end{split}
	\end{equation}
	while on the unconsumed register it maps $\ket b$ to another basis state $\ket{b'}$.
	Both operations preserve the product structure and the induction hypothesis, and by $(i)$ the frames $\tilde C_s$ are the ones of the noiseless skeleton.
\end{proof}

\begin{proof}[Proof of \cref{thm:density_operator}]
	Inserting \cref{app_eq:dressed_trajectory} into \cref{app_eq:trajectory_mixture},
	\begin{equation}
		\rho = \tilde C \, \Sigma \, \tilde C^\dagger, \qquad \Sigma = \sum_\nu q_\nu \bigotimes_{j=1}^N \ketbra{y_{\nu, j}},
	\end{equation}
	which is the claimed form with $\sigma^\nu_j = \ketbra{y_{\nu,j}}$.
	The only hypothesis used is $(iii)$, which constrains the Clifford data alone: the threshold is therefore the closed-system one, for Pauli noise of any strength at any position.
	Each term of $\Sigma$ has zero entanglement, and the noise only distributes the weights $q_\nu$ over the products, i.e. it only generates classical correlations.
\end{proof}

The bond dimension that $\Sigma$ costs is the correlation between the labels $y_{\nu, j}$ of different sites, and \cref{app_eq:flipped_layer} shows where it comes from.
An insertion at layer $r$ flips the unconsumed register, $\ket b \to \ket{b'}$, and the flip is transported to every subsequent layer by the residual Cliffords; it changes the sign $\epsilon$ of each later layer whose dressing $S_s$ is supported on an odd number of the flipped qubits, which whenever the residual Cliffords leave the register alone is simply the condition that $S_s$ anticommutes with the insertion.
Either way, a single noise event correlates the angles of all the downstream slots.

\subsection{Aligned dephasing: $\Sigma$ is a product state}
\label{app:campo_aligned}

\begin{proposition}[Product structure for aligned dephasing]
	\label{prop:campo_aligned}
	For the aligned dephasing channel \cref{eq:aligned_dephasing} acting on the gate qubit and $N_T \lesssim N$,
	\begin{equation}
		\label{app_eq:campo_aligned_product}
		\begin{gathered}
			\Sigma = \bigotimes_{s=1}^{N_T} \sigma_{j_s} \otimes \ketbra{0}^{\rm unused},                             \\
			\sigma_{j_s} = (1-p)\ketbra{x(\phi)} + p \, \sy \ketbra{x(\phi)} \sy.
		\end{gathered}
	\end{equation}
	The slot states have spectrum $\{1-p, \, p\}$ and Bloch vector of length $|f| = |1-2p|$, and the inner MPO has $\chi = 1$.
\end{proposition}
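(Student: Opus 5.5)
The approach is to specialize \cref{lemma:pauli_dressing} to the case where the inserted Pauli is the gate axis itself, $E^\nu_s \in \{\mathds 1, \sz_1\}$. The key observation is that the insertion commutes with the gate, so in the disentangler frame it equals the dressed axis. By \cref{app_eq:axis_image}, an insertion at layer $s$ reads
\begin{equation}
	\tilde C_s^\dagger \, \sz_1 \, \tilde C_s = Q_s = \sy_{j_s} S_s .
\end{equation}
This acts as $\sy$ on the freshly consumed slot $j_s$. On the remaining register it acts only through $S_s \in \mathcal S_s$, a product of $\sz$'s on unconsumed qubits.

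The first step is an induction over layers. The claim is that, in the frame of $\tilde C_s$, every trajectory has its unconsumed register exactly in $\ket 0^{\rm unused}$, not merely in some computational basis state $\ket b$. This holds at $s=0$. For the inductive step, property $(ii)$ makes $V_s^\dagger$ act as the identity on a state with $\ket b = \ket 0^{\rm unused}$. Then $S_s\ket 0 = +\ket 0$, so $\epsilon=+1$ in \cref{app_eq:flipped_layer} and the slot receives $\ket{x(\phi)}$. The insertion $Q_s$ then applies $\sy_{j_s}$ to the slot and the eigenvalue $+1$ of $S_s$ to the rest. Hence the register is never flipped, and the sign-flip mechanism that correlates downstream slots is absent.

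For later layers $r>s$, the frame changes by $V_r$ act trivially on the register but could in principle act on already-consumed slots. I would argue from the minimal normal form that $V_r$ leaves the consumed slots untouched, since it only strips non-$\mathcal S_r$ content of $P_r$, which must be checked to be supported off the used slots. Granting this, each trajectory in the final frame is
\begin{equation}
	\bigotimes_s \sy^{k_s}\ket{x(\phi)}_{j_s}\otimes\ket 0^{\rm unused}, \qquad k_s\in\{0,1\}.
\end{equation}
Here the $k_s$ are i.i.d.\ Bernoulli variables with parameter $p$, because the Kraus probabilities are state independent. Summing over trajectories with weights $q_\nu=\prod_s p^{k_s}(1-p)^{1-k_s}$ factorizes $\Sigma$ into the claimed tensor product, with $\sigma_{j_s}=(1-p)\ketbra{x(\phi)}+p\,\sy\ketbra{x(\phi)}\sy$.

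The spectral claims are single-qubit computations. The state $\ket{x(\phi)}=e^{i\phi\sy}\ket0$ has Bloch vector $(\sin2\phi,0,\cos2\phi)$ in the $xz$ plane. Conjugation by $\sy$ negates this vector, since $\sy X\sy=-X$ and $\sy Z\sy=-Z$. The mixture therefore has Bloch vector $(1-2p)$ times the original, of length $|f|$, and eigenvalues $(1\pm|1-2p|)/2 = \{1-p,p\}$. A tensor product of single-site operators is an MPO with $\chi=1$.

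I expect the main obstacle to be justifying that the later residual Cliffords $V_r$ never mix consumed slots or correlate them. That is precisely the point where property $(ii)$ as stated in the paper is only about the register; for this I would lean on the explicit minimal construction of Ref.~\cite{fux2025disentangling}.
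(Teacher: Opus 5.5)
Your proof is correct and follows the paper's argument essentially step by step: the insertion $\sz_1$ appears in the frame as $Q_s=\sy_{j_s}S_s$, the register stays in $\ket{0}^{\rm unused}$ so $\epsilon=+1$ and $S_s$ acts trivially, and independent Bernoulli insertions on distinct slots factorize $\Sigma$, with the spectrum following from $\sy$-conjugation reversing the in-plane Bloch vector (your $x$-component should read $-\sin 2\phi$, which is immaterial). The obstacle you flag is already handled by property $(ii)$ as the paper states it: there $V_r^\dagger$ acts as the identity on the entire frame state, consumed slots included, whenever the register is $\ket{0}^{\rm unused}$, so you need nothing from Ref.~\cite{fux2025disentangling} beyond what the paper already assumes.
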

\begin{proof}
	The Kraus operators of the aligned dephasing are $\mathds 1$ and $\sz_1$, the gate axis itself, so by \cref{app_eq:axis_image} the insertion at layer $s$ acts in the frame as $Q_s = \sy_{j_s} S_s$.
	The unconsumed register is never changed: the layers and the residual Cliffords do not flip it, by \cref{app_eq:flipped_layer} and property $(ii)$, and the insertions act on it through $S_s \in \mathcal S_s$, a product of $\sz_j$ that leaves $\ket{0}^{\rm unused}$ invariant.
	Hence $\epsilon = 1$ at every step, $S_s$ acts as the identity, and the insertion reduces to the single-qubit Pauli $\sy_{j_s}$ on its own slot.
	Different layers thus act on different slots, through independent Bernoulli variables of parameter $p$, and averaging them gives \cref{app_eq:campo_aligned_product}.
	Finally $\ev{\sy}{x(\phi)} = \ev{\sy}{0} = 0$, so the two states in $\sigma_{j_s}$ are orthogonal and the Bloch vector of $\sigma_{j_s}$ is $(1-2p)$ times the one of $\ket{x(\phi)}$.
\end{proof}

Aligned dephasing is thus the extreme case of \cref{thm:density_operator}: the noise is invisible to the disentangler \emph{and} it creates no classical correlations, so that CAMPO reproduces $\rho$ exactly with $\chi = 1$ up to the closed-system threshold, at any $p$.
This is what \cref{fig:campo_heatmap} shows, the boundary of the disentanglable region at $N_T \sim N$ being independent of $p$.
Any other Pauli insertion -- a different axis, or the same axis on a different qubit -- has a frame image that does flip the register, and $\Sigma$ pays bond dimension for the correlated angle flips.

\subsection{Relaxation to $\rho \propto \mathds 1$ and the truncation threshold}
\label{app:campo_infinite_temperature}

The black region of \cref{fig:campo_heatmap} at $N_T \gtrsim N$ has a different origin, which we now quantify.
Expand the density operator on the Pauli basis, $\rho = 2^{-N}\sum_{P \in \mathcal P_N} c_P \, P$, and group the strings in quadruplets $\alpha_1 \otimes Q$ sharing the same tail $Q$, with $\alpha \in \{\mathds 1, \sx, \sy, \sz\}$.
A Clifford permutes the strings, and their coefficients, up to a sign.
The noisy gate acts instead through the PTM \cref{eq:aligned_dephasing_ptm}: it leaves $c_{\mathds 1 \otimes Q}$ and $c_{\sz \otimes Q}$ invariant, while
\begin{equation}
	\label{app_eq:damping}
	\begin{pmatrix} c_{\sx \otimes Q} \\ c_{\sy \otimes Q} \end{pmatrix} \longrightarrow f \, R(2\phi) \begin{pmatrix} c_{\sx \otimes Q} \\ c_{\sy \otimes Q} \end{pmatrix}, \qquad f = 1 - 2p.
\end{equation}
A string is therefore damped by $|f|$ exactly when it carries $\sx$ or $\sy$ on the first qubit.
A global random Clifford maps a non-identity string onto a uniformly random non-identity string, so that this happens with probability
\begin{equation}
	\label{app_eq:half_probability}
	\mathbb P\qty[\alpha \in \{\sx, \sy\}] = \frac{2 \cdot 4^{N-1}}{4^N - 1} \simeq \frac 12,
\end{equation}
independently at each layer.
After $N_T$ layers $|c_P| \simeq |f|^{k}$ with $k \sim \mathrm{Bin}(N_T, 1/2)$, whence
\begin{equation}
	\label{app_eq:mean_damping}
	\mathbb E \qty[\,|c_P|\,] \simeq \qty(\frac{1 + |f|}{2})^{N_T} \xrightarrow[N_T \to \infty]{} 0, \qquad 0 < p < 1.
\end{equation}
Only $c_{\mathds 1} = 1$ survives, and $\rho \to \mathds 1/2^N$: an MPO of bond dimension one.
This is a relaxation and not a disentangling, as it is the state, and not the Clifford operator of the Ansatz, that has become simple.

At finite truncation error $\varepsilon$ the collapse to $\chi = 1$ occurs at finite depth.
Across any bipartition the singular values of the MPO are those of the matrix of coefficients $c_{P_A \otimes P_B}$: the identity carries the leading one, and all the others are of the order of the surviving coefficients.
A string is thus discarded as soon as $|f|^k \leq \varepsilon$, that is, as soon as it has been damped
\begin{equation}
	\label{app_eq:kstar}
	k \geq k_*(p) = \frac{\log \varepsilon}{\log|1-2p|}
\end{equation}
times, which at fixed depth $N_T$ happens with probability
\begin{equation}
	\label{app_eq:truncation_probability}
	\mathbb P(N_T, p) = \sum_{k \geq k_*(p)} \binom{N_T}{k} 2^{-N_T}.
\end{equation}
The typical string is truncated when $k_*(p) \leq N_T/2$, which delimits the region
\begin{equation}
	\label{app_eq:campo_threshold}
	p_-(N_T) \leq p \leq p_+(N_T), \qquad p_\pm(N_T) = \frac{1 \pm \varepsilon^{2/N_T}}{2},
\end{equation}
reported in \cref{fig:campo_heatmap}.
The multiplicity of the $\sim 2^N$ damped strings only renormalizes $\varepsilon$ by a depth-independent factor, and shifts \cref{app_eq:campo_threshold} at subleading order.
The region opens up with the depth, since $\varepsilon^{2/N_T} \to 1$, and closes onto $p = 1/2$ as $\varepsilon \to 0$: it is an artifact of the truncation, and would disappear altogether in an exact simulation, consistently with \cref{fig:campo_scaling}.

\begin{figure}[t!]
	\centering
	\includegraphics{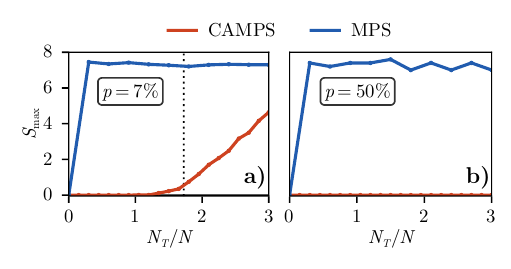}
	\caption{Comparison between a simulation with MPS and one with CAMPS.
		(a) -- Comparison of the maximal entanglement entropy $S_{\max}$ at $p = 7\%$ computed on the inner state of a CAMPS, as well as a plain MPS.
		After the first layer the simple MPS is already at its maximal bond dimension: the Clifford operator scrambles information maximally.
		(b) -- Same comparison as in panel (a), in the classical region now.
		MPS still necessitates large bond dimension.
		CAMPS on the other hand remains at zero entropy throughout the simulation.
	}
	\label{fig:mps_comparison}
\end{figure}

\section{MPS simulations of the noise channels}
\label{sec:mps}

Here we provide numerical evidence that the disentangled regions visible with the optimal unraveling that we discussed throughout the paper do not have a similar origin as the truncation region of \cref{app:campo_infinite_temperature}.
We simulate the same circuits as in \cref{sec:classical_phases}, with a plain MPS Ansatz.
That is, we do not try to disentangle the state with Clifford operators, but we simply compress the MPS by truncating the singular values of the Schmidt decomposition across each bond.
The results are shown in \cref{fig:mps_comparison}.
As expected, the random global Clifford gates generate volume-law entanglement already after the first layer.
This results in the MPS bond dimension $\chi_{\rm MPS}$ and maximal entanglement entropy $S_{\rm max}$ reaching their maximum values, which are limited only by the system size $N$.
On the other hand, the CAMPS Ansatz is able to disentangle the state in those same regions, and the bond dimension $\chi$ remains small.
This confirms that the disentangled regions are not due to the state necessitating only a small bond dimension, but rather due to the ability of the CAMPS Ansatz to effectively disentangle the state.

\bibliography{bibliography}

\end{document}